\newcommand{\nc}{\newcommand}
\nc{\rnc}{\renewcommand}
\nc{\nn}{\nonumber}
\nc{\der}{{\partial}}
\rnc{\Im}{{\rm{Im}\,}}
\rnc{\Re}{{\rm{Re}\,}}
\nc{\db}{\displaybreak[0]\\}
\nc{\bra}{\langle}
\nc{\ket}{\rangle}
\nc{\bs}{\boldsymbol}
\DeclareMathOperator{\Tr}{Tr}
\DeclareMathOperator{\End}{End}
\newtheorem{theorem}{Theorem}[section]
\newtheorem{lemma}[theorem]{Lemma}
\newtheorem{proposition}[theorem]{Proposition}
\newtheorem{corollary}[theorem]{Corollary}
\theoremstyle{definition}
\newtheorem{definition}[theorem]{Definition}
\newtheorem{example}[theorem]{Example}
\numberwithin{equation}{section}
\numberwithin{equation}{section}
\begin{document}%
%
\title{Vertex models, TASEP and Grothendieck Polynomials}

\author{
Kohei Motegi$^1$\thanks{E-mail: motegi@gokutan.c.u-tokyo.ac.jp} \,
and
Kazumitsu Sakai$^2$\thanks{E-mail: sakai@gokutan.c.u-tokyo.ac.jp}
\\\\
$^1${\it Okayama Institute for Quantum Physics, }\\
 {\it Kyoyama 1-9-1, Okayama 700-0015, Japan} \\
\\
$^2${\it Institute of physics, University of Tokyo,} \\ 
{\it Komaba 3-8-1, Meguro-ku, Tokyo 153-8902, Japan}
\\\\
\\
}

\date{May 14, 2013}

\maketitle

\begin{abstract}
We examine the wavefunctions and their scalar products of
a one-parameter family of integrable five vertex models. 
At a special point of the parameter, the model investigated 
is related to an irreversible interacting stochastic particle 
system the so-called totally asymmetric simple exclusion 
process (TASEP). By combining the quantum inverse scattering 
method with a matrix product representation of the wavefunctions, 
the on/off-shell wavefunctions of the five vertex models 
are represented as a certain determinant form.  Up to some normalization 
factors, we find the wavefunctions are given by  Grothendieck polynomials,
which are a one-parameter deformation of Schur polynomials.
Introducing a dual version of the Grothendieck polynomials,
and utilizing the determinant representation for the scalar products
of the wavefunctions, we derive a generalized Cauchy identity
satisfied by the Grothendieck polynomials and their duals. Several
representation theoretical formulae for Grothendieck polynomials are also
presented. As a byproduct, the relaxation dynamics such as Green 
functions for the periodic TASEP are found to be described in 
terms of Grothendieck polynomials.
\end{abstract}

PACS numbers: 02.30.Ik, 02.50.Ey, 03.65.Fd

\section{Introduction}Symmetric polynomials \cite{Mac} are ubiquitous objects in 
mathematics and mathematical physics.
One of the most basic and important symmetric polynomials
is the  Schur polynomials
\begin{align}
s_\lambda(\bs{z})=\frac{\mathrm{det}_N(z_j^{\lambda_k+N-k})}{\prod_{1 \le j < k \le N}(z_j-z_k)},
\label{schur}
\end{align}
where $\bs{z}=\{z_1,\dots,z_N\}$ is a set of variables and
$\lambda=(\lambda_1,\lambda_2,\dots,\lambda_N)$
is a sequence of weakly decreasing
nonnegative integers $\lambda_1 \ge \lambda_2 \ge \dots \ge \lambda_N \ge 0$. 
A sequence $\lambda$ can be represented as a Young diagram
whose $k$th row has $\lambda_k$ boxes.

Schur polynomials appear not only in representation theory
but also in many contexts in mathematical physics,
especially in integrable systems.
For example, the tau functions of the KP hierarchy have
Schur polynomial expansions \cite{DJKM}.
Schur polynomials also appear as
the singular vectors in conformal field theory \cite{Segal,WY},
the Green function of the vicious walkers \cite{GOV,NKT},
the domain wall boundary partition function of the six vertex model \cite{Ok,St},
the Schur processes as one of the most fundamental examples
of determinantal processes \cite{OR}, to list a few.
The relation between integrable models and symmetric polynomials 
can be extended from Schur polynomials
to Jack, Hall-Littlewood, Macdonald polynomials and so on.

In this paper, we develop a novel relation between
integrable models and symmetric polynomials.
We consider a family of integrable five vertex models.
At a special point of the parameter, the model we investigate
is related to an irreversible interacting stochastic particle 
system called the totally asymmetric simple exclusion 
process (TASEP) \cite{De,Sch,Li,Sp,GM,Bo}.
At another point, the vertex models reduce to the four vertex model
describing the one-dimensional quantum Ising model \cite{Bo2}.

We show that up to normalization factors, the wavefunctions of the
five vertex model are given by the Grothendieck polynomials \cite{LS,IN,IS}
\begin{align}
G_\lambda(\bs{z};\beta)=\frac{\mathrm{det}_N(z_j^{\lambda_k+N-k}(1+\beta z_j)^{k-1})}{\prod_{1 \le j < k \le N}(z_j-z_k)}, \label{introductionGR}
\end{align}
which is a one-parameter extension of the Schur polynomials.
The Grothendieck polynomial was originally introduced \cite{LS}
in the context of the intersection between geometry
and representation theory as a $K$-theoretical
extension of the Schubert polynomials, i.e.
as polynomial representatives of Schubert classes in the
Grothendieck ring of the flag manifold. The formal parameter $\beta$
corresponds to the $K$-theoretical extension.
For flag varieties of type $A$,
Schubert polynomials is the Schur polynomials itself,
and it was shown recently \cite{IN,IS} that Grothendieck polynomials
for flag varieties of type $A$
is expressed in the determinant form \eqref{introductionGR}.
We show the equivalence between the wavefunctions
and Grothendieck polynomials
by combining the quantum inverse scattering 
method with a matrix product representation of the wavefunctions.
We find the wavefunctions correspond to the Grothendieck polynomials 
\eqref{introductionGR}, and the  dual wavefunctions
to the following  ``dual" Grothendieck polynomials
\begin{align}
\overline{G}_\lambda(\bs{z};\beta)=
\frac{\mathrm{det}_N(z_j^{\lambda_k+N-k}(1+\beta z_j^{-1})^{1-k})}
       {\prod_{1 \le j < k \le N}(z_j-z_k)}
\label{introductiondualGR}.
\end{align}

From this relation between integrable models and symmetric polynomials,
we note that studying integrable five vertex models lead us to
representation theoretical results of the Grothendieck polynomials.
We present several results for the Grothendieck polynomials
in this way, i.e. by studying the integrable five vertex models.
We find the following Cauchy identity for the
Grothendieck and dual Grothendieck polynomials
\begin{align}
&\sum_{\lambda \subseteq
(M-N)^N}G_\lambda(\bs{z};\beta)
\overline{G}_\lambda(\bs{y};\beta)
\nonumber \\
&\qquad =\prod_{1 \le j < k \le N}\frac{1}{(z_j-z_k)(y_j-y_k)}
\mathrm{det}_N \left[ \frac{(z_j y_k)^{M}-
\left\{(1+\beta z_j)/(1+\beta y_k^{-1})\right\}^{N-1}}
{z_j y_k-1} \right],
\label{introductiongeneralizedcauchy}
\end{align}
which generalizes the one for Schur polynomials.
We show this identity
by evaluating the scalar products
of the wavefunctions in two ways.
We can evaluate the determinant representation
of the scalar products directly in a way by use of the recursive relations
called the Izergin-Korepin approach.
The scalar products can also be evaluated
by inserting the completeness relation and 
the determinant forms of the wavefunctions and dual wavefunctions. 
The two ways of the evaluation of the scalar products
turns out to yield the Cauchy identity for Grothendieck polynomials 
\eqref{introductiongeneralizedcauchy}.
In short, we find a generalization of the celebrated Cauchy identity
by analyzing a family of integrable five vertex models
with the recently developed techniques to analyze integrable models 
(see \cite{Ko,Iz} for the six vertex model,
\cite{KBI,Sl,KMST,Wh,KM} for the XXZ chain, and
\cite{GMmat} for the totally asymmetric simple exclusion process).
As a special case of the Cauchy identity,
we also obtain the summation formula for the Grothendieck polynomials.
There are several results on the generalizations
of the Cauchy identity for symmetric polynomials related
to geometry in the past (see \cite{KM,Ki} for example).
The one for the Grothendieck polynomials presented
in this paper has advantages in that the connection with
the Schur polynomials is explicit and easily understandable,
which seems not to be known before.

As a byproduct of the determinant forms of the wavefunctions,
we formulate the exact relaxation dynamics of the periodic TASEP
for arbitrary initial condition,
generalizing the case for the step and alternating initial conditions
\cite{MSS,MSS2}.

This paper is organized as follows.
In the next section, we introduce a one-parameter family
of integrable five vertex models by solving the $RLL$-relation,
a version of the Yang-Baxter relation which guarantees the
integrability of the model.
In section 3, we evaluate the scalar products by the
Izergin-Korepin approach to find its determinant form.
In section 4,
by combining the quantum inverse scattering 
method with a matrix product representation of the wavefunctions,
the determinant representation of the wavefunctions is obtained.
In section 5, we establish the relation between the wavefunctions
and the Grothendieck polynomials.
Combining the results in section 3 and 4,
we derive the Cauchy identity and the summation formula
for Grothendieck polynomials.
We give a formulation of the exact relaxation dynamics
of the periodic TASEP for arbitrary initial condition in section 6.
Section 7 is devoted to the conclusion of this paper.
%
\section{One-parameter family of five vertex models}\label{FV}
%
A key ingredient in constructing quantum integrable models is to find a 
solution of the relation ($RLL$-relation)
\begin{align}
R_{\mu \nu}(u,v)L_{\mu j}(u)L_{\nu j}(v)=
L_{\nu j}(v)L_{\mu j}(u)R_{\mu \nu}(u,v)
\label{RLL}
\end{align}
holding in $\End(W_\mu \otimes W_\nu \otimes V_j)$ for arbitrary
$u,v \in \mathbb{C}$.
Here the matrix $R_{\mu \nu}(u,v)\in \End(W_\mu \otimes W_\nu)$ 
satisfies the Yang-Baxter equation
\begin{align}
R_{\mu \nu}(u,v)R_{\mu  \gamma}(u,w)R_{\nu \gamma}(v,w)=
R_{\nu \gamma}(v,w)R_{\mu  \gamma}(u,w)R_{\mu \nu}(u,v)
\label{YBE}
\end{align}
and $L_{\mu j}(u)$ is an operator acting on the space  $W_\mu \otimes V_j$. 
By convention we call $W$ and $V$ the auxiliary space and the quantum space, 
respectively.

In the following, we shall take both of the spaces $W$ and $V$  to be the 
two-dimensional vector spaces $W=V=\mathbb{C}^2$ spanned by the 
``empty state"  $|0\ket=\binom{1}{0}$ and the 
``particle occupied state" $|1\ket=\binom{0}{1}$. 
(Note that $W_\mu$ (resp. $V_j$) denotes a copy of $\mathbb{C}^2$ 
spanned by the $\mu$th (resp. $j$th) states $|0\ket_\mu$ and 
$|1\ket_\mu$ (resp. $|0\ket_j$ and $|1\ket_j$)). One solution to the 
Yang-Baxter equation \eqref{YBE} is the following $R$-matrix 
whose elements are the Boltzmann weights associated with a five
vertex model:
\begin{align}
R(u,v)
=
\begin{pmatrix}
f(v,u) & 0 & 0 & 0 \\
0 & 0 & g(v,u) & 0 \\
0 & g(v,u) & 1 & 0 \\
0 & 0 & 0 & f(v,u)
\end{pmatrix}
\label{Rmatrix}
\end{align}
with
\begin{align}
f(v,u)=\frac{u^2}{u^2-v^2},  \quad 
g(v,u)=\frac{uv}{u^2-v^2}. 
\end{align}
As a solution of the $RLL$-relation \eqref{RLL} with the $R$-matrix 
\eqref{Rmatrix}, we find the following $L$-operator 
$L(u)\in\End(\mathbb{C}^2\otimes\mathbb{C}^2)$ 
(see Appendix A for the detailed derivation):
\begin{align}
L_{\mu j}(u)=u s_\mu s_j+\sigma_\mu^- \sigma_j^+
                   +\sigma_\mu^+ \sigma_j^-+(\alpha u-u^{-1})n_\mu s_j
                   +\alpha u n_\mu n_j,
\label{loperator}
\end{align}
where $\sigma^{\pm}, \sigma^{z}$ are the spin-1/2 Pauli matrices; 
$s=(1+\sigma^z)/2$ and $n=(1-\sigma^z)/2$ are the projection 
operators onto the states $|0\ket$ and $|1\ket$, respectively.
Note that the operators with subscript $\mu$ (resp. $j$)  act on the auxiliary 
(resp. quantum) space $W_\mu$ (resp. $V_j$). 
See also Figure \ref{weight} for a pictorial description of the $L$-operator 
\eqref{loperator}, which allows for an intuitive understanding of the
subsequent calculations.

The parameter $\alpha$ can be taken arbitrary\footnote{ 
Note that the parameter $\alpha$ is different from the parameter $q$ of
 a quantum group $U_q(sl_2)$.}.
In fact, the models at special points of $\alpha$
are related to some physically interesting models. To see
this, let us construct the monodromy matrix $T(u)$
by a product of $L$-operators:
\begin{align}
T_{\mu}(u)=\prod_{i=1}^M L_{\mu j}(u)
\label{monodromy1}
\end{align}
which acts on $W_\mu \otimes (V_1\otimes\dots\otimes V_M)$. Tracing
out the auxiliary space, one defines the transfer matrix 
$\tau(u)\in \End (V^{\otimes M})$:
\begin{align}
\tau (u)=\Tr_{W_{\mu}} T_{\mu}(u). 
\label{TM}
\end{align}
Thanks to the $RLL$-relation, the transfer matrix $\tau(u)$ mutually 
commutes, i.e.
\begin{align}
[\tau(u),\tau(v)]=0.
\label{commutative}
\end{align}
After taking the logarithmic derivative of the transfer matrix with respect 
to the spectral parameter, one obtains the quantum Hamiltonian
which is, in general, non-Hermitian
\begin{align}
\mathcal{H}:=
\sum_{j=1}^M 
\left\{
\alpha \sigma_j^+\sigma_{j+1}^-+\frac{1}{4}
(\sigma_j^z\sigma_{j+1}^z-1)\right\}
=
\left.
\frac{1}{2\sqrt{\alpha}}\frac{\der}{\der u} 
\log\left\{ \left(\sqrt{\alpha} u\right)^{-M} \tau(u)\right\}
\right|_{u=\frac{1}{\sqrt{\alpha}}}.
\label{Baxter}
\end{align}
At $\alpha=1$, the $L$-operator is essentially the same with the $R$-matrix
\eqref{Rmatrix}. In this case, the quantum Hamiltonian 
$\mathcal{H}$ \eqref{Baxter} can be interpreted as 
a stochastic matrix describing an irreversible interacting stochastic 
particle system called the totally asymmetric simple exclusion process 
(TASEP) \cite{Bo,GMmat} (see section~\ref{TASEP}, for details).
On the other hand, in the limit $\alpha\to \infty$,
this model is related to  an irreversible
noncollisional diffusion process (i.e. a
vicious random walker model): 
$(\mathcal{H}/\alpha-1)^{\otimes N}|_{\alpha\to\infty}$
is nothing but a transition matrix describing the process of 
$N$ vicious random walkers.
Finally, the $L$-operator at  $\alpha=0$  
reduces to the four vertex model \cite{Bo2}, and through the
relation \eqref{Baxter} it is
related to the well-known Ising model. 

The quantum integrability of the model \eqref{Baxter} is
easily understood by the commutativity of the transfer 
matrix \eqref{commutative} and its Hamiltonian limit \eqref{Baxter}:
the transfer matrix $\tau(u)$ is just a generator of  nontrivial
conserved quantities. 

In the next section, by means of the quantum inverse scattering
method, we construct  state vectors of the Hamiltonian 
\eqref{Baxter} (or equivalently, of the transfer matrix \eqref{TM}),
and calculate their scalar products.

\begin{figure}[tt]
\begin{center}
\includegraphics[width=0.75\textwidth]{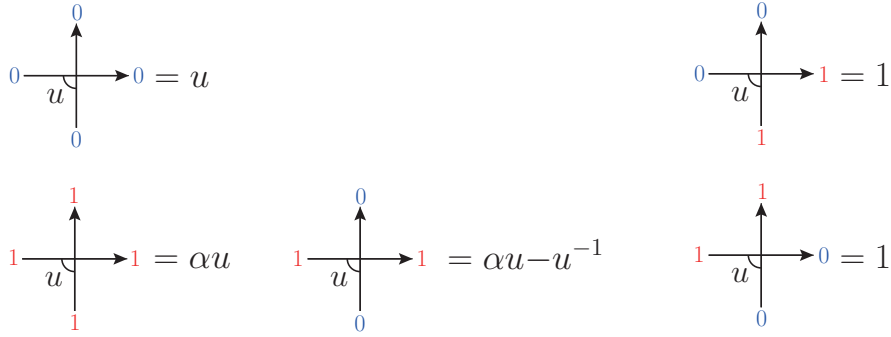}
\end{center}
\caption{The non-zero elements of the $L$-operator of the 
one-parameter family of five vertex models \eqref{loperator}.
The $L$-operator is pictorially represented as two crossing arrows.
The left (resp. up) arrow represents an auxiliary space 
(resp. a quantum space).
The indices 0 or 1 on the left (resp. right) of the vertices
denote the input (resp. output) states in the auxiliary space,
while those on the bottom (resp.  top) denote the 
input (resp. output) states in the quantum space.
}
\label{weight}
\end{figure}

\section{Scalar Products of state vectors}\label{scalar}
%
Here we construct a state vector of the integrable 
models defined in the preceding section by using
the quantum inverse scattering method (i.e the algebraic 
Bethe ansatz). The resultant
$N$-particle state $|\psi(\{u\}_N)\ket$ is 
characterized by $N$ unknown numbers $u_j\in \mathbb{C}$ 
($1\le j\le N$), which becomes an eigenstate of \eqref{Baxter} 
(or \eqref{TM}) if we choose the parameters $\{u\}_N$ as 
an arbitrary set of solutions of certain algebraic equation (i.e. 
the Bethe ansatz equation, see \eqref{BAE}). Hereafter we call the 
eigenstates the on-shell states, while we call the states with
arbitrary complex values of $\{u \}_N$ the off-shell states. In this
section, we construct the arbitrary off-shell states and show that 
their scalar products can be expressed as a determinant form.

First let us consider the monodromy matrix:
\begin{align}
T_{\mu}(u,\{w\})=\prod_{j=1}^M L_{\mu j}(u/w_j)
&=
\begin{pmatrix}
A(u,\{w\}) & B(u,\{w\})  \\
C(u,\{w\}) & D(u,\{w\})
\end{pmatrix}_{\mu}. 
\label{monodromy}
\end{align}
Here, for later convenience, we introduced the inhomogeneous parameters 
$w_1,\dots,w_M \in \mathbb{C}$. Taking the homogeneous limit 
$w_j\to 1$ ($1\le j\le M$), \eqref{monodromy1} is recovered:
\begin{align}
T(u,\{w\})|_{w_1=1,\dots,w_M=1}=T(u).
\label{homogeneous}
\end{align}
As in the above equation, hereafter we will omit $\{w\}$ for 
the quantities in the homogeneous limit (e.g. 
$A(u):=A(u,\{w\})|_{w_1=1,\dots,w_M=1}$).
The four elements of the monodromy matrix $A(u,\{w\})$, etc. are
the operators acting on the quantum space $V_1\otimes \dots \otimes V_M$.
The diagrammatic representations of these four elements are 
given by Figure.~\ref{abcdpic}.
\begin{figure}[ttt]
\begin{center}
\includegraphics[width=0.9\textwidth]{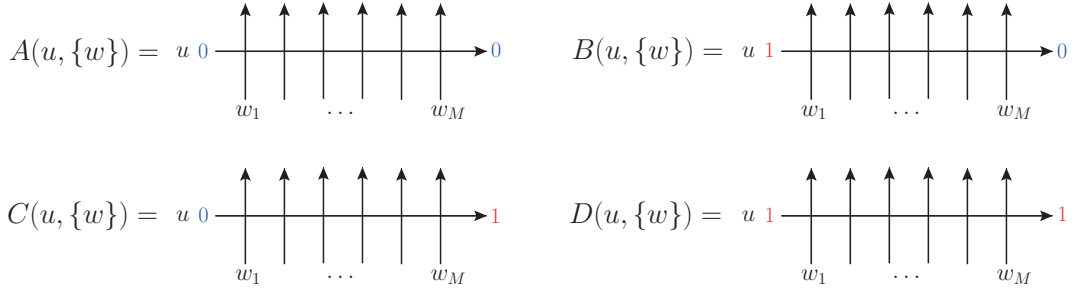}
\end{center}
\caption{The diagrammatic representation of 
the elements of the monodromy matrix \eqref{monodromy}
with the inhomogeneous parameters $w_1,\dots,w_M$.}
\label{abcdpic}
\end{figure}

Applying the $RLL$-relation \eqref{RLL} repeatedly,
the following intertwining relation
\begin{align}
R_{\mu\nu}(u,v)T_{\mu}(u,\{w\})T_{\nu}(v,\{w\})=
T_{\nu}(v,\{w\})T_{\mu}(u,\{w\})R_{\mu \nu}(u,v) \label{RTT}
\end{align}
follows. The relations listed below are obtained by the above equation,
which play a key role in the following calculations:
\begin{align}
&C(u,\{w\})B(v,\{w\})=g(u,v) 
\left[ A(u,\{w\})D(v,\{w\})-A(v,\{w\})D(u,\{w\}) \right], \nn  \\
&A(u,\{w\})B(v,\{w\})=f(u,v) B(v,\{w\}) A(u,\{w\})+
g(v,u)B(u,\{w\})A(v,\{w\}), \nn \\
&D(u,\{w\})B(v,\{w\})=f(v,u) B(v,\{w\}) D(u,\{w\})+
g(u,v)B(u,\{w\})D(v,\{w\}), \nn \\
&\left[B(u,\{w\}),B(v,\{w\})\right]=\left[C(u,\{w\}),C(v,\{w\})\right]=0.
\label{commutation}
\end{align}
The transfer matrix $\tau(u,\{w\})$ is then expressed as elements
of the monodromy matrix:
\begin{align}
\tau(u,\{w\})=\Tr_{W_\mu} T_{\mu}(u,\{w\})=A(u,\{w\})+D(u,\{w\}).
\label{transfer}
\end{align}

The arbitrary $N$-particle state $|\psi(\{u \}_N,\{w\}) \ket$
(resp. its dual $\langle \psi(\{u \}_N,\{w\})|$) 
(not normalized) with $N$ spectral parameters
$\{ u \}_N=\{ u_1,u_2,\dots,u_N \}$
is constructed by a multiple action
of $B$ (resp. $C$) operator on the vacuum state 
$|\Omega \rangle:=| 0^{M} \rangle:=|0\ket_1\
\otimes \dots \otimes |0\ket_M$
(resp. $\langle \Omega|:=\langle 0^{M}|:=
{}_1\bra 0|\otimes\dots \otimes{}_M\bra 0|$):
\begin{align}
|\psi(\{u \}_N,\{w\}) \rangle=\prod_{j=1}^N B(u_j,\{w\})| \Omega \rangle,
\quad
\langle \psi(\{u \}_N,\{w\})|=\langle \Omega| \prod_{j=1}^N
C(u_j,\{w\}).
\label{statevector}
\end{align}
Due to the commutativity of the operators $B$ or $C$ \eqref{commutation},
the states defined above (and also their scalar products) do not depend on the 
order of the product of $B$ or $C$.

By the standard procedure of the algebraic Bethe ansatz,
we have the followings.
\begin{proposition}\label{eigenstate}
The $N$-particle state $|\psi(\{u \}_N,\{w\}) \ket$
and its dual $\langle \psi(\{u \}_N,\{w\})$ become
an eigenstate (on-shell states) of the transfer matrix 
\eqref{transfer} when the set of parameters $\{u\}_N$
satisfies the Bethe ansatz equation:
\begin{align}
\frac{a(u_j,\{w\})}{d(u_j,\{w\})}=-\prod_{k=1}^N\frac{f(u_k,u_j)}{f(u_j,u_k)}
\label{BAE},
\end{align}
where
\begin{align}
a(u,\{w\})=\prod_{j=1}^M \frac{u}{w_j}, \quad
d(u,\{w\})=\prod_{j=1}^M \left( \frac{\alpha u}{w_j}-\frac{w_j}{u} \right).
\end{align}
Then the eigenvalue of the transfer matrix is given by
\begin{align}
\tau(u,\{w\})=a(u,\{w\})\prod_{j=1}^N f(u,u_j)
             +d(u,\{w\})\prod_{j=1}^N f(u_j,u).
\label{EV}
\end{align}
\end{proposition}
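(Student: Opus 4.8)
The plan is to follow the standard algebraic Bethe ansatz procedure, which reduces the entire statement to the commutation relations \eqref{commutation} already derived from the intertwining relation \eqref{RTT}. The strategy is to apply the transfer matrix $\tau(u,\{w\})=A(u,\{w\})+D(u,\{w\})$ to the state $|\psi(\{u\}_N,\{w\})\rangle=\prod_{j=1}^N B(u_j,\{w\})|\Omega\rangle$ and commute $A$ and $D$ successively past each $B$-operator until they reach the vacuum, where their action is diagonal. The key inputs are the second and third relations in \eqref{commutation}, which govern how $A$ and $D$ move through a $B$, together with the eigenvalue data on the vacuum. First I would establish that $A(u,\{w\})|\Omega\rangle=a(u,\{w\})|\Omega\rangle$ and $D(u,\{w\})|\Omega\rangle=d(u,\{w\})|\Omega\rangle$ by direct inspection of the $L$-operator \eqref{loperator}: since $|\Omega\rangle=|0^M\rangle$ and $B$ raises particle number while $C$ lowers it, acting with the upper-triangular part of the monodromy matrix on the all-empty state only picks up the diagonal Boltzmann weights, giving $a(u,\{w\})=\prod_{j=1}^M u/w_j$ and $d(u,\{w\})=\prod_{j=1}^M(\alpha u/w_j - w_j/u)$.

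Next I would carry out the commutation inductively. Each relation of the form $A(u)B(v)=f(u,v)B(v)A(u)+g(v,u)B(u)A(v)$ splits the action into a \emph{wanted} term, in which $A(u)$ passes through with coefficient $f$ and ultimately hits the vacuum to give its eigenvalue, and \emph{unwanted} terms in which the spectral parameter $u$ gets exchanged with one of the $u_k$. Commuting $A(u,\{w\})$ through all $N$ factors of $B$ produces one principal term $a(u,\{w\})\prod_{j=1}^N f(u,u_j)\,|\psi\rangle$ plus a sum of unwanted terms, and likewise for $D(u,\{w\})$ using the third relation in \eqref{commutation} with $a\to d$ and the appropriate $f,g$ arguments. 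The principal terms assemble precisely into the claimed eigenvalue \eqref{EV}.

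The main obstacle — and the crux of the argument — is showing that the unwanted terms cancel. Here I would use the commutativity $[B(u),B(v)]=0$ to argue that the total coefficient of each unwanted state $B(u)\prod_{k\neq m}B(u_k)|\Omega\rangle$ (in which $u_m$ has been replaced by the ``free'' variable $u$) receives contributions from both the $A$-channel and the $D$-channel; collecting these coefficients and setting them to zero yields exactly the Bethe ansatz equation \eqref{BAE}. Concretely, the residue-type cancellation condition at $u_m$ reads
\begin{align}
a(u_m,\{w\})\prod_{k\neq m}f(u_m,u_k)
+d(u_m,\{w\})\prod_{k\neq m}f(u_k,u_m)=0,
\nonumber
\end{align}
which, after dividing through and rearranging the $f$-factors, is precisely \eqref{BAE}. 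Because the commutation relations are symmetric in the two channels, it suffices to verify this cancellation for a single distinguished parameter $u_m$; the commutativity of the $B$-operators then guarantees it for all $m$ simultaneously. Finally, the statement for the dual state $\langle\psi(\{u\}_N,\{w\})|$ follows by the identical computation with $C$ in place of $B$ and the last relation in \eqref{commutation} supplying $[C(u),C(v)]=0$, so no separate argument is needed.
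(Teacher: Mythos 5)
Your route is exactly the one the paper takes: the paper gives no proof of this proposition beyond the phrase ``by the standard procedure of the algebraic Bethe ansatz,'' and your proposal is precisely that standard procedure --- vacuum eigenvalues $a,d$ read off from the triangular action of the $L$-operator on $|0\ket_j$, wanted terms assembling into \eqref{EV}, and cancellation of unwanted terms producing \eqref{BAE}. So the architecture is right. There is, however, one concrete error in the crux step.

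Your displayed cancellation condition has the wrong relative sign, and taken literally it yields Bethe equations differing from \eqref{BAE} by a sign. From the second and third relations in \eqref{commutation}, the coefficient of the unwanted state $B(u)\prod_{k\neq m}B(u_k)|\Omega\ket$ is
\begin{align}
g(u_m,u)\,a(u_m,\{w\})\prod_{k\neq m}f(u_m,u_k)
+g(u,u_m)\,d(u_m,\{w\})\prod_{k\neq m}f(u_k,u_m),
\nonumber
\end{align}
and because $g(v,u)=uv/(u^2-v^2)$ is antisymmetric, $g(u,u_m)=-g(u_m,u)$, so the vanishing condition is the \emph{difference}
$a(u_m)\prod_{k\neq m}f(u_m,u_k)-d(u_m)\prod_{k\neq m}f(u_k,u_m)=0$, not the sum you wrote. (A quick independent check: this is also exactly the condition that the two terms of \eqref{EV} have cancelling residues at $u^2=u_m^2$, so that the eigenvalue is pole-free at the Bethe roots; your ``$+$'' version would leave those poles in place.) The correct condition reads $a(u_m)/d(u_m)=\prod_{k\neq m}f(u_k,u_m)/f(u_m,u_k)$, and the passage to the paper's form \eqref{BAE} --- with the unrestricted product and the overall minus sign --- is not a mere ``rearranging of $f$-factors'': it uses the model-specific fact that $f(u_k,u_j)/f(u_j,u_k)=-u_j^2/u_k^2$, whose $k=j$ value is $-1$, so that $\prod_{k\neq j}=-\prod_{k=1}^{N}$. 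With your ``$+$'' sign you would instead land on $a/d=+\prod_{k=1}^N f(u_k,u_j)/f(u_j,u_k)$, i.e.\ \eqref{BAE} with the wrong sign. This subtlety is invisible for the XXZ chain, where the diagonal factor is $+1$, which is likely the source of the slip; for this five-vertex $R$-matrix it matters. The rest of the argument (single distinguished $u_m$ via commutativity of the $B$'s, and the dual statement via the $C$-operators) is fine.
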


The scalar product between the arbitrary off-shell state vectors, 
which is mainly considered in this section, is defined as
\begin{align}
\langle \psi(\{ u \}_N,\{w\})| \psi(\{ v \}_N,\{w\}) \rangle
=\langle \Omega| \prod_{j=1}^N C(u_j,\{w\})
\prod_{k=1}^N B(v_k,\{w\})| \Omega \rangle
\label{SP}
\end{align}
with $u_j, v_k\in \mathbb{C}$.
In the homogeneous limit $w_j\to 1$ ($1\le j \le N$), the following theorem is 
known \cite{Bo,KBI,Ta}.
\begin{theorem}{\label{scalarthm}}
The scalar product \eqref{SP} in the homogeneous limit 
$w_j\to1$ ($1\le j \le N$) is given by a determinant form:
\begin{align}
\bra \psi(\{ u \}_N)| \psi(\{ v \}_N) \ket
=
\prod_{1 \le j <k \le N} \frac{1}{(u_j^2-u_k^2)(v_k^2-v_j^2)}
\mathrm{det}_N Q(\{ u \}_N|\{ v \}_N),
\label{generalscalar}
\end{align}
where $\{u\}_N$ and $\{v\}_N$ are arbitrary sets
of complex values (i.e. off-shell conditions), and
$Q$ 
is an $N \times N$ matrix with matrix elements
\begin{align}
Q(\{ u \}_N|\{ v \}_N)_{jk}=
\frac{ a(u_j) d(v_k)v_k^{2(N-1)}
           -a(v_k) d(u_j)u_j^{2(N-1)}
         }
        {v_k/u_j-u_j/v_k}.
\label{element}
\end{align}
\end{theorem}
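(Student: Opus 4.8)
The plan is to compute the scalar product $\langle \Omega| \prod_{j=1}^N C(u_j) \prod_{k=1}^N B(v_k)| \Omega \rangle$ by the Izergin--Korepin recursive method, exploiting the structure of the commutation relations \eqref{commutation} together with the action of the monodromy entries on the vacuum. First I would record the vacuum eigenvalue relations $A(u)|\Omega\rangle = a(u)|\Omega\rangle$ and $D(u)|\Omega\rangle = d(u)|\Omega\rangle$ (with $\langle\Omega|$ a left eigenvector of $A,D$ as well), which follow from the triangular structure of $L_{\mu j}(u)$ acting on $|0\rangle$. These, combined with $C|\Omega\rangle = 0$ and $\langle\Omega|B = 0$, are the boundary data that anchor the recursion.

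The core of the argument is to establish a set of characterizing properties of the scalar product $S_N(\{u\}_N|\{v\}_N) := \langle \psi(\{u\}_N)|\psi(\{v\}_N)\rangle$ viewed as a function of the variables, and then to verify that the proposed determinant in \eqref{generalscalar}--\eqref{element} satisfies the same properties. The standard Izergin--Korepin package consists of: (i) symmetry separately in $\{u\}_N$ and in $\{v\}_N$, which is immediate from the commutativity $[B(u),B(v)]=[C(u),C(v)]=0$ in \eqref{commutation}; (ii) the correct functional/analytic dependence on a single variable, say $v_N$, obtained by commuting $B(v_N)$ through to the vacuum using the $A$--$B$ and $D$--$B$ relations in \eqref{commutation} and the vacuum eigenvalues; and (iii) a recursion expressing $S_N$ in terms of $S_{N-1}$ at a special value, which comes from isolating the pole of $C(u)B(v)$ in the first relation of \eqref{commutation}, namely $C(u)B(v) = g(u,v)[A(u)D(v) - A(v)D(u)]$, so that the residue at the coincidence $u_j = v_k$ reduces the number of particles by one. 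The initial condition $N=1$, $S_1(u|v) = \langle\Omega|C(u)B(v)|\Omega\rangle = g(u,v)[a(u)d(v) - a(v)d(u)]$, is then checked directly against \eqref{element}.

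With these recursive relations in hand, I would then verify that the right-hand side of \eqref{generalscalar} obeys exactly the same three properties. The symmetry in $\{v\}_N$ and in $\{u\}_N$ of the determinant-over-Vandermonde expression is the familiar statement that antisymmetrizing the matrix columns (rows) against the Vandermonde prefactor produces a symmetric function; the single-variable dependence and the residue recursion are matched by expanding the determinant $\mathrm{det}_N Q$ along the relevant row or column and tracking how the prefactor $\prod (u_j^2 - u_k^2)(v_k^2 - v_j^2)$ transforms when one variable is removed. Since a function on the appropriate space is uniquely determined by properties (i)--(iii) together with the $N=1$ seed, matching them forces equality.

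The main obstacle I anticipate is step (iii): cleanly carrying out the induction on $N$ for the operator side. Commuting a single $B(v_N)$ all the way to the left past the string of $C(u_j)$'s generates a proliferation of terms through repeated use of the $A$--$B$ and $D$--$B$ relations, and organizing these so that the ``unwanted'' terms cancel or recombine into the lower-order scalar product $S_{N-1}$ is the combinatorially delicate part. In practice one either tracks only the pole structure (extracting residues at $u_j = v_k$ and invoking the $C$--$B$ relation to drop one pair) while arguing that the remaining analytic data is fixed by the degree/symmetry constraints, or one performs the full commutation and verifies the cancellation of non-residue contributions. Matching this against the determinant side requires the corresponding cofactor expansion, and ensuring the two recursions agree term by term — including the precise form of the factor $v_k/u_j - u_j/v_k$ in the denominator of \eqref{element} — is where the bookkeeping must be done carefully.
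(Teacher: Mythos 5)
Your overall framework (characterize the scalar product by a list of properties, then verify that the determinant satisfies them) is the right one, and your $N=1$ seed $S_1(u|v)=g(u,v)[a(u)d(v)-a(v)d(u)]$ does match \eqref{element}; but the recursion you propose in step (iii) does not close, and this is a genuine gap rather than bookkeeping. The quantity $\prod_j u_j^{M-1}\,\bra\psi(\{u\}_N)|\psi(\{v\}_N)\ket$ is a polynomial of degree $M-1$ in $u_N^2$ (the vacuum eigenvalues $a(u)=u^M$ and $d(u)=(\alpha u-u^{-1})^M$ carry degree of order $M$), so pinning it down by interpolation requires its value at $M$ points of $u_N^2$. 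The coincidences $u_N=\pm v_k$ supply at most $N$ such points, and in general $N<M$. Worse, the apparent pole of $g(u,v)$ at $u=v$ cancels in the full scalar product (as it must, and as is visible in \eqref{element}, where the numerator vanishes with the denominator), so at $u_N=v_N$ one gets a derivative-type limit producing Gaudin-like terms rather than a clean reduction to $S_{N-1}$. This counting obstruction is exactly why the off-shell/off-shell scalar product of the generic six-vertex model is a sum over partitions and not a single determinant: your properties (i)--(iii) together with the $N=1$ seed are satisfied by more than one function, so the uniqueness step fails.

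The paper circumvents this by keeping the inhomogeneities $w_1,\dots,w_M$ alive and recursing not on the particle number $N$ but on the number $n$ of $C$-operators, via the ``intermediate scalar products'' $S(\{u\}_n|\{v\}_N|\{w\})=\bra 0^{M-N+n}1^{N-n}|\prod_{j=1}^{n}C(u_j)\prod_{k=1}^{N}B(v_k)|\Omega\ket$, which interpolate between the scalar product ($n=N$) and the domain wall boundary partition function ($n=0$). Setting $u_n=\pm\alpha^{-1/2}w_{M-N+n}$ freezes the top row of the lattice and reduces $n$ by one; combined with the symmetry in $w_1,\dots,w_{M-N+n}$ this furnishes exactly $M-N+n$ evaluation points in $u_n^2$ for a polynomial of degree $M-N+n-1$, which is precisely enough for uniqueness, and the recursion bottoms out at the explicitly computable frozen configuration $n=0$ rather than at $N=1$. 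If you want to salvage a homogeneous, $N$-recursive argument you would need a much stronger structural input in your step (ii) --- for instance that $S_N$, as a function of $v_N$, is a combination $a(v_N)P+d(v_N)R$ with $P,R$ of degree at most $N-1$ in $v_N^2$ --- and establishing that for arbitrary off-shell states is essentially as hard as the theorem itself.
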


Here we will show the above determinant formula by
utilizing a method recently developed  by Wheeler
in the calculation of the scalar product of the
spin-1/2 XXZ chain \cite{Wh}. This technique is based on the Izergin-Korepin 
procedure \cite{Ko,Iz}, which is originally a method to calculate the domain 
wall boundary partition function of the six vertex model \cite{Ko,Iz}.  
In contrast to the spin-1/2 XXZ chain, 
in our case there is no need to impose the Bethe ansatz equation
(i.e. on-shell condition) to show the determinant formula. 
In other words, the determinant formula \eqref{generalscalar} is valid for arbitrary 
off-shell states.

What plays a fundamental role in this method is the
following intermediate scalar products (see also Figure \ref{intermediatepic}
for a diagrammatic representation)
\begin{align}
S(\{ u \}_n | \{ v \}_N| \{ w \})=
\langle 
 0^{M-N+n} 1^{N-n} | \prod_{j=1}^n C(u_j,\{w\})
                                  \prod_{k=1}^N B(v_k,\{w\}) 
 |\Omega 
\rangle. 
\label{intermediatedef}
\end{align}
The term ``intermediate" stems from the fact that \eqref{intermediatedef}
interpolates the scalar product ($n=N$) and the domain wall boundary 
partition function ($n=0$).
\begin{figure}[tt]
\begin{center}
\includegraphics[width=0.6\textwidth]{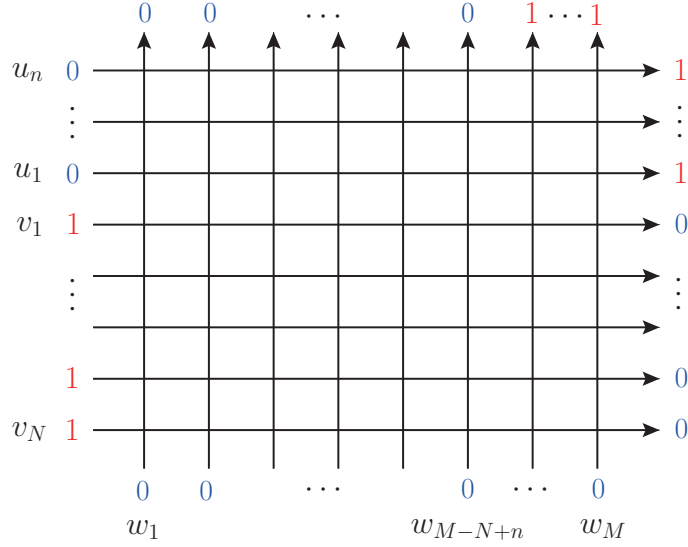}
\end{center}
\caption{The graphical representation of the intermediate scalar products 
\eqref{intermediatedef}
with inhomogeneous parameters $\{w\}$.
The case $n=N$ corresponds to the usual scalar product \eqref{SP}, while the case 
$n=0$ corresponds to the domain wall boundary partition function.}
\label{intermediatepic}
\end{figure}
We have the following lemma regarding the properties of the intermediate 
scalar product.
\begin{lemma}{\label{property}}
The intermediate scalar product \eqref{intermediatedef} 
$S(\{ u \}_n | \{ v \}_N| \{ w \})$ satisfies the
following properties. 
\begin{enumerate}
\item $S(\{ u \}_n | \{ v \}_N| \{ w \})$ is symmetric with respect to
the variables $\{ w_1,\dots,w_{M-N+n} \}$. 
\item $\prod_{j=1}^n u_j^{M+2n-2N-1} S(\{ u \}_n | \{ v \}_N| \{ w \})$ 
is a polynomial of degree $M-N+n-1$ in $u_n^2$. 
\item
The following recursive relations between the intermediate scalar products
hold
\begin{align}
&S(\{ u \}_n|\{ v \}_N|\{ w \})|_{u_n=\pm \alpha^{-1/2} w_{M-N+n}} \nn \\
& \qquad \qquad 
=\alpha^{N-n-(M-1)/2} (\pm 1)^{M-1} \frac{w_{M+n-N}^M}{\prod_{j=1}^M w_j}
S(\{ u \}_{n-1}|\{ v \}_N|\{ w \}). \label{recursiverelation}
\end{align}
\item  The case $n=0$ of the intermediate scalar products has the 
following form:
\begin{align}
S(\{ u \}_0|\{ v \}_N|\{ w \})
=\alpha^{N(N-1)/2} \prod_{j=1}^N \prod_{k=1}^{M-N}
\left( \frac{\alpha v_j}{w_k}-\frac{w_k}{v_j} \right)
\frac{\prod_{j=1}^N v_j^{N-1}}{\prod_{j=M-N+1}^M w_j^{N-1}}.
\label{initial}
\end{align}
\end{enumerate}
\end{lemma}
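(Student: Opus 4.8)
The plan is to establish the four properties of the intermediate scalar product $S(\{u\}_n|\{v\}_N|\{w\})$ by exploiting the algebraic structure encoded in the $RLL$-relation and the explicit form of the $L$-operator, working directly from the definition \eqref{intermediatedef} rather than from any closed formula. The symmetry property (1) follows from commutativity: interchanging two inhomogeneous parameters $w_i \leftrightarrow w_{i+1}$ among the first $M-N+n$ sites can be implemented by an $R$-matrix conjugation, and since the relevant external states there are all $|0\ket$ (part of the vacuum sector $|0^{M-N+n}\ket$ on the bra side and acted on by the $L$-operators in a way that preserves the empty boundary), the $R$-matrix acts trivially on these components. Concretely, I would invoke the intertwining relation \eqref{RTT} to move an $R_{\mu\nu}$ past the monodromy matrices and use that $R$ fixes the $|00\ket$ and $|11\ket$ diagonal sectors while the boundary conditions select exactly such sectors for the first $M-N+n$ sites.

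For the polynomiality statement (2), I would track the $u_n$-dependence through the operators $C(u_n,\{w\})$. From the $L$-operator \eqref{loperator}, each matrix element $A,B,C,D$ is at most linear in $u$ in the numerator after clearing the factor $u^{-1}$ appearing in the coefficient $(\alpha u - u^{-1})$; taking the product over the $M$ sites and accounting for the explicit prefactor $\prod_j u_j^{M+2n-2N-1}$ should convert the expression into a genuine polynomial in $u_n^2$ of the stated degree $M-N+n-1$. The key is to count carefully how the boundary specification $\langle 0^{M-N+n}1^{N-n}|$ restricts which terms in the expansion of $C(u_n,\{w\})$ survive, since only certain site-occupation patterns are compatible with the fixed external states; this counting pins down the top degree. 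I would verify the degree bound by noting that $C$ lowers the particle number and that the asymptotic $u_n\to\infty$ behavior of each surviving term is controlled by the leading coefficients $\alpha u$ in $L$.

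The recursive relation (3) is the analytic heart of the Izergin-Korepin approach, and I expect it to be the main obstacle. The idea is to specialize $u_n = \pm\alpha^{-1/2}w_{M-N+n}$, a value at which the $L$-operator $L_{\mu,M-N+n}(u_n/w_{M-N+n})$ degenerates so that the corresponding column of the lattice is ``frozen'': at this point one of the Boltzmann weights vanishes, forcing the configuration at site $M-N+n$ to be rigidly determined. I would use the explicit weights to show that $C(u_n,\{w\})$ at this special value factors through a projection that effectively removes one site and one $C$-operator, reducing $S_n$ to $S_{n-1}$ up to the scalar prefactor $\alpha^{N-n-(M-1)/2}(\pm1)^{M-1}w_{M+n-N}^M/\prod_j w_j$. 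Extracting this prefactor exactly — including the sign $(\pm1)^{M-1}$ and the powers of $\alpha$ and $w$ — requires carefully evaluating the frozen column's contribution and the residual normalization from the other $L$-operators, which is where the bookkeeping is most delicate; a diagrammatic argument using Figure~\ref{intermediatepic}, in which the special value pins the arrows along one line, should make the factorization transparent.

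Finally, the initial condition (4) for $n=0$ is the domain wall boundary partition function of this five vertex model, which I would compute by observing that with no $C$-operators the bra $\langle 0^M|$ forces a unique ice-type configuration for much of the lattice. The five vertex constraint (one of the six weights being zero) drastically limits admissible configurations: typically only a single configuration or a product structure survives, allowing a direct product-form evaluation. I would read off the weights from \eqref{loperator}, namely the $d$-type factor $\alpha v_j/w_k - w_k/v_j$ arising from the $n_\mu s_j$ and $n_\mu n_j$ terms acting along the frozen rows, multiply over the $N(M-N)$ relevant vertices, and collect the remaining $\alpha^{N(N-1)/2}$ and the ratio $\prod v_j^{N-1}/\prod w_j^{N-1}$ from the diagonal scattering among the $B$-operators. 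Together, properties (2)--(4) will uniquely determine $S$ (a polynomial of bounded degree fixed at enough points via the recursion down to the known base case), which is exactly the mechanism that will later force the determinant formula of Theorem~\ref{scalarthm}; establishing these four properties is therefore the essential preparatory step.
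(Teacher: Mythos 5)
Your overall strategy coincides with the paper's: symmetry in the $w$'s from an $R$-matrix exchange argument, polynomiality by tracking the $u_n$-dependence of the single line carrying $C(u_n,\{w\})$, the recursion \eqref{recursiverelation} by freezing that line at $u_n=\pm\alpha^{-1/2}w_{M-N+n}$, and the $n=0$ case \eqref{initial} as a frozen domain-wall configuration evaluated as a product of vertex weights. For Properties 2--4 your sketch matches the paper in substance (the paper makes Property 2 concrete by inserting a completeness relation as in \eqref{genericrecursiverelation}, so that the whole $u_n$-dependence sits in one explicitly computable matrix element; your degree count is the same computation phrased differently).

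The one step that would fail as written is your justification of Property 1. You propose to exchange $w_i\leftrightarrow w_{i+1}$ by invoking \eqref{RTT}; but \eqref{RTT} intertwines the two \emph{auxiliary} spaces carrying the spectral parameters $u$ and $v$, and only yields the already-known symmetry of \eqref{intermediatedef} in $\{u\}_n$ and in $\{v\}_N$ --- it does not touch the quantum spaces and cannot permute the inhomogeneities. What is needed is the other intertwining relation \eqref{RLL2}, in which an $R$-matrix $\widetilde{R}_{jk}(w_j/w_k)$ of a different form \eqref{Rtildematrix} acts on the two quantum spaces $V_j\otimes V_k$ and exchanges $L$-operators sharing a common auxiliary space; one then reorganizes the lattice into column-to-column monodromy matrices and uses that $\widetilde{R}$ acts as a scalar on the diagonal sectors $|00\rangle$ and $|11\rangle$ selected by the fixed boundary states, the scalars cancelling between bra and ket. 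Your intuition about the diagonal sectors is exactly right; the relation you cite is the wrong one. Beyond this there are only descriptive slips: for $n=0$ the bra is $\langle 0^{M-N}1^{N}|$, not $\langle 0^{M}|$; the object frozen in Property 3 is the row carrying $u_n$, not a column; and the recursion flips one boundary spin from $0$ to $1$ rather than removing a site. None of these affects the substance.
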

\begin{proof}
Property 1 follows from the $RLL$-relation
\begin{align}
\widetilde{R}_{jk}(w_j/w_k)L_{\mu k}(u/w_k)L_{\mu j}(u/w_j)
=L_{\mu j}(u/w_j)L_{\mu k}(u/w_k)\widetilde{R}_{jk}(w_j/w_k)
\label{RLL2}
\end{align}
holding in 
$\End(W_{\mu} \otimes V_j \otimes V_k$). Here $\widetilde{R}$
is given by
\begin{align}
\widetilde{R}(u)
=
\begin{pmatrix}
u & 0 & 0 & 0 \\
0 & 0 & 1 & 0 \\
0 & 1 & \alpha(u-u^{-1}) & 0 \\
0 & 0 & 0 & u
\end{pmatrix},
 \label{Rtildematrix}
\end{align}
which intertwines the $L$-operators acting on a common auxiliary space
(but acting on  different quantum spaces).
Note the usual $RLL$-relation \eqref{RLL2} intertwines
the $L$-operators acting on a same quantum space but acting on different auxiliary spaces.
The above $RLL$-relation \eqref{Rtildematrix} allows one to construct
the monodromy matrix as a product of the $L$-operators
acting on the same quantum space (see also the next section), and rewriting
the intermediate scalar products in terms of the
resultant monodromy matrices makes one see Property 1 holds. 

Property 2 can be shown by inserting the completeness relation
into the intermediate scalar products (see Figure \ref{genericrecursivepic} for
a graphical interpretation)
\begin{align}
S(\{ u \}_n | \{ v \}_N| \{ w \})&=\langle 
0^{M-N+n} 1^{N-n} |
\prod_{j=1}^n C(u_j,\{w\}) \prod_{k=1}^N B(v_k,\{w\})|\Omega \rangle
\nonumber \\
&=\sum_{k=1}^{M-N+n}
\langle 
0^{M-N+n} 1^{N-n} |C(u_n,\{w\})
| 0^{k-1} 1 0^{M-N+n-k} 1^{N-n} 
\rangle
\nonumber \\
&\qquad\times
\langle 0^{k-1} 1 0^{M-N+n-k} 1^{N-n}|
\prod_{j=1}^{n-1}C(u_{j},\{w\}) \prod_{k=1}^N B(v_k,\{w\})|\Omega \rangle,
\label{genericrecursiverelation}
\end{align}
and noting the factor containing $u_n$ is calculated as
\begin{align}
&\langle 
0^{M-N+n} 1^{N-n} |
C(u_n,\{w\})
| 0^{k-1} 1 0^{M-N+n-k} 1^{N-n} \rangle \nn \\
& \qquad \qquad \qquad \qquad=\frac{\alpha^{N-n} u_n^{N-n+k-1}}
{\prod_{j=1}^{k-1} w_j \prod_{j=M-N+n+1}^M w_j}
\prod_{j=k+1}^{M-N+n} \left( \frac{\alpha u_n}{w_j}-\frac{w_j}{u_n} \right).
\end{align}
\begin{figure}[tt]
\begin{center}
\includegraphics[width=0.9\textwidth]{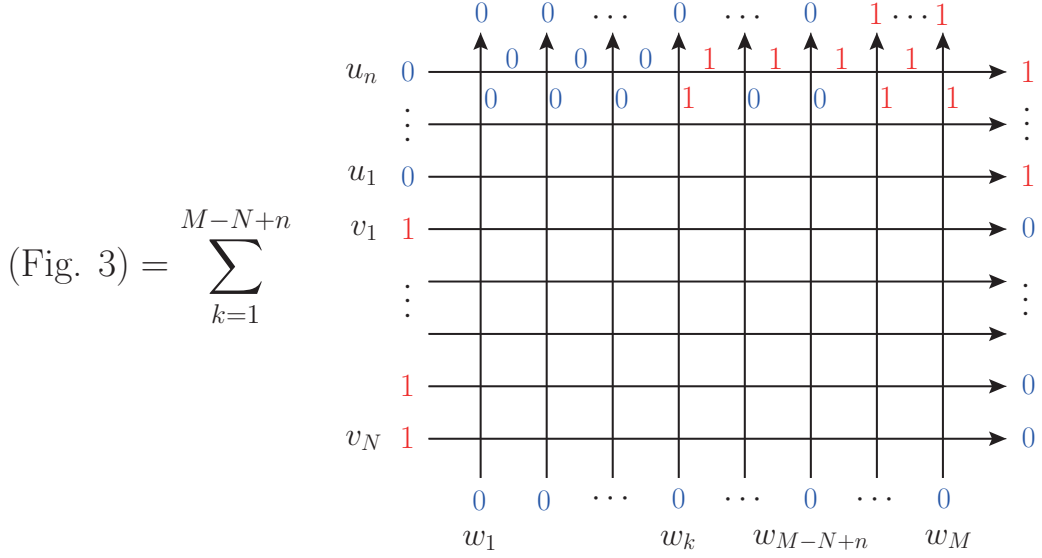}
\end{center}
\caption{The intermediate scalar products
where the completeness relation is inserted \eqref{genericrecursiverelation}. 
Note the parameter $u_n$ comes only from the top row.}
\label{genericrecursivepic}
\end{figure}

Property 3 can be obtained by setting $u_n=\pm \alpha^{-1/2} w_{M-N+n}$
in \eqref{genericrecursiverelation},
or can be directly observed by its graphical representation
(Figure \ref{recursivepic}) that the top row is completely frozen. 
\begin{figure}[tt]
\begin{center}
\includegraphics[width=0.7\textwidth]{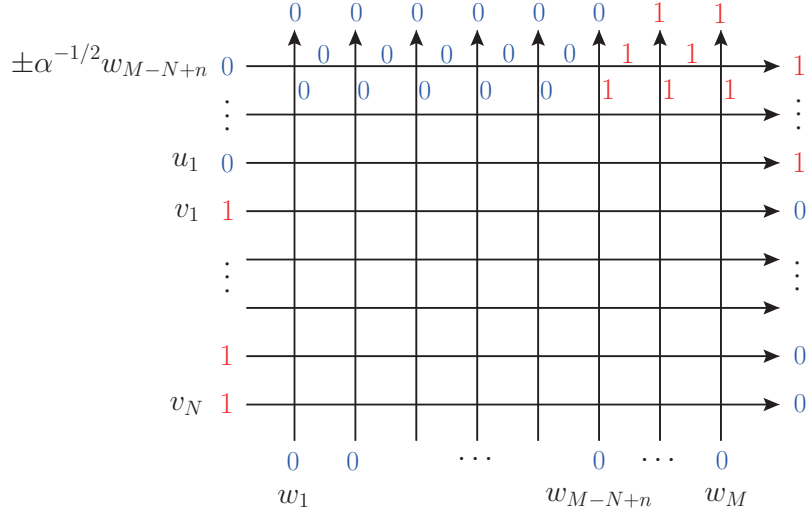}
\end{center}
\caption{The graphical representation of the 
recursive relation \eqref{recursiverelation}. We can see that the top row is
frozen by setting the spectral parameter $u_n$ to
$u_n=\pm \alpha^{-1/2} w_{M-N+n}$.}
\label{recursivepic}
\end{figure}

Property 4 can be shown by noting that all the internal states
are frozen (Figure \ref{initialpic}),
and reading out and multiplying all the weights of the $L$-operators
to find \eqref{initial}.
\end{proof}

\begin{figure}[tt]
\begin{center}
\includegraphics[width=0.7\textwidth]{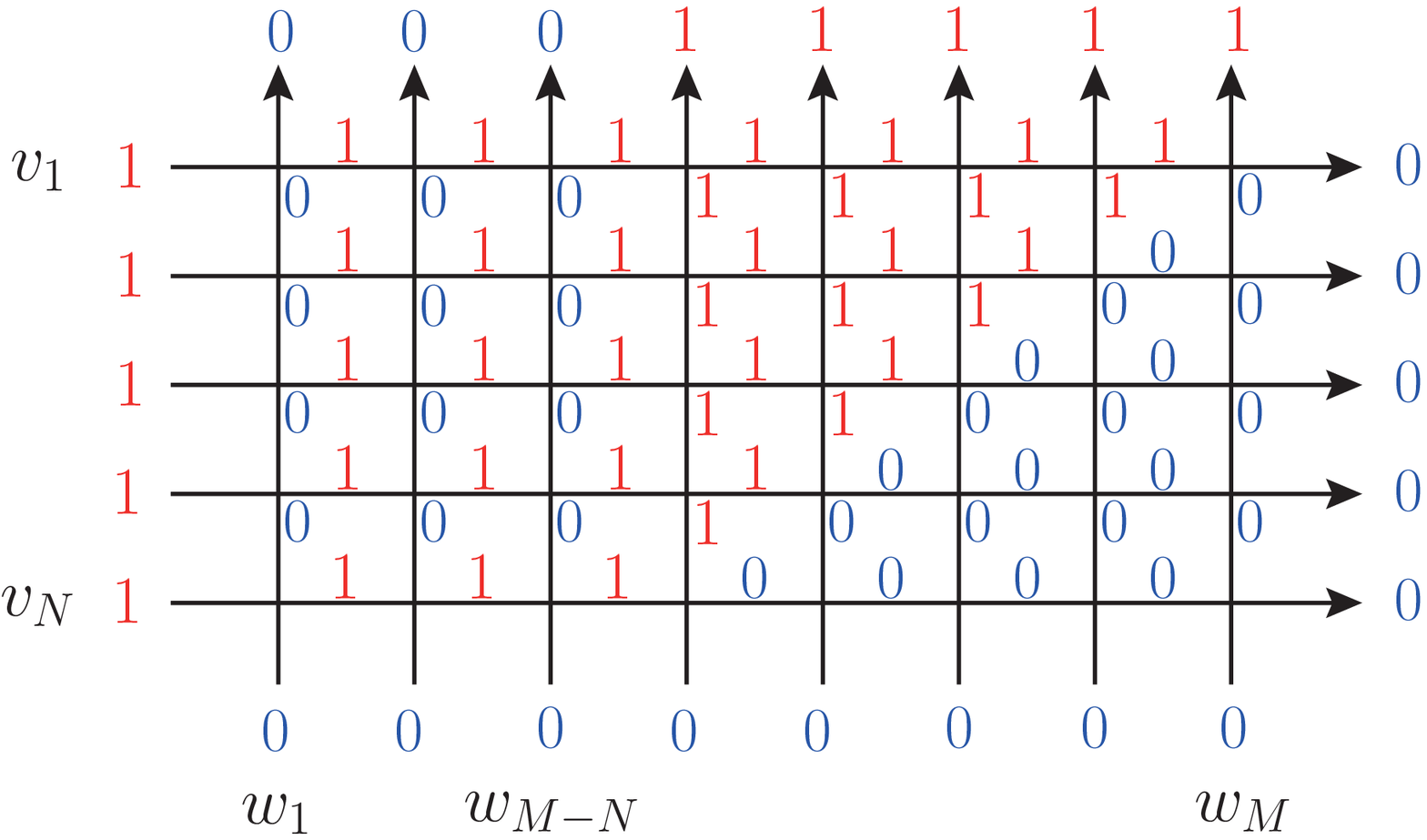}
\end{center}
\caption{The intermediate scalar products \eqref{initial} for $n=0$,
which corresponds to the domain wall boundary partition function.
One sees all the internal states are frozen when the boundary states
are fixed to the configuration in the figure.}
\label{initialpic}
\end{figure}

\begin{lemma}\label{uniqueness}
The properties in Lemma~\ref{property} uniquely determine
the intermediate scalar product \eqref{intermediatedef}.
\end{lemma}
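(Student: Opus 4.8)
The plan is to prove uniqueness by an interpolation/induction argument on $n$, exploiting the fact that the four properties of Lemma~\ref{property} provide exactly enough data to pin down the $u_n$-dependence at each stage. The key observation is that Property~2 tells us that, after multiplying by the monomial prefactor, the quantity $\prod_{j=1}^n u_j^{M+2n-2N-1} S(\{u\}_n|\{v\}_N|\{w\})$ is a polynomial of degree $M-N+n-1$ in the single variable $u_n^2$. A polynomial of degree $d$ in one variable is uniquely determined by its values at $d+1$ distinct points, so I need to locate $M-N+n$ distinct specialization points at which $S$ is already determined.

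First I would set up the induction: the base case $n=0$ is handled outright by Property~4, which gives an explicit closed form for $S(\{u\}_0|\{v\}_N|\{w\})$. For the inductive step, assume $S(\{u\}_{n-1}|\{v\}_N|\{w\})$ is uniquely determined by the listed properties. Property~3 furnishes, via the recursive relation \eqref{recursiverelation}, the value of $S(\{u\}_n|\cdot)$ at the two points $u_n = \pm\alpha^{-1/2} w_{M-N+n}$ in terms of the already-determined $S(\{u\}_{n-1}|\cdot)$. Here I would invoke Property~1: since $S$ is symmetric in $\{w_1,\dots,w_{M-N+n}\}$, the recursion at $u_n=\pm\alpha^{-1/2}w_{M-N+n}$ can be transported to $u_n = \pm\alpha^{-1/2}w_\ell$ for every $\ell \in \{1,\dots,M-N+n\}$. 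This yields $2(M-N+n)$ evaluation points, which in the variable $u_n^2$ collapse to $M-N+n$ distinct values $u_n^2 = \alpha^{-1} w_\ell^2$.

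Then I would count degrees of freedom. As a polynomial of degree $M-N+n-1$ in $u_n^2$, the function $\prod_{j=1}^n u_j^{M+2n-2N-1} S$ has $M-N+n$ coefficients, and I have its value at the $M-N+n$ distinct points $u_n^2 = \alpha^{-1}w_\ell^2$ (assuming the $w_\ell$ are generic so these are distinct). By Lagrange interpolation, the polynomial — and hence $S(\{u\}_n|\{v\}_N|\{w\})$ — is uniquely reconstructed from these values together with the induction hypothesis. This closes the induction and proves that the four properties determine $S$ completely, in particular for $n=N$, which is the scalar product \eqref{SP} up to the homogeneous limit.

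The main obstacle I anticipate is the bookkeeping needed to justify that the symmetry in Property~1 legitimately promotes the single recursion point of Property~3 into the full set of $M-N+n$ interpolation points: one must check that specializing $u_n = \pm\alpha^{-1/2}w_\ell$ for $\ell \neq M-N+n$ indeed corresponds, after a permutation of the symmetric variables $\{w_1,\dots,w_{M-N+n}\}$, to an instance of \eqref{recursiverelation}, and that the monomial prefactor in Property~2 transforms consistently under this relabeling. A secondary point requiring care is the genericity assumption on $\{w\}$ ensuring the interpolation nodes are distinct; since $S$ is a polynomial (hence continuous) in the $w_j$, uniqueness on a dense set extends to all values by continuity, so the homogeneous limit $w_j \to 1$ poses no difficulty.
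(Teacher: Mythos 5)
Your proposal is correct and follows essentially the same route as the paper: induction on $n$ with Property~4 as the base case, Property~1 used to promote the single specialization of Property~3 into $M-N+n$ distinct interpolation nodes in $u_n^2$, and Property~2 supplying the degree bound that makes Lagrange interpolation conclusive. The extra care you take over the genericity of the $w_\ell$ and the continuity argument for the homogeneous limit is a sensible elaboration of details the paper leaves implicit.
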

\begin{proof}
The proof is by induction on $n$. For $n=0$, by Property 4 
the assertion is trivial. Assume by induction that the assertion 
holds for $n-1$. Taking into account Property 1, one
finds that Property 3 gives values of $\prod_{j=1}^n u_j^{M+2n-2N-1}
S(\{u\}_{n}|\{v\}_N|\{w\})$
at $M-N+n$ distinct points of $u_n^2$. By this  together with Property 2,
$S(\{u\}_{n}|\{v\}_N|\{w\})$ is uniquely determined. Thus 
the assertion holds for $n$.
\end{proof}

Due to Lemma~\ref{uniqueness}, the following determinant
representation for the intermediate scalar product is valid.
\begin{theorem}{\label{determinantthm}}
The intermediate scalar product $S(\{ u \}_n|\{ v \}_N|\{ w \})$ \eqref{intermediatedef} 
has the following determinant form:
\begin{align}
S(\{ u \}_n|\{ v \}_N|\{ w \}) 
=&
\prod_{M-N+n+1 \le j<k \le M} \frac{1}{w_j^2-w_k^2}
\prod_{1 \le j<k \le n} \frac{1}{u_j^2-u_k^2}
\prod_{1 \le j<k \le N} \frac{1}{v_k^2-v_j^2} \nn \\
&\times
\mathrm{det}_N Q(\{ u \}_n|\{ v \}_N|\{ w \})
\label{intermediatedeterminant}
\end{align}
with an $N \times N$ matrix
$Q(\{ u \}_n|\{ v \}_N|\{ w \})$ whose matrix elements
are given by
\begin{align}
&Q(\{ u \}_n|\{ v \}_N|\{ w \})_{jk} \nn \\
&\quad =
\begin{cases}
\frac{\displaystyle
a(u_j,\{w\})d(v_k,\{w\})v_k^{2(N-1)}
-a(v_k,\{w\})d(u_j,\{w\}) u_j^{2(N-1)}
}
{\displaystyle
(v_k/u_j-u_j/v_k)
\prod_{l=M-N+n+1}^M (u_j^2-\alpha^{-1} w_l^2)
},  & \text{ ($1\le j \le n$)}  \\
\displaystyle
v_k^{2(N-1)} \prod_{\substack{l=1 \\ l \neq M-N+j}}^M
\left( \frac{\alpha v_k}{w_l}-\frac{w_l}{v_k} \right),
& \text{ ($n+1\le j \le N$)}
\end{cases}.
\end{align}
\end{theorem}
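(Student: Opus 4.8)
The plan is to invoke Lemma~\ref{uniqueness}: since the four properties of Lemma~\ref{property} pin down the intermediate scalar product uniquely, it suffices to denote by $\widehat S$ the right-hand side of \eqref{intermediatedeterminant} and check that $\widehat S$ obeys Properties 1--4; the theorem then follows at once because $S$ obeys them as well. Property 1 I would dispatch by inspection. The $w$-prefactor involves only $w_{M-N+n+1},\dots,w_M$; the functions $a(\cdot,\{w\})$ and $d(\cdot,\{w\})$ are symmetric in all of $\{w\}$; and for a row with $n+1\le j\le N$ the excluded index $M-N+j$ always lies in $\{M-N+n+1,\dots,M\}$, so the remaining product is symmetric in $w_1,\dots,w_{M-N+n}$. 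Hence every row, and therefore $\mathrm{det}_N Q$, is symmetric in those variables.

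The analytic heart is Property 2. Because $u_n$ enters only row $n$ of $Q$, I would expand $\mathrm{det}_N Q$ along that row and show that the apparent singularities of $\widehat S$ in $u_n^2$ are all spurious. The pole at $u_n^2=v_k^2$ cancels since the numerator of $Q_{nk}$ vanishes there; the pole at $u_n^2=u_j^2$ ($j<n$) produced by the prefactor cancels because rows $j$ and $n$ of $Q$ coincide when $u_n=u_j$. The crucial and least obvious case is $u_n^2=\alpha^{-1}w_l^2$ for $l\in\{M-N+n+1,\dots,M\}$: here $d(u_n,\{w\})=0$ removes the second numerator term while the first survives, so $Q_{nk}$ has a genuine simple pole whose residue, as a function of $k$, is proportional to $d(v_k,\{w\})\,v_k^{2N-1}/(v_k^2-\alpha^{-1}w_l^2)$. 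The key observation I would exploit is that this residue vector is precisely a multiple of the frozen row $j=l-M+N$ of the lower block $n+1\le j\le N$; replacing row $n$ by it yields two proportional rows, so the residue of $\mathrm{det}_N Q$ vanishes. Once all poles are spurious, a degree count of the surviving polynomial in $u_n^2$ gives the asserted degree $M-N+n-1$. I expect this pole-cancellation step to be the main conceptual obstacle.

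For Property 3 I would set $u_n=\pm\alpha^{-1/2}w_{M-N+n}$. Then $d(u_n,\{w\})=0$, and using $v_k^2-u_n^2\propto(\alpha v_k/w_{M-N+n}-w_{M-N+n}/v_k)$ one finds that row $n$ of $Q$ becomes a $k$-independent scalar times the corresponding frozen row of the level-$(n-1)$ matrix. Simultaneously, the rows $1,\dots,n-1$ of the level-$n$ matrix differ from those at level $n-1$ exactly by the factor $(u_j^2-\alpha^{-1}w_{M-N+n}^2)$, which cancels against the extra factor $\prod_{j<n}(u_j^2-u_n^2)^{-1}$ in the prefactor ratio, while the residual $w$-factors telescope. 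Collecting all scalars and simplifying $a(u_n,\{w\})$ at the special value should reproduce exactly the coefficient $\alpha^{N-n-(M-1)/2}(\pm1)^{M-1}w_{M+n-N}^M/\prod_{j}w_j$ of \eqref{recursiverelation}; keeping this bookkeeping straight is the main computational obstacle.

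Finally, for Property 4 I would put $n=0$, so every row is of the second type. Factoring out the common column contributions $v_k^{2(N-1)}\prod_{k'=1}^{M-N}(\alpha v_k/w_{k'}-w_{k'}/v_k)$ and $\prod_{m=1}^{N}(\alpha v_k/w_{M-N+m}-w_{M-N+m}/v_k)$ reduces $\mathrm{det}_N Q$ to the Cauchy-type determinant $\mathrm{det}_N[(\alpha v_k/w_{M-N+j}-w_{M-N+j}/v_k)^{-1}]$, whose closed form, combined with the prefactors, yields \eqref{initial}. Having verified Properties 1--4 for $\widehat S$, Lemma~\ref{uniqueness} gives $\widehat S=S(\{u\}_n|\{v\}_N|\{w\})$, which proves the theorem.
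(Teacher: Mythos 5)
Your proposal is correct and follows essentially the same route as the paper: both verify that the right-hand side of \eqref{intermediatedeterminant} satisfies Properties 1--4 of Lemma~\ref{property} (removable singularities in $u_n^2$ for Property 2, the Cauchy determinant for Property 4) and then invoke the uniqueness statement of Lemma~\ref{uniqueness}. Your write-up actually supplies more detail than the paper does, in particular the residue/row-proportionality argument at $u_n^2=\alpha^{-1}w_l^2$, which the paper only asserts.
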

\begin{proof}
We can directly see that the determinant formula \eqref{intermediatedeterminant}
satisfies all the properties in Lemma~\ref{property}. 
To show Property 2, we just use the fact that the singularities $u_n^2=u_j^2$ ($1\le j\le n-1$)
in the prefactor, and $u_n^2=\alpha^{-1}w_j^2$ ($M-N+n+1\le j \le M$) 
and $u_n^2=v_j^2$ ($1\le j\le n$) in elements of the determinant are removal.  
For Property 4, we utilize  the Cauchy determinant formula to obtain
\begin{align}
\mathrm{det}_N \left\{ \left( \frac{\alpha^{1/2} v_k}{w_{M-N+j}}
-\frac{w_{M-N+j}}{\alpha^{1/2} v_k} \right)^{-1} \right\}
=\frac{
\displaystyle
\prod_{M-N+1 \le j<k \le M} 
\left( \frac{w_j}{w_k}-\frac{w_k}{w_j}
\right)
\prod_{1 \le j<k \le N}
\left(
\frac{v_k}{v_j}-\frac{v_j}{v_k}
\right)
}
{
\displaystyle
\prod_{j=1}^N \prod_{k=M-N+1}^M
\left(
\frac{\alpha^{1/2} v_j}{w_k}-\frac{w_k}{\alpha^{1/2} v_j}
\right)
}.
\end{align}
Finally due to Lemma~\ref{uniqueness},   the determinant formula 
\eqref{intermediatedeterminant} holds.
\end{proof}
\begin{corollary}
Taking $n=N$ in \eqref{intermediatedeterminant} yields the determinant
representation of the scalar product for the five vertex model with inhomogeneous
parameters \eqref{SP}:
\begin{align}
\bra \psi(\{ u \}_N,\{w\})| \psi(\{ v \}_N,\{w\}) \ket
=&
\prod_{1 \le j<k \le n} \frac{1}{(u_j^2-u_k^2)(v_k^2-v_j^2)} 
\mathrm{det}_N Q(\{ u \}_N|\{ v \}_N|\{ w \})
\end{align}
with
\begin{align}
Q(\{ u \}_N|\{ v \}_N|\{ w \})_{jk} 
=
\frac{
a(u_j,\{w\})d(v_k,\{w\})v_k^{2(N-1)}
-a(v_k,\{w\})d(u_j,\{w\}) u_j^{2(N-1)}
}
{
v_k/u_j-u_j/v_k
}.
\end{align}
Further taking the homogeneous limit $w_j\to 1$ ($1\le j \le n$) 
yields \eqref{generalscalar} in 
Theorem~\ref{determinantthm}. 
\end{corollary}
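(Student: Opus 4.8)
The plan is to obtain the Corollary as a pure specialization of Theorem~\ref{determinantthm}, in two steps: first set $n=N$ to recover the inhomogeneous scalar product, then send $w_j\to 1$. The first thing I would verify is that at $n=N$ the intermediate scalar product \eqref{intermediatedef} is literally the scalar product \eqref{SP}. The left boundary bra in \eqref{intermediatedef} is $\langle 0^{M-N+n}1^{N-n}|$, and at $n=N$ this is $\langle 0^{M}1^{0}|=\langle 0^M|=\langle\Omega|$, so that
\begin{align}
S(\{u\}_N|\{v\}_N|\{w\})
&=\langle\Omega|\prod_{j=1}^N C(u_j,\{w\})\prod_{k=1}^N B(v_k,\{w\})|\Omega\rangle \nn \\
&=\langle\psi(\{u\}_N,\{w\})|\psi(\{v\}_N,\{w\})\rangle.\nn
\end{align}
Hence the determinant formula \eqref{intermediatedeterminant}, evaluated at $n=N$, already computes exactly the quantity the Corollary asserts.

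Next I would substitute $n=N$ into the prefactor and into the matrix $Q$ of \eqref{intermediatedeterminant} and watch the structure collapse. In the prefactor the first product $\prod_{M-N+n+1\le j<k\le M}$ becomes $\prod_{M+1\le j<k\le M}$, an empty product equal to $1$, leaving precisely $\prod_{1\le j<k\le N}1/[(u_j^2-u_k^2)(v_k^2-v_j^2)]$. In the matrix elements the second case $n+1\le j\le N$ has empty index range and drops out, so the first case survives for all $1\le j\le N$; moreover its denominator factor $\prod_{l=M-N+n+1}^M(u_j^2-\alpha^{-1}w_l^2)$ becomes $\prod_{l=M+1}^M(\cdots)=1$. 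What remains is exactly $Q(\{u\}_N|\{v\}_N|\{w\})_{jk}=[a(u_j,\{w\})d(v_k,\{w\})v_k^{2(N-1)}-a(v_k,\{w\})d(u_j,\{w\})u_j^{2(N-1)}]/(v_k/u_j-u_j/v_k)$, which is the inhomogeneous scalar product formula stated in the Corollary.

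Finally I would take the homogeneous limit $w_j\to 1$. After $n=N$ the surviving prefactor no longer contains $\{w\}$, so the only $w$-dependence sits in $a(u,\{w\})=\prod_{j=1}^M u/w_j$ and $d(u,\{w\})=\prod_{j=1}^M(\alpha u/w_j-w_j/u)$; these are regular at $w_j=1$ and tend to $a(u)=u^M$ and $d(u)=(\alpha u-u^{-1})^M$. Substituting these into $Q$ reproduces \eqref{element}, while the prefactor already agrees with that of \eqref{generalscalar}, yielding Theorem~\ref{scalarthm}. I do not expect a genuine obstacle here: the limit is a term-by-term evaluation with no competing singularities, and the whole argument is bookkeeping rather than estimation. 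The one point deserving explicit care is the empty-product/empty-range conventions at $n=N$ together with the identification of the frozen boundary bra, since these are precisely what make the two-case matrix element collapse to a single case and the $w$-prefactor disappear; I would record these conventions once, after which the double specialization is immediate.
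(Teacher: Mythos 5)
Your proposal is correct and follows exactly the route the paper intends: the corollary is a direct specialization of Theorem~\ref{determinantthm} at $n=N$ (where the boundary bra becomes $\langle\Omega|$, the empty products collapse, and the second case of the matrix elements drops out), followed by the term-by-term homogeneous limit $w_j\to 1$. The bookkeeping you record, including the empty-range conventions, is precisely what the paper leaves implicit.
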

The state vectors $| \psi(\{u\}_N)\ket$
and $\bra  \psi(\{u\}_N)|$
become the energy eigenstates of \eqref{markov}, when
an arbitrary set of solutions $\{u\}_N$ to the Bethe ansatz 
equation \eqref{BAE} in the homogeneous limit $w_j\to1$ ($1\le j \le N$)
is substituted into the state vectors. 
Then we have the following corollary regarding the 
norm of the eigenstates. 
\begin{corollary}
The norm of the eigenstates in the homogeneous limit $w_j\to 1$ ($1\le j \le n$) 
is given by
\begin{align}
\bra \psi(\{u\}_N) | \psi(\{u\}_N) \ket=
\prod_{j=1}^N u_j^{2(M+N-1)}
\prod_{\substack{ j,k=1 \atop j \neq k}}^N \ \frac{1}{u_j^2-u_k^2}
\mathrm{det}_N \widetilde{Q}(\{u\}_N)
\label{norm}
\end{align}
with 
\begin{align}
\widetilde{Q}_{jk}(\{u\}_N)= -1+\frac{\alpha N+(M-N) u_j^{-2}}{\alpha-u_j^{-2}}\delta_{jk}.
\end{align}
By use of Sylvester's determinant theorem,  the 
determinant in the above  further reduces to
\begin{align}
\mathrm{det}_N \widetilde{Q}=
\prod_{j=1}^N \frac{\alpha N+(M-N)u_j^{-2}}{\alpha-u_j^{-2}}
\left(1-\sum_{j=1}^N \frac{\alpha-u_j^{-2}}{\alpha N +(M-N)u_j^{-2}} \right).
\end{align}
\end{corollary}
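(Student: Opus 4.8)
The plan is to read off the norm as the coincidence limit $v_k\to u_k$ of the off-shell scalar product already established in Theorem~\ref{scalarthm} (equivalently the homogeneous, $n=N$ case of Theorem~\ref{determinantthm}), and then to impose the Bethe ansatz equation \eqref{BAE} to collapse the determinant into the advertised form. The first observation is that the two prefactors are compatible: setting $v=u$ in $\prod_{1\le j<k\le N}\{(u_j^2-u_k^2)(v_k^2-v_j^2)\}^{-1}$ gives exactly $\prod_{j\neq k}(u_j^2-u_k^2)^{-1}$. Hence the whole content of \eqref{norm} is the determinant identity
\[
\mathrm{det}_N Q(\{u\}_N|\{u\}_N)=\prod_{j=1}^N u_j^{2(M+N-1)}\,\mathrm{det}_N\widetilde{Q}(\{u\}_N),
\]
which I would prove by factoring suitable row and column scalars out of $Q$.

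First I would take the limit carefully. Since the $u_j$ are distinct, the prefactor remains finite, and for $j\neq k$ the entry \eqref{element} has a finite limit as $v_k\to u_k$ because the factor $v_k/u_j-u_j/v_k$ does not vanish there. Only the diagonal entries are of the indeterminate form $0/0$: at $v_j=u_j$ both the numerator of \eqref{element} and the denominator $v_j/u_j-u_j/v_j$ vanish. A single differentiation in $v_j$ (l'Hopital's rule) then yields
\[
Q(\{u\}_N|\{u\}_N)_{jj}=a(u_j)d(u_j)u_j^{2(N-1)}\left\{\frac{u_j}{2}\left(\frac{d'(u_j)}{d(u_j)}-\frac{a'(u_j)}{a(u_j)}\right)+(N-1)\right\},
\]
and inserting the homogeneous weights $a(u)=u^M$, $d(u)=(\alpha u-u^{-1})^M$ reduces the brace to $\{\alpha(N-1)+(M-N+1)u_j^{-2}\}/(\alpha-u_j^{-2})$, which is precisely $\widetilde{Q}_{jj}=-1+\{\alpha N+(M-N)u_j^{-2}\}/(\alpha-u_j^{-2})$ up to the common factor $a(u_j)d(u_j)u_j^{2(N-1)}$.

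The crucial step is the off-diagonal reduction, where \eqref{BAE} enters. With $f(v,u)=u^2/(u^2-v^2)$ one computes $f(u_k,u_j)/f(u_j,u_k)=-u_j^2/u_k^2$, so \eqref{BAE} reads $a(u_j)/d(u_j)=(-1)^N u_j^{2N}/\prod_l u_l^2$. Substituting this into the numerator of \eqref{element} at $v=u$, the two terms combine so that the factor $(u_j^2-u_k^2)$ cancels against the denominator $v_k/u_j-u_j/v_k=(u_k^2-u_j^2)/(u_j u_k)$, leaving a separable (rank-one) expression proportional to $\{u_j^{2N-1}d(u_j)\}\{u_k^{2N-1}d(u_k)\}$. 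Using the Bethe equation once more in the form $d(u_j)\propto u_j^{M-2N}$, both this off-diagonal expression and the diagonal factor $a(u_j)d(u_j)u_j^{2(N-1)}$ acquire the common row and column scalars $u_j^{M-1}$; pulling $u_j^{M-1}$ out of row $j$ and $u_k^{M-1}$ out of column $k$ then turns $Q$ into $\widetilde{Q}$, whose off-diagonal entries are the constant $-1$ and whose diagonal entries are those found above. Collecting the extracted powers reproduces the overall factor $\prod_j u_j^{2(M+N-1)}$. I expect the bookkeeping of the powers of $u_j$ and of the overall constant in this factorization — namely, verifying that the l'Hopital diagonal and the separable off-diagonal share \emph{exactly} the same row/column scalars once \eqref{BAE} is imposed — to be the main obstacle; this compatibility is the heart of the Gaudin--Korepin norm mechanism and holds precisely on shell.

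Finally, $\widetilde{Q}=\mathrm{diag}(c_1,\dots,c_N)-J$, where $c_j=\{\alpha N+(M-N)u_j^{-2}\}/(\alpha-u_j^{-2})$ and $J$ is the all-ones matrix, so the last assertion is the matrix-determinant-lemma (Sylvester) evaluation $\mathrm{det}_N(\mathrm{diag}(c)-\mathbf{1}\mathbf{1}^{T})=\prod_j c_j\,(1-\sum_j c_j^{-1})$, with $c_j^{-1}=(\alpha-u_j^{-2})/\{\alpha N+(M-N)u_j^{-2}\}$. This reproduces the stated reduction verbatim and is routine.
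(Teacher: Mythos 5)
This is the derivation the paper leaves implicit: the norm is the $v_k\to u_k$ limit of the off-shell formula \eqref{generalscalar}, with l'H\^opital on the diagonal entries, the Bethe equations rendering the off-diagonal entries separable, and the matrix determinant lemma at the end; your intermediate expressions (the diagonal brace, the rank-one off-diagonal structure, the Sylvester step) all check out. The only slip is the sign of the on-shell condition: since $f(u_k,u_j)/f(u_j,u_k)=-u_j^2/u_k^2$, equation \eqref{BAE} gives $a(u_j)/d(u_j)=(-1)^{N+1}u_j^{2N}/\prod_l u_l^2$ rather than $(-1)^{N}$ (as one can confirm against \eqref{BAE-tasep}); with the correct sign the extracted row and column scalars multiply to exactly $\prod_j u_j^{2(M+N-1)}$ because $(-1)^{N(N+1)}=1$, whereas the sign as written would leave a stray $(-1)^N$ in the prefactor.
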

%
\section{Wavefunctions}
%
In this section, we compute the overlap between an arbitrary off-shell 
$N$-particle state $|\psi(\{u\}_N)\ket$ and 
the (normalized) state with an arbitrary particle configuration 
$|x_1 \cdots x_N\ket$ $(x_1<\dots<x_N$), 
where $x_j$ denotes the positions of the particles. Namely here we evaluate the 
wavefunction
$\bra x_1 \cdots x_N | \psi(\{u\}_N) \ket$ and its dual 
$\bra \psi(\{u\}_N)|x_1\dots x_N \ket$.
One finds these quantities are crucial to describe physically interesting phenomena 
such as the relaxation dynamics as in Section~\ref{TASEP},  
because the state $|\psi(\{u\})_N\ket$ becomes an
eigenstate of the Hamiltonian \eqref{Baxter} (correspondingly
$\bra x_1\dots x_N | \psi(\{u\}_N)\ket$ becomes an energy eigenfunction), 
if we choose $\{u\}_N$ as 
an arbitrary set of solutions of the 
Bethe ansatz equation (see Proposition~\ref{eigenstate} and Section~\ref{TASEP} 
for details). Here and in what follows, we
consider the homogeneous case $w_1=1,\dots,w_M=1$, and as noted in the previous
section we omit $\{w\}$ as in \eqref{homogeneous}.

The main results in this section are summarized in the following theorem.
\begin{theorem}\label{overlapthm}
The wavefunctions can be written as the following determinant formulae:
\begin{align}
&\bra x_1 \cdots x_N |  \psi(\{v\}_N) \ket=
\frac{\prod_{j=1}^N v_j^{M-1} (\alpha v_j^2-1)^{-1}}
{\prod_{1 \le j < k \le N}(v_k^2-v_j^2)}
\mathrm{det}_N(v_j^{2k}(\alpha-v_j^{-2})^{x_k}), \label{generaloverlaptwo} \\
&\bra \psi(\{u\}_N)|x_1 \cdots x_N \rangle
=\frac{\prod_{j=1}^N (\alpha u_j-u_j^{-1})^M u_j^{2N-1}}
{\prod_{1 \le j < k \le N}
(u_j^2-u_k^2)} \mathrm{det}_N(u_j^{-2k} (\alpha-u_j^{-2})^{-x_k}), \label{generaloverlap} 
\end{align}
where 
\begin{align}
   \bra x_1\dots x_N|=\bra \Omega|\prod_{j=1}^N \sigma^+_{x_j}, \qquad
|x_1\dots x_N\ket=\prod_{j=1}^N \sigma^-_{x_j} |\Omega\ket,
\end{align}
and $\{v\}_N$ and $\{u\}_N$  are  sets of arbitrary complex parameters.
\end{theorem}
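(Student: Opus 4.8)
The plan is to compute the overlap $\bra x_1\cdots x_N|\psi(\{v\}_N)\ket=\bra\Omega|\prod_{j}\sigma^+_{x_j}\prod_{k}B(v_k)|\Omega\ket$ directly as a finite five-vertex partition function and to read the determinant off the path structure that the model's vanishing weight forces. First I would unfold each factor $B(v)={}_\mu\bra 0|T_\mu(v)|1\ket_\mu$ with the explicit $L$-operator \eqref{loperator}, so the overlap becomes a sum of products of Boltzmann weights on an $N\times M$ lattice: the $N$ horizontal (auxiliary) lines carry $v_1,\dots,v_N$, one per $B$, the $M$ vertical lines are the quantum sites, the bottom boundary is frozen to $|\Omega\ket$ (all empty), the top boundary is frozen by $\bra x_1\cdots x_N|$ (occupied exactly at $x_1,\dots,x_N$), and every auxiliary line brings one particle in and sends none out. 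As a warm-up I would do $N=1$: a particle injected with parameter $v$ travels until a turning (deposit) vertex $\sigma^+_\mu\sigma^-_j$ of weight $1$ and then goes up, giving $(\alpha v-v^{-1})^{x-1}v^{M-x}=v^{M-2x+1}(\alpha v^2-1)^{x-1}$, which already reproduces \eqref{generaloverlaptwo} for $N=1$ once one writes $\alpha-v^{-2}=v^{-2}(\alpha v^2-1)$ and extracts the prefactor $v^{M-1}(\alpha v^2-1)^{-1}$.

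The decisive structural input for general $N$ is the ice rule together with the absence from \eqref{loperator} of the configuration (empty auxiliary, quantum line passing straight through): this weight is zero, so every occupied vertical edge must be accompanied by an occupied horizontal edge, and the occupied edges organize into non-intersecting (osculating) lattice paths, each entering on an auxiliary boundary edge, running along its row, turning up at a deposit vertex, and exiting at the top at one of the $x_k$, using the bulk weights $\alpha v-v^{-1}$, $\alpha v$ and $v$ along the way. Because the model is free-fermionic, the weighted sum over such path systems collapses — via the Lindstr\"om--Gessel--Viennot mechanism in its five-vertex/osculating form, or equivalently by antisymmetrizing over the assignment of lines to exit columns — to a single $N\times N$ determinant whose $(j,k)$ entry is the building block for the line $v_j$ reaching the $k$-th exit. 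Collecting the line- and column-independent factors produces $\prod_j v_j^{M-1}(\alpha v_j^2-1)^{-1}$ and the Vandermonde denominator $\prod_{j<k}(v_k^2-v_j^2)^{-1}$, leaving exactly $\mathrm{det}_N\big(v_j^{2k}(\alpha-v_j^{-2})^{x_k}\big)$. I expect the delicate point to be the bookkeeping that yields the \emph{rank-dependent} power $v_j^{2k}$: this shift records how many lines the $j$-th line must pass on its way to the $k$-th (hence $k$-th smallest) exit, and keeping track of the accompanying crossing weights and of the overall sign is precisely where the argument has to be done with care; note that no Bethe equations enter, so the identity is automatically valid off-shell.

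As a fully rigorous cross-check of the combinatorial evaluation I would characterize the wavefunction by induction in $N$ in the spirit of Lemma \ref{property} and Lemma \ref{uniqueness}: establish symmetry in $\{v\}_N$, a polynomial degree bound in each $v_j$, a freezing/recursion relation obtained by specializing $v_N$ so that one row becomes frozen, and the $N=0,1$ data; then verify that the determinant \eqref{generaloverlaptwo} satisfies all of these and invoke uniqueness. This parallels the Izergin--Korepin argument already used for the scalar product and avoids any appeal to LGV.

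The dual wavefunction \eqref{generaloverlap} follows the same route with $C(u)={}_\mu\bra 1|T_\mu(u)|0\ket_\mu$ in place of $B$. Now the relevant lattice is the mirror image: each injected particle enters from the opposite boundary and is deposited after traversing the complementary set of sites, so the single-particle enumeration gives $(\alpha u-u^{-1})^{M-x}u^{x-1}=u^{2x-M-1}(\alpha u^2-1)^{M-x}$, and the same antisymmetrization yields $\mathrm{det}_N\big(u_j^{-2k}(\alpha-u_j^{-2})^{-x_k}\big)$ with prefactor $\prod_j(\alpha u_j-u_j^{-1})^M u_j^{2N-1}$. Alternatively, \eqref{generaloverlap} can be deduced from \eqref{generaloverlaptwo} using the symmetry of the monodromy matrix that exchanges $B$ and $C$ under inversion of the spectral parameter, avoiding a second combinatorial computation.
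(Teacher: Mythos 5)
Your overall strategy is sound and genuinely different from the paper's (the paper does not enumerate lattice paths: it rewrites the overlap as a matrix product $\Tr[\mathcal{A}_N^{M-x_N}\mathcal{B}_N\cdots\mathcal{B}_N\mathcal{A}_N^{x_1-1}P]$ over the tensor product of auxiliary spaces, proves by induction an explicit algebra for a decomposition $\mathcal{B}_N=\sum_j\mathcal{B}_N^{(j)}$, and antisymmetrizes), and your $N=1$ check is correct. But the step on which everything hinges for general $N$ is asserted rather than proved, and the assertion as stated is false. This five-vertex model is \emph{not} free-fermionic: with weights $a_1=u$, $a_2=\alpha u$, $b_1=\alpha u-u^{-1}$, $b_2=0$, $c_1=c_2=1$ one has $a_1a_2+b_1b_2-c_1c_2=\alpha u^2-1\neq 0$ generically. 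Moreover the configurations are osculating (two lines may share a vertex with weight $\alpha u$), not non-intersecting, so the Lindstr\"om--Gessel--Viennot involution does not apply. The determinantal collapse is real, but it comes from the specific exchange relation $(\alpha u_j^2-1)\mathcal{B}_N^{(j)}\mathcal{B}_N^{(k)}=-(\alpha u_k^2-1)\mathcal{B}_N^{(k)}\mathcal{B}_N^{(j)}$ together with $\mathcal{B}_N^{(j)}\mathcal{A}_N=\frac{u_j}{\alpha u_j-u_j^{-1}}\mathcal{A}_N\mathcal{B}_N^{(j)}$ and $(\mathcal{B}_N^{(j)})^2=0$ (Lemma~\ref{algebra}), whose ratio of single-variable factors is exactly what turns the permutation sum into a determinant and produces the rank-dependent powers you flag as ``delicate bookkeeping.'' That lemma is the actual content of the theorem; without it (or an equivalent osculating-path cancellation argument) your main route is incomplete. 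Relatedly, the configuration-independent prefactor does not simply fall out of ``collecting line-independent factors'': in the paper it is the nontrivial scalar $\Tr[\mathcal{B}_N^{(N)}\cdots\mathcal{B}_N^{(1)}\mathcal{A}_N^{M-N}P]$, which is only evaluated indirectly by comparing with the frozen configuration $x_j=j$, i.e.\ \eqref{stepoverlap} --- a step your plan should include explicitly.

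Your Izergin--Korepin fallback is a legitimate alternative in principle, but as described it fails in the homogeneous lattice you are working on: the freezing specialization happens only at $v^2=\alpha^{-1}$, a single point in $v^2$, which cannot pin down a polynomial of degree growing with $M$. One must first introduce inhomogeneities $w_1,\dots,w_M$ (as the paper does for the scalar product in Lemmas~\ref{property} and~\ref{uniqueness}, where $M-N+n$ distinct points $u_n=\pm\alpha^{-1/2}w_{M-N+n}$ are available) and take the homogeneous limit at the end; the degree bound must then be matched to the number of available specializations, which you have not checked. Finally, the proposed shortcut from \eqref{generaloverlaptwo} to \eqref{generaloverlap} via a $B\leftrightarrow C$ symmetry under inversion of the spectral parameter is doubtful for this $L$-operator (the missing vertex breaks crossing symmetry, and the two determinants are not related by any evident reflection $x_k\mapsto M+1-x_k$ combined with $u\mapsto u^{-1}$); the paper instead redoes the matrix-product computation with a parallel algebra for $\mathcal{A}_N$ and $\mathcal{C}_N^{(j)}$ (Lemma~\ref{algebra2}).
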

The strategy to  show Theorem~\ref{overlapthm} is as follows.
We first rewrite the wavefunctions
into a matrix product form, following \cite{GMmat}.
The matrix product form can be expressed as a determinant with some overall
factor which remains to be calculated. The information of the particle configuration 
$\{x_1,x_2,\dots,x_N \}$ is encoded in the determinant.
On the other hand, the overall factor is independent of the
particle positions, and therefore we can  determine this factor by
considering the specific configuration: we 
explicitly calculate it with the help of the result 
for the overlap of the consecutive configuration (i.e. $x_j=j$) obtained
in \cite{MSS,MSS2}.

Let us begin to compute the wavefunctions. We consider 
\eqref{generaloverlap} first. The proof of \eqref{generaloverlaptwo} can 
be done in a similar way. First we shall rewrite the wavefunction 
$\bra \psi(\{u\}_N| x_1\dots x_N \ket$ into the matrix product 
representation.
With the help of graphical description,
one finds that the wavefunction can be written as
\begin{align}
\bra \Omega|\prod_{j=1}^N C(u_j)|x_1\dots x_N \ket
=\Tr_{W^{\otimes N}}
\left[
\bra \Omega|\prod_{\mu=1}^N T_{\mu}(u_\mu) |x_1 \cdots x_N \ket P
\right],
\label{overlap}
\end{align}
where $P=| 0^N \rangle \langle 1^N |$
is an operator acting on the tensor product of auxiliary spaces
$W_1\otimes  \dots \otimes W_N$.
The trace here is also over the auxiliary spaces.
Due to the commutativity of the operators $B$ or $C$ \eqref{commutation},
the wavefunctions do not depend on the order of the product of $B$ or $C$.
In other words, the wavefunctions are symmetric with respect to
the parameters $\{u\}_N$ or $\{v\}_N$.
Changing the viewpoint of the products of the monodromy matrices, we have
\begin{align}
\prod_{\mu=1}^N T_{\mu}(u_\mu)
=\prod_{j=1}^M \mathcal{T}_j(\{u\}_N),
\end{align}
where $\mathcal{T}_j(\{u\}_N):= 
\prod_{\mu=1}^N L_{\mu j}(u_{\mu}) \in \End( W^{\otimes N} \otimes V_j)$
can be regarded as a monodromy matrix consisting of
$L$-operators acting on the same quantum space $V_j$
(but acting on different auxiliary spaces).  The monodromy matrix
is decomposed as
\begin{align}
\mathcal{T}_j(\{u\}_N)
&:=\begin{pmatrix}
\mathcal{A}_N (\{u\}_N) & \mathcal{B}_N(\{u\}_N) \\
\mathcal{C}_N(\{u\}_N) &  \mathcal{D}_N(\{u\}_N)
\end{pmatrix}_j ,
\label{decomp}
\end{align}
where the elements
($\mathcal{A}_N$, etc.) act on 
$W_1\otimes \dots \otimes W_N$.
The wavefunction \eqref{overlap} can then be rewritten by 
$\mathcal{T}_j(\{u\}_N)$ as
\begin{align}
\bra \psi(\{u\}_N)|x_1\dots x_N \ket
&=\Tr_{W^{\otimes N}}\left[
\langle \Omega| \prod_{j=1}^M \mathcal{T}_j(\{u\}_N)|x_1 \cdots x_N \rangle  P
\right] \nn \\
&=\Tr_{W^{\otimes N}}\left[
\mathcal{A}_N^{M-x_N}
\mathcal{B}_N
\mathcal{A}_N^{x_N-x_{N-1}-1}
\dots\mathcal{B}_N\mathcal{A}_N^{x_2-x_1-1}\mathcal{B}_N\mathcal{A}_N^{x_1-1} P\right].
\label{reov}
\end{align}
In Figure~\ref{overlapelementspic}, we depict the elements  $\mathcal{A}_n(\{u\}_n)$
and $\mathcal{B}_n(\{u\}_n)$ of the monodromy matrix $\mathcal{T}_j(\{u\}_n)$,
which explicitly appear in \eqref{reov}. 
\begin{figure}[tt]
\begin{center}
\includegraphics[width=0.6\textwidth]{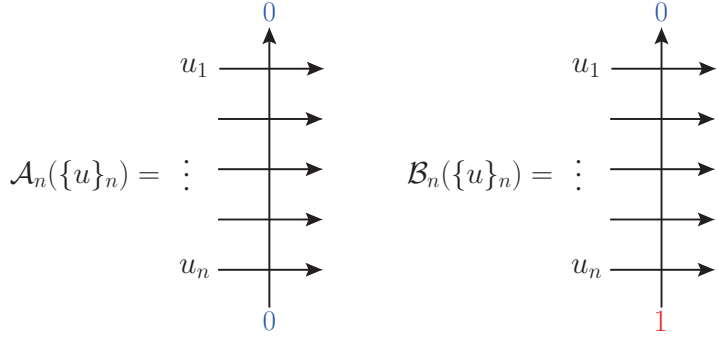}
\end{center}
\caption{The elements $\mathcal{A}(\{u\}_n)$ and 
$\mathcal{B}(\{u\}_n)$ of the monodromy matrix $\mathcal{T}_j(\{u \}_n)$ \eqref{decomp}.}
\label{overlapelementspic}
\end{figure}

For these operators, one finds the following recursive 
relations: 
\begin{align}
&\mathcal{A}_{n+1}(\{u\}_{n+1})
=\mathcal{A}_n(\{u\}_n) \otimes
\begin{pmatrix}
u_{n+1} & 0  \\
0 & \alpha u_{n+1}-u_{n+1}^{-1}
\end{pmatrix}
+\mathcal{B}_n(\{u\}_n) \otimes
\begin{pmatrix} 
0 & 1  \\
0 & 0
\end{pmatrix},   \label{reop1} \\
&\mathcal{B}_{n+1}(\{u\}_{n+1})
=\mathcal{A}_n(\{u\}_n) \otimes
\begin{pmatrix}
0 & 0  \\
1 & 0
\end{pmatrix}
+\mathcal{B}_n(\{u\}_n) \otimes
\begin{pmatrix}
0 & 0  \\
0 & \alpha u_{n+1}
\end{pmatrix} \label{reop2}  
\end{align}
with the initial condition
\begin{align}
\mathcal{A}_1=
\begin{pmatrix}
u_1 & 0 \\
0    &  \alpha u_1-u_1^{-1}
\end{pmatrix}, \quad
\mathcal{B}_1=
\begin{pmatrix}
0 & 0 \\
1 & 0 
\end{pmatrix}.
\label{initialAB}
\end{align}
See Figure~\ref{overlaprecpic} for a graphical description
of the recursion relation for the operator $\mathcal{A}_{n}(\{u\}_n)$.
\begin{figure}[ttt]
\begin{center}
\includegraphics[width=0.6\textwidth]{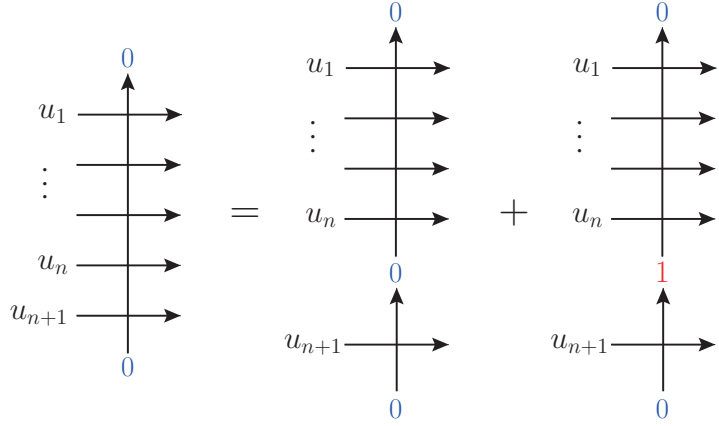}
\end{center}
\caption{The graphical description of the recursive relation 
for the element $\mathcal{A}_n(\{u\}_n)$ (see \eqref{reop1}).}
\label{overlaprecpic}
\end{figure}
By using the recursive relations \eqref{reop1} and \eqref{reop2},
one sees that these operators satisfy the following simple algebra.
\begin{lemma}\label{algebra}
There exists a decomposition of $\mathcal{B}_n$ :
$\mathcal{B}_n=\sum_{j=1}^n \mathcal{B}_n^{(j)}$ such that
the following algebraic relations hold for $\mathcal{A}_n$ and $\mathcal{B}_n^{(j)}$:
\begin{align}
&\mathcal{B}_n^{(j)}\mathcal{A}_n
=\frac{u_j}{\alpha u_j-u_j^{-1}}\mathcal{A}_n \mathcal{B}_n^{(j)}, \label{rel2} \\
&(\mathcal{B}_n^{(j)})^2=0, \label{rel3} \\
&(\alpha u_j^2-1)\mathcal{B}_n^{(j)}\mathcal{B}_n^{(k)}
=-(\alpha u_k^2-1)\mathcal{B}_n^{(k)}\mathcal{B}_n^{(j)}, \ \ \ (j \neq k)
\label{rel4}.
\end{align}
\end{lemma}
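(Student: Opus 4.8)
The plan is to argue by induction on $n$, reducing everything to the recursions \eqref{reop1}--\eqref{reop2} and the elementary $2\times2$ multiplication table of the matrices $a_{n+1}=\mathrm{diag}(u_{n+1},\alpha u_{n+1}-u_{n+1}^{-1})$, $e=\left(\begin{smallmatrix}0&1\\0&0\end{smallmatrix}\right)$, $f=\left(\begin{smallmatrix}0&0\\1&0\end{smallmatrix}\right)$ and $d_{n+1}=\mathrm{diag}(0,\alpha u_{n+1})$ acting on the new auxiliary space $W_{n+1}$. Writing $\omega_j:=\alpha u_j-u_j^{-1}$ (so the eigenvalue in \eqref{rel2} is $u_j/\omega_j$ and the weight in \eqref{rel4} is $\alpha u_j^2-1=u_j\omega_j$), the base case $n=1$ is immediate from \eqref{initialAB}: with $\mathcal{B}_1^{(1)}=\mathcal{B}_1=f$ one has $f^2=0$ and $f\mathcal{A}_1=u_1 f=(u_1/\omega_1)\mathcal{A}_1 f$. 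The naive guess for the inductive step is to propagate the split along the recursion, $\mathcal{B}_{n+1}^{(n+1)}=\mathcal{A}_n\otimes f$ and $\mathcal{B}_{n+1}^{(j)}=\mathcal{B}_n^{(j)}\otimes d_{n+1}$ for $j\le n$. I expect this to fail: using $fe=\mathrm{diag}(0,1)$ and $ef=\mathrm{diag}(1,0)$ one finds $\mathcal{B}_{n+1}^{(n+1)}\mathcal{A}_{n+1}-\tfrac{u_{n+1}}{\omega_{n+1}}\mathcal{A}_{n+1}\mathcal{B}_{n+1}^{(n+1)}=\mathcal{A}_n\mathcal{B}_n\otimes fe-\tfrac{u_{n+1}}{\omega_{n+1}}\mathcal{B}_n\mathcal{A}_n\otimes ef$, which cannot vanish because the coupling term $\mathcal{B}_n\otimes e$ in $\mathcal{A}_{n+1}$ destroys the factorisation. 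The lesson is that $\mathcal{B}_n^{(j)}$ must be adapted to the spectrum of $\mathcal{A}_n$ rather than tensor-factorised.

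Accordingly I would prove by induction the stronger structural statement that $\mathcal{A}_n$ preserves the total occupation of $W_1\otimes\cdots\otimes W_n$ and, being triangular within each occupation sector, is diagonalisable (for generic $u_i$) with eigenvalues $\prod_{i=1}^n\epsilon_i$, $\epsilon_i\in\{u_i,\omega_i\}$, while $\mathcal{B}_n$ raises the occupation by exactly one and, in the eigenbasis of $\mathcal{A}_n$, acts only by raising a single occupation label from $0$ to $1$. Granting this, I define $\mathcal{B}_n^{(j)}$ to be the component of $\mathcal{B}_n$ that raises the $j$-th label; equivalently, it is the eigencomponent of $\mathcal{B}_n$ under the adjoint action $X\mapsto\mathcal{A}_n X\mathcal{A}_n^{-1}$ with eigenvalue $\omega_j/u_j$. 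Then \eqref{rel2} holds by construction and \eqref{rel3} is automatic, since a label cannot be raised from $0$ to $1$ twice. The inductive step for the structural statement is the technical core: one diagonalises $\mathcal{A}_{n+1}=\mathcal{A}_n\otimes a_{n+1}+\mathcal{B}_n\otimes e$ by correcting the tensor eigenbasis of $\mathcal{A}_n\otimes a_{n+1}$ for the coupling $\mathcal{B}_n\otimes e$, and then checks that $\mathcal{B}_{n+1}=\mathcal{A}_n\otimes f+\mathcal{B}_n\otimes d_{n+1}$ is still a sum of single-label raisings in the corrected basis.

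The hardest point will be \eqref{rel4}: unlike \eqref{rel2}--\eqref{rel3} it is sensitive to the scalar weights attached to the individual label flips, and here the five-vertex Boltzmann weights must conspire. Concretely I expect the verification to hinge on the two identities $u_j\omega_k-u_k\omega_j=(u_k^2-u_j^2)/(u_ju_k)$ and $u_k\omega_k-u_j\omega_j=\alpha(u_k^2-u_j^2)$, whose interplay reproduces exactly the normalisation $\alpha u_j^2-1=u_j\omega_j$ appearing in \eqref{rel4}. I have checked the case $n=2$, where the label-mixing coefficient is $c=(u_1\omega_2-u_2\omega_1)^{-1}$ and the products $\mathcal{B}_n^{(j)}\mathcal{B}_n^{(k)}$ and $\mathcal{B}_n^{(k)}\mathcal{B}_n^{(j)}$ cancel precisely with the stated weights. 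Propagating these coefficients through the re-diagonalisation at each step and showing that \eqref{rel4} is stable under tensoring with $W_{n+1}$ is where the main effort lies; everything else is linear algebra organised by \eqref{reop1}--\eqref{reop2}. Since both sides of \eqref{rel2}--\eqref{rel4} are rational in the $u_i$, it suffices to treat generic $u_i$ (where $\mathcal{A}_n$ is diagonalisable) and extend to all values by continuity.
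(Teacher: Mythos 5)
Your proposal is correct and takes essentially the same route as the paper: the paper also argues by induction, diagonalizes $\mathcal{A}_{n+1}$ by correcting the tensor eigenbasis of $\mathcal{A}_n\otimes\mathrm{diag}(u_{n+1},\alpha u_{n+1}-u_{n+1}^{-1})$ for the coupling term (its matrices $G_{n+1}$ and $H_n$ are exactly your basis correction, with $H_n$ built from the $\mathscr{B}_n^{(j)}$ using \eqref{rel2}), and defines $\mathscr{B}_{n+1}^{(j)}$ as the components you call single-label raisings, whose explicit block form carries the mixing coefficient $(u_ju_{n+1}^{-1}-u_j^{-1}u_{n+1})^{-1}$ you identified. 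The only difference is presentational: the paper writes out the propagated $\mathscr{B}_{n+1}^{(j)}$ explicitly and reads off \eqref{rel2}--\eqref{rel4} from them, whereas you leave that final bookkeeping as the stated "main effort," so to complete your argument you would just reproduce those explicit formulas.
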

\begin{proof}
This can be shown by induction on $n$.  For $n=1$,  from \eqref{initialAB}
$\mathcal{A}_1$ is diagonal and one directly sees that the relations are valid.
For $n$, we assume that $\mathcal{A}_n$ is diagonalizable and write the 
corresponding diagonal matrix as $\mathscr{A}_n=G_n^{-1}\mathcal{A}_n G_n$. 
Also writing  $\mathscr{B}_n=G_n^{-1} \mathcal{B}_n G_n$ and 
$\mathscr{B}_n=\sum_{j=1}^{n} \mathscr{B}_n^{(j)}$, and  noting
the algebraic relations above do not depend on the choice of basis, we suppose by the
induction hypothesis that the same relations are satisfied by $\mathscr{A}_n$
and $\mathscr{B}_n^{(j)}$. 

Now we shall show that they also hold for $n+1$. To this end, first we 
construct $G_{n+1}$. Noting from \eqref{reop1} that $\mathcal{A}_{n+1}$ is an 
upper triangular block matrix whose block diagonal elements are written in 
terms of $\mathcal{A}_n$, 
we assume that $G_{n+1}$ is written as
\begin{equation}
G_{n+1}=
\begin{pmatrix}
G_n &  G_n H_n \\
0  & G_n
\end{pmatrix},
\label{G-matrix}
\end{equation}
where $2n\times 2n$ matrix $H_n$ remains to be determined. 
Using the induction hypothesis for $n$, one obtains
\begin{equation}
G_{n+1}^{-1}\mathcal{A}_{n+1} G_{n+1}=
\begin{pmatrix}
u_{n+1} \mathscr{A}_n & u_{n+1}\mathscr{A}_n H_n+\mathscr{B}_n
                                    -(\alpha u_{n+1}-u_{n+1}^{-1})H_n \mathscr{A}_n \\
0 & (\alpha u_{n+1}-u_{n+1}^{-1})\mathscr{A}_n
\end{pmatrix}.
\end{equation}
The above matrix is guaranteed to be diagonal when 
\begin{equation}
 \mathscr{B}_n=(\alpha u_{n+1}-u_{n+1}^{-1})H_n \mathscr{A}_n-
                         u_{n+1}\mathscr{A}_n H_n.
\end{equation}
Utilizing the above relation and  recalling  $\mathscr{A}_n$
and $\mathscr{B}^{(j)}_n$ satisfy the relation same as that in \eqref{rel2}, 
one finds
\begin{align}
H_n=\mathscr{A}^{-1}_n\sum_{j=1}^n 
\frac{(\alpha u_j-u_j^{-1}) }
        {u_j^{-1}u_{n+1}-u_j u_{n+1}^{-1}} \mathscr{B}_n^{(j)}.
\label{H-matrix}
\end{align}
One thus obtains the diagonal matrix $\mathscr{A}_{n+1}$:
\begin{align}
\mathscr{A}_{n+1}=
\begin{pmatrix}
u_{n+1}\mathscr{A}_n & 0 \\
0 & (\alpha u_{n+1}-u_{n+1}^{-1})\mathscr{A}_n
\end{pmatrix}.
\label{A-matrix}
\end{align}
The remaining task is to derive  $\mathscr{B}_{n+1}^{(j)}$ and
to prove the relations \eqref{rel2}--\eqref{rel4} hold for $n+1$.
Combining  \eqref{reop2}, \eqref{G-matrix} and \eqref{H-matrix},
and also inserting the relations \eqref{rel3} and \eqref{rel4},
one arrives at $\mathscr{B}_{n+1}=\sum_{j=1}^{n+1}\mathscr{B}_{n+1}^{(j)}$
where
\begin{align}
\mathscr{B}_{n+1}^{(j)}=
\begin{cases} \displaystyle
\frac{1}{u_j u_{n+1}^{-1}-u_j^{-1}u_{n+1}}
\begin{pmatrix}
u_j \mathscr{B}_n^{(j)} & 0 \\
0 & u_j^{-1}(1-\alpha u_{n+1}^2) \mathscr{B}_n^{(j)}
\end{pmatrix}  & \text{ for $1\le j \le n$} \\[6mm]
\begin{pmatrix}
0  & 0 \\
\mathscr{A}_n & 0
\end{pmatrix}  & \text{ for $j=n+1$}
\end{cases}.
\label{B-matrix}
\end{align}
Finally recalling that $\mathscr{A}_n$ and $\mathscr{B}_n^{(j)}$ 
are supposed to
satisfy the relations \eqref{rel2}--\eqref{rel4} and using the explicit
form of $\mathscr{A}_{n+1}$ \eqref{A-matrix} and $\mathscr{B}_{n+1}^{(j)}$ 
\eqref{B-matrix}, one sees they satisfy the same algebraic relations as those 
in \eqref{rel2}--\eqref{rel4} for $n+1$.
\end{proof}

Due to the algebraic relations \eqref{rel2} and \eqref{rel3} in Lemma~\ref{algebra}, 
the matrix product form for the wavefunction \eqref{reov} can be rewritten as
\begin{align}
\bra \psi(\{u\}_N)|x_1\dots x_N \ket
 = \sum_{\sigma \in \mathfrak{S}_N} &
    \prod_{j=1}^N
  u_{\sigma(j)}^{-2(M-N)}\left(\alpha u^2_{\sigma(j)}-1\right)^{M-N+j}
  u_{\sigma(j)}^{-2 j} \left(\alpha-u_{\sigma(j)}^{-2}\right)^{-x_j} \nn \\
& \times
\Tr_{W^{\otimes N}}\left[
\mathcal{B}_N^{(\sigma(N))}
\dots\mathcal{B}_N^{(\sigma(1))}\mathcal{A}_N^{M-N} P \right],
\end{align}
where $\mathfrak{S}_n$ is the symmetric group of order $N$.
Using \eqref{rel4} to arrange the order of the matrix product
$\mathcal{B}_N^{(\sigma(N))}\dots\mathcal{B}_N^{(\sigma(1))}$
in the canonical order
$\mathcal{B}_N^{(N)}\dots\mathcal{B}_N^{(1)}$ yields the 
following
determinant form:
\begin{align}
\bra \psi(\{u\}_N)|x_1\dots x_N \ket
&= K \sum_{\sigma \in \mathfrak{S}_n} (-1)^{\sigma} 
\prod_{j=1}^N
u_{\sigma(j)}^{-2 j}\left(\alpha-u_{\sigma(j)}^{-2}\right)^{-x_j} \nn \\
&=K \mathrm{det}_N 
\left[  u_{j}^{-2 k}  \left(\alpha-u_{j}^{-2}\right)^{-x_k}
\right],
\label{predet}
\end{align}
where the prefactor $K$ given below remains to be determined:
\begin{align}
K=\prod_{j=1}^N\left(\alpha u^2_{j}-1\right)^{M-N+j}u_{j}^{-2(M-N)}
\Tr_{W^{\otimes N}}\left[
\mathcal{B}_N^{(N)}
\dots\mathcal{B}_N^{(1)}\mathcal{A}_N^{M-N} P \right].
\end{align}
In \eqref{predet},
we notice that the information of the particle configuration
$\{x_1, x_2,\dots,x_N \}$ is encoded in the determinant,
while the overall factor $K$ is independent of the configuration.
This fact allows us to determine the factor $K$ by evaluating
the overlap for a particular particle configuration. In fact, the
overlaps for some particular cases can be directly 
evaluated as in \cite{MSS,MSS2}.  For instance, we find
the following explicit expression for the case $x_j=j$ ($1\le j \le n$):
\begin{align}
\bra \psi(\{u\}_N)|12\dots N \ket
=\alpha^{N(N-1)/2} \prod_{j=1}^N
u_j^{N-1}(\alpha u_j-u_j^{-1})^{M-N} , \label{stepoverlap}
\end{align}
which can be evaluated with the help of its graphical description,
just in the same way with
the $n=0$ case in the intermediate scalar products \eqref{initial}.
Comparison of \eqref{stepoverlap} with \eqref{predet} for 
$x_j=j$ ($1\le j \le N$) determines
the desired prefactor $K$:
\begin{align}
K=\frac{\prod_{j=1}^N  (\alpha u_j-u_j^{-1})^M u_j^{2N-1}}
{\prod_{1 \le j<k \le 1}(u_j^2-u_k^2)}, \label{factor}
\end{align}
where we have used the Vandermonde determinant
$\mathrm{det}_N(x_j^{N-k})=\prod_{1 \le j<k \le N}(x_j-x_k)$
to evaluate the determinant in \eqref{predet} for the case
$x_j=j$ ($1\le j \le N$).
Insertion of the result of $K$ into  \eqref{predet} yields
\eqref{generaloverlap}.

We can also evaluate the dual expression \eqref{generaloverlaptwo} 
in the similar manner.
In this case the corresponding matrix product representation is given by
\begin{align}
\bra x_1\dots x_N |\psi(\{u\}_N)\ket=
\Tr_{W^{\otimes N}}\left[
\mathcal{A}_N^{M-x_N}
\mathcal{C}_N
\mathcal{A}_N^{x_N-x_{N-1}-1}
\dots\mathcal{C}_N\mathcal{A}_N^{x_2-x_1-1}\mathcal{C}_N\mathcal{A}_N^{x_1-1} Q\right],
\label{MP2}
\end{align}
where $\mathcal{C}_N$ is an element of the monodromy matrix defined in \eqref{decomp}
and $Q$ is a projection operator $Q=|1^N\ket\bra 0^N|$ acting on 
$W_1\otimes \dots\otimes W_N$ (cf. \eqref{overlap} and \eqref{reov}).  
The algebraic relations satisfied by the operators $\mathcal{A}$ and $\mathcal{C}$
are summarized in the following lemma.
\begin{lemma}\label{algebra2}
There exists a decomposition of $\mathcal{C}_n$:
$\mathcal{C}_n=\sum_{j=1}^n \mathcal{C}_n^{(j)}$ such 
that the following algebraic relations hold for $\mathcal{A}_n$ 
and $\mathcal{C}_n^{(j)}$:
\begin{align}
&\mathcal{A}_n \mathcal{C}_n^{(j)}
=\frac{u_j}{\alpha u_j-u_j^{-1}}\mathcal{C}_n^{(j)} \mathcal{A}_n , \\
&(\mathcal{C}_n^{(j)})^2=0, \\
&(\alpha u_k^2-1)\mathcal{C}_n^{(j)}\mathcal{C}_n^{(k)}
=-(\alpha u_j^2-1)\mathcal{C}_n^{(k)}\mathcal{C}_n^{(j)}, \ \ \ (j \neq k).
\end{align}
\end{lemma}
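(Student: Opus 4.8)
The plan is to prove Lemma~\ref{algebra2} by induction on $n$, in close parallel with the proof of Lemma~\ref{algebra}, but following the \emph{lower} row of the monodromy matrix \eqref{decomp} instead of the upper one. Reading off the second row of the same product $\mathcal{T}_j(\{u\}_{n+1})=\mathcal{T}_j(\{u\}_n)L_{(n+1)j}(u_{n+1})$ that produced \eqref{reop1} and \eqref{reop2}, one obtains the recursions
\begin{align}
&\mathcal{C}_{n+1}(\{u\}_{n+1})=\mathcal{C}_n(\{u\}_n)\otimes
\begin{pmatrix} u_{n+1} & 0 \\ 0 & \alpha u_{n+1}-u_{n+1}^{-1}\end{pmatrix}
+\mathcal{D}_n(\{u\}_n)\otimes\begin{pmatrix} 0 & 1 \\ 0 & 0\end{pmatrix}, \\
&\mathcal{D}_{n+1}(\{u\}_{n+1})=\mathcal{C}_n(\{u\}_n)\otimes
\begin{pmatrix} 0 & 0 \\ 1 & 0\end{pmatrix}
+\mathcal{D}_n(\{u\}_n)\otimes\begin{pmatrix} 0 & 0 \\ 0 & \alpha u_{n+1}\end{pmatrix},
\end{align}
with initial data $\mathcal{C}_1=\left(\begin{smallmatrix}0&1\\0&0\end{smallmatrix}\right)$, $\mathcal{D}_1=\left(\begin{smallmatrix}0&0\\0&\alpha u_1\end{smallmatrix}\right)$, mirroring \eqref{initialAB}. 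Since the three identities to be proved are invariant under a change of basis, I would work throughout in the frame that diagonalizes $\mathcal{A}_n$, reusing \emph{verbatim} the similarity $G_n$ of \eqref{G-matrix} and the matrix $H_n$ of \eqref{H-matrix}; these are built only from $\mathcal{A}_n$ and $\mathcal{B}_n$ and are therefore unchanged.

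Setting $\mathscr{C}_n=G_n^{-1}\mathcal{C}_nG_n$ and $\mathscr{D}_n=G_n^{-1}\mathcal{D}_nG_n$ and conjugating the $\mathcal{C}$-recursion by a $G_{n+1}$ of the block form \eqref{G-matrix}, a short computation gives
\begin{align}
\mathscr{C}_{n+1}=
\begin{pmatrix}
u_{n+1}\mathscr{C}_n &
u_{n+1}\mathscr{C}_nH_n-(\alpha u_{n+1}-u_{n+1}^{-1})H_n\mathscr{C}_n+\mathscr{D}_n \\
0 & (\alpha u_{n+1}-u_{n+1}^{-1})\mathscr{C}_n
\end{pmatrix},
\end{align}
together with an analogous lower-triangular expression for $\mathscr{D}_{n+1}$. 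As in the passage from the raw conjugate to the clean form \eqref{B-matrix}, I would then simplify the off-diagonal block using the induction hypothesis — the defining relation $\mathscr{B}_n=(\alpha u_{n+1}-u_{n+1}^{-1})H_n\mathscr{A}_n-u_{n+1}\mathscr{A}_nH_n$ and the relations \eqref{rel3}, \eqref{rel4} — and read off the decomposition $\mathscr{C}_{n+1}=\sum_{j=1}^{n+1}\mathscr{C}_{n+1}^{(j)}$, the terms $1\le j\le n$ descending from the rescaled diagonal blocks and the term $j=n+1$ from the new off-diagonal block. Feeding these block forms, the explicit $\mathscr{A}_{n+1}$ of \eqref{A-matrix}, and the scalar factor $u_j/(\alpha u_j-u_j^{-1})$ back into the three identities would establish them at level $n+1$.

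The main obstacle, and the precise point where this argument diverges from the proof of Lemma~\ref{algebra}, is that the $\mathcal{C}$-recursion is coupled to $\mathcal{D}$, whereas the $\mathcal{B}$-recursion \eqref{reop2} closed within the pair $\{\mathcal{A},\mathcal{B}\}$. The off-diagonal block above carries both $\mathscr{D}_n$ and $H_n$, and $H_n$ is itself assembled from the $\mathscr{B}_n^{(j)}$ via \eqref{H-matrix}; hence an induction hypothesis on $\mathscr{C}$ alone is not self-sustaining. To close the argument I would enlarge the hypothesis to carry, alongside \eqref{rel2}--\eqref{rel4} for $\mathscr{C}$, a companion set of relations for $\mathscr{D}_n$ and the cross relations governing how $\mathscr{D}_n$ and $\mathscr{C}_n^{(j)}$ move past $\mathscr{A}_n$ and past the $\mathscr{B}_n^{(k)}$ that enter $H_n$. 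The delicate part is checking that all these contributions conspire so that the off-diagonal block organizes into genuine $\mathscr{C}_{n+1}^{(j)}$ obeying the stated relations — in particular that the prefactors $(\alpha u_k^2-1)$ and $(\alpha u_j^2-1)$ appear in the order \emph{swapped} relative to \eqref{rel4}. I expect this anticommutation-type relation to be the most demanding step, since it is there that the $\mathcal{D}$-contributions and the off-diagonal mixing must cancel exactly.
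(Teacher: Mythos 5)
Your setup is correct: the recursions for $\mathcal{C}_{n+1}$ and $\mathcal{D}_{n+1}$ read off from the second row of $\mathcal{T}_j(\{u\}_{n+1})=\mathcal{T}_j(\{u\}_n)L_{(n+1)j}(u_{n+1})$ are exactly right, as are the initial data $\mathcal{C}_1=\sigma^+$, $\mathcal{D}_1=\alpha u_1 n$. The paper itself gives no proof of Lemma~\ref{algebra2} (it is stated after the remark that the dual case goes ``in the similar manner''), so the question is whether your mirrored induction actually closes. As written it does not: the whole content of the lemma is deferred to your last paragraph, where the induction hypothesis is to be enlarged by unspecified ``companion relations'' for $\mathcal{D}_n$ and cross relations with the $\mathscr{B}_n^{(k)}$ entering $H_n$, and the verification of the exchange relation is acknowledged as the hard step and not carried out. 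That is precisely the part that constitutes the proof, and it is not obvious it can be done in the form you sketch: already at $n=1$ one has $\mathcal{D}_1\mathcal{C}_1^{(1)}=0$ while $\mathcal{C}_1^{(1)}\mathcal{D}_1=\alpha u_1\sigma^+\neq 0$, so $\mathcal{D}_n$ and $\mathcal{C}_n^{(j)}$ cannot satisfy a two-sided exchange relation of the type \eqref{rel2}, and the correct closed system of auxiliary relations --- as well as the correct extraction of the $\mathcal{C}_{n+1}^{(j)}$ from an off-diagonal block that now mixes $\mathscr{D}_n$ with $H_n\mathscr{C}_n$ terms --- is left entirely open. So the proposal is a plausible programme rather than a proof.

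There is a way to obtain the lemma directly from Lemma~\ref{algebra}, which also explains the swapped prefactors you noticed in the third relation. The $L$-operator \eqref{loperator} is invariant under simultaneous transposition in both of its spaces, $L_{\mu j}(u)^{t_\mu t_j}=L_{\mu j}(u)$, since $(\sigma^{\pm})^t=\sigma^{\mp}$ while $s$ and $n$ are symmetric. Hence the full transpose of $\mathcal{T}_j(u_1,\dots,u_n)=L_{1j}(u_1)\cdots L_{nj}(u_n)$ equals $L_{nj}(u_n)\cdots L_{1j}(u_1)$, the monodromy matrix built from the reversed parameter sequence; comparing entries in the quantum space gives $\mathcal{C}_n(u_1,\dots,u_n)=\mathcal{B}_n(u_n,\dots,u_1)^{t}$ and $\mathcal{A}_n(u_1,\dots,u_n)=\mathcal{A}_n(u_n,\dots,u_1)^{t}$, where $t$ is the transpose on $W^{\otimes n}$ (up to a harmless relabelling of the auxiliary spaces). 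Since Lemma~\ref{algebra} holds for an arbitrary ordering of the parameters, you may define $\mathcal{C}_n^{(j)}:=\bigl(\mathcal{B}_n^{(j)}(u_n,\dots,u_1)\bigr)^{t}$; transposing \eqref{rel2}--\eqref{rel4} reverses all products and, after relabelling $j\leftrightarrow k$ in the third relation, produces exactly the three identities of Lemma~\ref{algebra2}. I would replace the coupled induction by this one-line reduction.
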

According to Lemma~\ref{algebra2} and the explicit expression for the wavefunction
\begin{align}
\bra 12\dots N |\psi(\{u\}_N)\ket=\alpha^{N(N-1)/2}\prod_{j=1}^{N} u_j^{M-1},
\end{align} the matrix product representation of the
overlap \eqref{MP2}
reduces to the determinant expression given in \eqref{generaloverlaptwo}.

\begin{example}
The wavefunction \eqref{generaloverlap}
for the configuration $x_j=2j-1$ ($1\le j \le N$) is obtained as follows.
\begin{align}
\bra \Omega|\prod_{j=1}^N C(u_j)|x_1 \cdots x_N \ket
&=\frac{\prod_{j=1}^N (\alpha u_j-u_j^{-1})^M u_j^{2N-1}}
{\prod_{1 \le j < k \le N}
(u_j^2-u_k^2)} \mathrm{det}_N\left[u_j^{-2k} (\alpha-u_j^{-2})^{-(2k-1)}\right] \nn \\
&=\frac{\prod_{j=1}^N (\alpha u_j-u_j^{-1})^{M-2N+1} u_j^{2N-2}}
{\prod_{1 \le j < k \le N} 
(u_j^2-u_k^2)} \mathrm{det}_N\left[ (\alpha u_j-u_j^{-1})^{2(N-k)}\right] \nn \\
&=\prod_{j=1}^N (\alpha u_j-u_j^{-1})^{M-2N+1}
\prod_{1\le j <k \le N} (\alpha^2 u_j^2 u_k^2-1).
\label{tasepoverlap}
\end{align}
From the second line to the third line we used the property of the
Vandermonde determinant.
The formula \eqref{tasepoverlap} for $\alpha=1$ and $M=2N$ recovers our 
former result \cite{MSS2} originally obtained by the Izergin-Korepin approach, i.e.,
deriving and solving 
recursive relations between different sizes of the overlap.
\end{example}

Finally let us show the following summation formulae for the
wavefunctions. 
\begin{theorem}
The off-shell wavefunction $\bra x_1\cdots x_N |\psi(\{v\}_N) 
\ket$ \eqref{generaloverlaptwo} satisfies the following summation formula:
\begin{align}
\sum_{1\le x_1\cdots \le x_N\le M}
\alpha^{M N-\sum_{j=1}^N x_j}
\bra x_1\cdots x_N | \psi(\{v\}_N) \ket
=\prod_{j=1}^N v_j^{M+1}\prod_{1\le j<k\le N}
\frac{1}{v_k^2-v_j^2} \mathrm{det}_N V,
\label{sum-wave1}
\end{align}
where $V$ is an $N\times N$ matrix with the elements
are 
\begin{align}
& V_{jk}=\sum_{m=0}^{j-1}(-1)^m \alpha^{M-m}
\binom{M}{m} v_k^{-2(m-j+1)} \quad (1\le j\le N-1), \nn \\
& V_{Nk}=-\sum_{m=\mathrm{max}(N-1,1)}^M (-1)^m \alpha^{M-m} \binom{M}{m}v_k^{-2(m-N+1)}.
\end{align}
While the dual off-shell wavefunction $\bra \psi(\{u\}_N)|x_1\cdots x_N\ket$
\eqref{generaloverlap} satisfies the following.
\begin{align}
\sum_{1\le x_1\cdots x_N \le N} \alpha^{\sum_{j=1}^N x_j-N}
\bra \psi(\{u\}_N) |x_1\cdots x_N\ket
=\prod_{j=1}^N u_j^{M+1}\prod_{1\le j<k \le N}\frac{1}{u_j^2-u_k^2}
\mathrm{det}_N\widetilde{V}
\label{sum-wave2}
\end{align}
with an $N\times N$ matrix $\widetilde{V}$ whose elements are given by 
\begin{align}
& \widetilde{V}_{jk}=\sum_{m=0}^{N-j}(-1)^m \alpha^{M-m}
\binom{M}{m} u_k^{-2(m+j-N)} \quad (2\le j\le N), \nn \\
& \widetilde{V}_{1k}=-\sum_{m=\mathrm{max}(N-1,1)}^M (-1)^m \alpha^{M-m} \binom{M}{m}u_k^{-2(m-N+1)}.
\label{det}
\end{align}
\end{theorem}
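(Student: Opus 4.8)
The plan is to prove the first identity \eqref{sum-wave1} by inserting the determinant form \eqref{generaloverlaptwo} of the wavefunction directly into the left-hand side and carrying out the constrained configuration sum; the dual identity \eqref{sum-wave2} then follows from the same argument applied to \eqref{generaloverlap}. First I would extract the $x$-independent prefactor $\prod_{j}v_j^{M-1}(\alpha v_j^2-1)^{-1}/\prod_{1\le j<k\le N}(v_k^2-v_j^2)$ from the sum and absorb the weight $\alpha^{MN-\sum_j x_j}$ into the columns of $\mathrm{det}_N(v_j^{2k}(\alpha-v_j^{-2})^{x_k})$. Since the exponent $x_k$ sits only in column $k$, each column factor becomes $v_j^{2k}q_j^{x_k}$ with $q_j:=1-\alpha^{-1}v_j^{-2}$, so that everything reduces, up to the overall constant $\alpha^{MN}$, to evaluating $\Sigma:=\sum_{1\le x_1<\cdots<x_N\le M}\mathrm{det}_N(v_j^{2k}q_j^{x_k})$.

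The heart of the computation is the constrained sum $\Sigma$, which I would perform one coordinate at a time by multilinearity in the columns, summing the smallest free variable first: $x_1\in[1,x_2-1]$, then $x_2\in[2,x_3-1]$, and so on. The key structural fact is the identity $1/(1-q_j)=\alpha v_j^2$, whose factor $v_j^2$ matches exactly the ratio $v_j^{2(k+1)}/v_j^{2k}=v_j^2$ between consecutive columns. Because of this, the geometric sum of column $k$ produces a lower-boundary piece $\alpha v_j^{2(k+1)}q_j^k$, independent of all remaining $x$'s, together with an upper-coupling piece that is precisely $-\alpha$ times column $k+1$; being a scalar (and not merely an entrywise) multiple of a neighbouring column, the latter drops out of the determinant. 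Iterating freezes columns $1,\dots,N-1$ to $\alpha v_j^{2(k+1)}q_j^k$, while the final sum $x_N\in[N,M]$ contributes both a frozen lower-boundary term and a surviving term from the upper endpoint $x_N=M$. Thus $\Sigma$ collapses to a difference $D_1-D_2$ of two explicit, sum-free determinants, in which $D_1$ is essentially a Vandermonde determinant in $w_j:=v_j^2q_j=\alpha^{-1}(\alpha v_j^2-1)$ and $D_2$ carries the endpoint factor $q_j^{M+1}=\alpha^{-(M+1)}(\alpha-v_j^{-2})^{M+1}$ in its last column.

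To recognise the stated matrix $V$, I would then expand the endpoint factor by the binomial theorem, $(\alpha-v_j^{-2})^M=\sum_{m}(-1)^m\binom{M}{m}\alpha^{M-m}v_j^{-2m}$, which is exactly the combination appearing in the entries $V_{jk}$; the truncation of the inner sum at $m=j-1$ for $1\le j\le N-1$ and the complementary form of the last row $V_{Nk}$ encode, respectively, the frozen lower-boundary columns and the single surviving upper-endpoint contribution. After clearing the constant $\alpha^{MN}$ and the factors $\prod_j v_j^2(\alpha v_j^2-1)$ against the original prefactor, one should find that the right-hand side of \eqref{sum-wave1} is reproduced, namely $\prod_j v_j^{M+1}(\prod_{1\le j<k\le N}(v_k^2-v_j^2))^{-1}\mathrm{det}_N V$. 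I have checked that this scheme reproduces \eqref{sum-wave1} in the case $N=1$, where $\sum_{x=1}^M q_1^{x}$ already displays the split into a bulk term and the $(\alpha-v_1^{-2})^M$ boundary term.

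The two preparatory steps are clean once the cancellation ``coupling $=-\alpha\times(\text{next column})$'' has been spotted; I expect the main obstacle to be the final reassembly, namely verifying that the difference $D_1-D_2$ equals the single determinant $\mathrm{det}_N V$ with precisely the stated partial-binomial-sum rows and the special last row $V_{Nk}$. This is elementary linear algebra (row and column operations, or factoring the matrix as a triangular array of binomial coefficients times a Vandermonde), but it is delicate because the binomial truncation pattern must be matched row by row. The dual identity \eqref{sum-wave2} is handled by the same three steps applied to \eqref{generaloverlap}: after absorbing the weight $\alpha^{\sum_j x_j-N}$ the relevant geometric base becomes $(1-\alpha^{-1}u_j^{-2})^{-1}$, so the analogous cancellation $-\alpha\times(\text{adjacent column})$ now appears when one sums from the largest coordinate, and the $M$-th power already present in the prefactor $(\alpha u_j-u_j^{-1})^M=u_j^M(\alpha-u_j^{-2})^M$ supplies, upon binomial expansion of $(\alpha-u_j^{-2})^M$, the coefficients $\binom{M}{m}$ in $\widetilde V$.
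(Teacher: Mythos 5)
Your proposal is correct, but it takes a genuinely different route from the paper. The paper does not touch the configuration sum directly: it observes (graphically) that $\lim_{u_j\to\infty}\prod_{j=1}^N u_j^{-M+1}\,\bra\Omega|\prod_j C(u_j)$ equals the weighted sum $\sum \alpha^{MN-\sum_j x_j}\bra x_1\cdots x_N|$, so the left-hand side of \eqref{sum-wave1} is a degenerate limit of the scalar product $\bra\psi(\{u\}_N)|\psi(\{v\}_N)\ket$; it then evaluates that limit in the already-established determinant formula \eqref{generalscalar} by expanding $d(u_j)$ and applying the confluent-Vandermonde formula \eqref{limiting} for coinciding parameters. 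Your argument instead sums the wavefunction determinant \eqref{generaloverlaptwo} over configurations by hand, using multilinearity in the columns and the cancellation $1/(1-q_j)=\alpha v_j^2$ so that each coupling term is a scalar multiple of the neighbouring column. I checked your scheme in detail: after factoring $\alpha v_j^2 w_j$ (with $w_j=v_j^2q_j$) out of each row, columns $1,\dots,N-1$ reduce to $1,w_j,\dots,w_j^{N-2}$ and the last column to $w_j^{N-1}-v_j^{2(N-1)}q_j^M$; column operations against the span of $\{1,v_j^2,\dots,v_j^{2(N-2)}\}$ then turn the last column into $\alpha^{-M}V_{N\cdot}$ and the first $N-1$ rows of $V$ into $\alpha^M v_k^{2(j-1)}$ plus lower-degree corrections, and the prefactors recombine to $\prod_j v_j^{M+1}$ exactly as claimed. (Two small points: the sum is over strictly increasing configurations $x_1<\cdots<x_N$, and the $m=N-1$ term of $V_{Nk}$, being constant in $v_k$, is absorbed by the first row, which is why your reassembly matches the stated truncation pattern.) What each approach buys: the paper's proof is short and conceptually transparent once Theorem \ref{determinantthm} is in hand, and exhibits the summation formula as a limit of the Cauchy-type pairing; yours is self-contained, uses only Theorem \ref{overlapthm} and elementary column operations, and avoids the Izergin--Korepin machinery of Section 3 entirely.
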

\begin{proof}
By the graphical description, it can be easily shown that
\begin{align}
&\bra \Omega | \lim_{u_j\to\infty} \prod_{j=1}^N u_j^{-M+1} C(u_j)
=\sum_{1\le x_1\cdots x_N\le M}\alpha^{MN-\sum_{j=1}^N x_j}
\bra x_1\cdots x_N|,  \nn \\
&\lim_{v_j\to\infty} \prod_{j=1}^N v_j^{-M+1} B(v_j) |\Omega\ket=
\sum_{1\le x_1\cdots x_N\le M} \alpha^{\sum_{j=1}^N x_j-N}
|x_1\cdots x_M\ket.
\end{align}
By substituting them into the determinant representation of the scalar product
\eqref{generalscalar}, one sees that the resultant expressions 
coincide with \eqref{sum-wave1} and \eqref{sum-wave2}. Here the 
limiting procedure $u_j\to \infty$ and $v_j\to \infty$ for $1\le j \le N$
in the scalar product \eqref{generalscalar} can be taken by expanding 
$d(u_j)=(u_j-\alpha u_j^{-1})^M$ and $d(v_k)=(v_k-\alpha v_k^{-1})^M$ in
the numerator of the elements for the determinant \eqref{element}, dividing
the numerator by the denominator and then making use of the formula:
\begin{align}
\lim_{u_j\to u}
\frac{\mathrm{det}_N[\Phi(u_j,v_k)]}{\prod_{1\le j < k \le N}(u_k-u_j)}
=\mathrm{det}_N\left[\frac{1}{(j-1)!}\left(\frac{\der}{\der u}\right)^{j-1}
\Phi(u,v_k)\right],
\label{limiting}
\end{align}
where $\Phi(u,v)$ is $(N-1)$-times differentiable functions of $u$.
\end{proof}
Setting $\alpha=1$, one finds that \eqref{sum-wave1} and \eqref{sum-wave2} 
recover the formula obtained in \cite{Bo}\footnote{Note that
there exists a misprint in the index of the summation of the
elements of the determinant corresponding to \eqref{det}}.
\section{Grothendieck polynomials and Cauchy identity}
%
The wavefunctions \eqref{generaloverlaptwo}  and \eqref{generaloverlap}
play a key role to analyze physically interesting
quantities such as Green functions.  Because the operators $B$
(or $C$) in \eqref{commutation} mutually commute, the wavefunctions 
(and the corresponding Green functions) can be described by some symmetric 
polynomials of $\{v\}_N$ (or $\{u\}_N$).
In this section, we show that the wavefunctions for generic value of $\alpha$ 
are written as Grothendieck polynomial which is a one-parameter 
deformation of  Schur polynomial.
Combining the completeness relation
and the determinant form of the scalar product \eqref{generalscalar},
one obtains the Cauchy identity of the Grothendieck polynomials.

Let us first define the Grothendieck polynomials.
\begin{definition}
The Grothendieck polynomial is defined to be the
following determinant \cite{IN}:
\begin{align}
G_\lambda(\bs{z};\beta)=
   \frac{\mathrm{det}_N(z_j^{\lambda_k+N-k}(1+\beta z_j)^{k-1})}
        {\prod_{1 \le j < k \le N}(z_j-z_k)},
 \label{GR}
\end{align}
where $\bs{z}=\{z_1,\dots,z_N\}$ is a set of variables and
 $\lambda$ denotes a Young diagram
$\lambda=(\lambda_1,\lambda_2,\dots,\lambda_N)$ with weakly decreasing
nonnegative integers $\lambda_1 \ge \lambda_2 \ge \dots \ge \lambda_N \ge 0$.
For our purpose, we further define the ``dual"  Grothendieck polynomial
(we discuss the orthogonality of the original and the  dual
Grothendieck polynomials later)
\begin{align}
\overline{G}_\lambda(\bs{z};\beta)=
\frac{\mathrm{det}_N(z_j^{\lambda_k+N-k}(1+\beta z_j^{-1})^{1-k})}
       {\prod_{1 \le j < k \le N}(z_j-z_k)}
\label{dualGR}.
\end{align}
\end{definition}
The Grothendieck polynomial \eqref{GR} and its dual version \eqref{dualGR}
can be regarded as a one-parameter deformation of the Schur polynomial, since 
they reduce to the Schur polynomial $s_\lambda(\bs{z})$ by taking
the parameter $\beta$ to be zero:
\begin{align}
G_\lambda(\bs{z};0)=\overline{G}_\lambda(\bs{z};0)=
s_\lambda(\bs{z}).
\label{Gr-Sch}
\end{align}
The Grothendieck polynomial was originally introduced in \cite{LS} as
polynomial representatives of Schubert classes in the
Grothendieck ring of the flag manifold.
From its origin, there are geometric studies \cite{Lascoux,Buch} related
to Schubert calculus, and also combinatorial ones
\cite{FK,LRS,BKSTY} as they are some classes
of symmetric polynomials.
However, it was shown very recently \cite{IN,IS} that
Grothendieck polynomials can be expressed in 
the determinant from \eqref{GR} (they moreover extended
the determinant representation to factorial Grothendieck polynomials
\cite{McNamara} originally defined in terms of
set-valued semi-standard tableaux). We take the determinant form
\eqref{GR} as the definition of the Grothendieck polynomials  in this paper.

Noticing that there exists one-to-one correspondence between the 
particle configuration $\{x_1,\dots,x_N\}$ ($1\le x_1 <\dots <x_N\le M$)
and the Young diagram 
$\lambda=(\lambda_1,\dots,\lambda_N) \subseteq (M-N)^N$
(which means $M-N\ge \lambda_1\ge \dots \ge \lambda_N\ge 0$), i.e. 
$\lambda_j=x_{N-j+1}-N+j-1$, one finds that the wavefunctions 
\eqref{generaloverlaptwo} and 
\eqref{generaloverlap} can be expressed as  Grothendieck polynomials \eqref{GR}.
\begin{lemma}\label{over-Gr}
By inserting the relation $\lambda_j=x_{N-j+1}-N+j-1$ and setting 
\begin{align}
z_j=\alpha-v_j^{-2}, \quad y_j^{-1}=\alpha-u_j^{-2}, \quad \beta=-1/\alpha,
\end{align}
the wavefunctions \eqref{generaloverlaptwo} and 
\eqref{generaloverlap} can, respectively, 
be expressed as the  Grothendieck polynomials \eqref{GR} and its dual 
version \eqref{dualGR}:
\begin{align}
&\bra x_1\dots x_N|\psi(\{v\}_N)\ket=\alpha ^{N(N-1)/2}\prod_{j=1}^N v_j^{M-1} 
G_\lambda(\bs{z};\beta), \nn \\
& \bra \psi(\{u\}_N)|x_1\dots x_N \ket=\alpha ^{N(N-1)/2}
\prod_{j=1}^N u_j^{M-1} y_j^{-M+N}(1+\beta y_j^{-1})^{N-1} 
\overline{G}_{\lambda} (\bs{y};\beta).
\label{wavefunction}
\end{align}
\end{lemma}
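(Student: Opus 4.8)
The plan is to prove both identities by direct manipulation of the determinant formulae of Theorem~\ref{overlapthm}, transforming them into the numerators of $G_\lambda$ and $\overline{G}_\lambda$ under the stated change of variables. I would treat \eqref{generaloverlaptwo} first and obtain \eqref{generaloverlap} by the same scheme. The engine of the whole computation is a pair of elementary identities implied by the substitution: writing $z_j=\alpha-v_j^{-2}$ and $\beta=-1/\alpha$, one checks at once that $\alpha v_j^2-1=v_j^2 z_j$ and, crucially, $1+\beta z_j=(\alpha v_j^2)^{-1}$, so that $v_j^{2}=\alpha^{-1}(1+\beta z_j)^{-1}$. The analogous relations $1+\beta y_j^{-1}=(\alpha u_j^2)^{-1}$ and $\alpha u_j-u_j^{-1}=u_j y_j^{-1}$ govern the dual case.

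First I would rewrite $\mathrm{det}_N(v_j^{2k}(\alpha-v_j^{-2})^{x_k})$ using these identities, so that its $(j,k)$ entry becomes $\alpha^{-k}(1+\beta z_j)^{-k}z_j^{x_k}$, and pull the column scalar $\alpha^{-k}$ out as $\alpha^{-N(N+1)/2}$. The configuration--partition dictionary $\lambda_j=x_{N-j+1}-N+j-1$ is exactly the statement that reversing the column order (contributing a sign $(-1)^{N(N-1)/2}$) sends $z_j^{x_k}$ to $z_j^{\lambda_k+N-k+1}$ while turning $(1+\beta z_j)^{-k}$ into $(1+\beta z_j)^{k-N-1}$. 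The crucial observation is that after this reversal the ratio of each entry to the corresponding Grothendieck entry $z_j^{\lambda_k+N-k}(1+\beta z_j)^{k-1}$ equals $z_j(1+\beta z_j)^{-N}$, which depends only on the row index $j$; hence it factors out of the determinant as $\prod_j z_j(1+\beta z_j)^{-N}$, leaving precisely the numerator of $G_\lambda(\bs{z};\beta)$.

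It then remains to reconcile the two Vandermonde denominators. Here I would use $z_j-z_k=(v_j^2-v_k^2)/(v_j^2 v_k^2)$ to write $\prod_{1\le j<k\le N}(z_j-z_k)=(-1)^{N(N-1)/2}\prod_{1\le j<k\le N}(v_k^2-v_j^2)\big/\prod_{j}v_j^{2(N-1)}$, so that the denominator $\prod_{j<k}(v_k^2-v_j^2)$ of \eqref{generaloverlaptwo} cancels against the Grothendieck denominator and the two reversal signs annihilate. Collecting the surviving scalar $\alpha^{-N(N+1)/2}$, the row factors $\prod_j z_j(1+\beta z_j)^{-N}=\prod_j z_j\,\alpha^N v_j^{2N}$, the prefactor $\prod_j v_j^{M-1}(\alpha v_j^2-1)^{-1}=\prod_j v_j^{M-3}z_j^{-1}$, and the Vandermonde conversion factor $\prod_j v_j^{-2(N-1)}$, every power of $z_j$ cancels and the powers of $\alpha$ and $v_j$ combine to give exactly $\alpha^{N(N-1)/2}\prod_j v_j^{M-1}$. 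The dual identity follows line by line: pulling $\alpha^{k}$ out of $\mathrm{det}_N(u_j^{-2k}(\alpha-u_j^{-2})^{-x_k})$, reversing columns, and factoring the row-only ratio $y_j(1+\beta y_j^{-1})^{N}$ reproduces the numerator of $\overline{G}_\lambda(\bs{y};\beta)$; the analogous conversion $\prod_{j<k}(y_j-y_k)=(-1)^{N(N-1)/2}\prod_{j<k}(u_j^2-u_k^2)\prod_j y_j^{N-1}\big/\prod_j u_j^{2(N-1)}$ then yields $\alpha^{-N(N-1)/2}\prod_j u_j^{M-2N+1}y_j^{-M+N}$, which I would repackage using $1+\beta y_j^{-1}=(\alpha u_j^2)^{-1}$ into the stated form $\alpha^{N(N-1)/2}\prod_j u_j^{M-1}y_j^{-M+N}(1+\beta y_j^{-1})^{N-1}$.

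I expect the only genuine difficulty to be bookkeeping: there is no conceptual hurdle beyond the single observation that the entrywise ratio to the Grothendieck matrix becomes row-diagonal after column reversal, and the main risk is miscounting signs or exponents when reconciling the Vandermonde factors and assembling the normalization. I would guard against this by verifying the total degree in $\alpha$ and in each $v_j$ (resp.\ $u_j$ and $y_j$) against the claimed prefactor at the end of each computation.
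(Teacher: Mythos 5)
Your proposal is correct: the substitutions $1+\beta z_j=(\alpha v_j^2)^{-1}$ and $1+\beta y_j^{-1}=(\alpha u_j^2)^{-1}$, the column reversal implementing $\lambda_k=x_{N-k+1}-N+k-1$ with its row-diagonal entrywise ratio, and the Vandermonde conversions all check out, and the exponent bookkeeping (e.g.\ $\alpha^{-N(N+1)/2}\cdot\alpha^{N^2}=\alpha^{N(N-1)/2}$ in the first case, and the repackaging of $\alpha^{-N(N-1)/2}\prod_j u_j^{M-2N+1}y_j^{-M+N}$ in the second) reproduces the stated prefactors exactly. This is precisely the direct verification the paper leaves implicit when it asserts the lemma from the configuration--partition correspondence, so your route coincides with the paper's.
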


The Cauchy identity holding for the Schur polynomials can be extended to that for
the Grothendieck polynomials.
\begin{theorem} The following identity holds true for the Grothendieck polynomials
\eqref{GR} and \eqref{dualGR}.
\begin{align}
&\sum_{\lambda \subseteq
(M-N)^N}G_\lambda(\bs{z};\beta)
\overline{G}_\lambda(\bs{y};\beta)
\nonumber \\
&\qquad =\prod_{1 \le j < k \le N}\frac{1}{(z_j-z_k)(y_j-y_k)}
\mathrm{det}_N \left[ \frac{(z_j y_k)^{M}-
\left\{(1+\beta z_j)/(1+\beta y_k^{-1})\right\}^{N-1}}
{z_j y_k-1} \right].
\label{generalizedcauchy}
\end{align}
The usual Cauchy identity holding for the Schur polynomials is recovered
by taking $\beta=0$.
\end{theorem}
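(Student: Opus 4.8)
The plan is to evaluate the scalar product $\bra\psi(\{u\}_N)|\psi(\{v\}_N)\ket$ in two independent ways and equate the results. On one hand, I would insert the resolution of the identity on the $N$-particle sector, $\sum_{1\le x_1<\dots<x_N\le M}|x_1\dots x_N\ket\bra x_1\dots x_N|$, between the two state vectors, factorizing the scalar product into $\sum \bra\psi(\{u\}_N)|x_1\dots x_N\ket\,\bra x_1\dots x_N|\psi(\{v\}_N)\ket$. Using the bijection $\lambda_j=x_{N-j+1}-N+j-1$ between configurations and Young diagrams $\lambda\subseteq(M-N)^N$, together with Lemma~\ref{over-Gr}, each summand is the product of $G_\lambda(\bs z;\beta)$ and $\overline G_\lambda(\bs y;\beta)$ times configuration-independent prefactors, so the sum reproduces the left-hand side of \eqref{generalizedcauchy} multiplied by the overall factor $\alpha^{N(N-1)}\prod_j u_j^{M-1}v_j^{M-1}y_j^{-M+N}(1+\beta y_j^{-1})^{N-1}$ coming from \eqref{wavefunction}.

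On the other hand, I would invoke the off-shell determinant form \eqref{generalscalar}--\eqref{element} of the very same scalar product. Equating the two evaluations expresses $\sum_\lambda G_\lambda(\bs z;\beta)\overline G_\lambda(\bs y;\beta)$ as $\mathrm{det}_N Q$ divided by the relevant prefactors, after which everything must be rewritten in the $(z,y,\beta)$ variables. The key algebraic identities, all immediate from $z_j=\alpha-v_j^{-2}$, $y_j^{-1}=\alpha-u_j^{-2}$, $\beta=-1/\alpha$, are $\alpha v_k-v_k^{-1}=v_k z_k$ and $\alpha u_j-u_j^{-1}=u_j y_j^{-1}$, so that in the homogeneous limit $d(v_k)=(v_k z_k)^M$ and $d(u_j)=(u_j y_j^{-1})^M$. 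The denominator of \eqref{element} converts through $v_k^2-u_j^2=(z_k y_j-1)\,v_k^2u_j^2/y_j$, while the Vandermonde products go over via $v_k^2-v_j^2=(z_k-z_j)v_j^2v_k^2$ and $u_j^2-u_k^2=(y_k-y_j)u_j^2u_k^2/(y_jy_k)$.

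After these substitutions the matrix element of $Q$ becomes proportional to $\{z_k^M v_k^{2(N-1)}-y_j^{-M}u_j^{2(N-1)}\}/(z_k y_j-1)$. I would then pull $u_j^{M-1}y_j$ out of each row and $v_k^{M-1}$ out of each column, transpose the determinant, and finally rescale row $j$ by $(1+\beta z_j)^{N-1}$ and column $k$ by $y_k^M$. Using $v^{2(N-1)}=(-\beta)^{N-1}(1+\beta z)^{-(N-1)}$ and $u^{2(N-1)}=(-\beta)^{N-1}(1+\beta y^{-1})^{-(N-1)}$, this turns each entry precisely into $\{(z_jy_k)^M-[(1+\beta z_j)/(1+\beta y_k^{-1})]^{N-1}\}/(z_jy_k-1)$, reproducing the determinand on the right of \eqref{generalizedcauchy}.

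I expect the main obstacle to be purely the bookkeeping of prefactors: one must carefully track the powers of $u_j,v_j,y_j,(1+\beta z_j),(1+\beta y_k^{-1})$, the sign $(-1)^{N(N-1)/2}$ produced by each reversed Vandermonde, and the competing powers of $\alpha$ (from $\alpha^{N(N-1)}$ in \eqref{wavefunction}) and $\beta$ (from the $v^{2(N-1)},u^{2(N-1)}$ conversions). The decisive check is that these all collapse — in particular the $u_j$, $v_j$, and $y_j$ powers cancel completely and the scalars combine as $\alpha^{-2N(N-1)}\beta^{-2N(N-1)}=1$ since $\alpha=-1/\beta$ — leaving exactly $\mathrm{det}_N[\,\cdots\,]\big/\prod_{1\le j<k\le N}(z_j-z_k)(y_j-y_k)$ with no residual dependence on $u,v,\alpha$, as it must, the left-hand side depending only on $z,y,\beta$. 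Setting $\beta=0$ reduces $G_\lambda=\overline G_\lambda=s_\lambda$ by \eqref{Gr-Sch} and recovers the classical Cauchy identity.
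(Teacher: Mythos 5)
Your proposal is correct and follows exactly the paper's own route: insert the completeness relation into $\bra\psi(\{u\}_N)|\psi(\{v\}_N)\ket$, identify each term via Lemma~\ref{over-Gr}, and equate with the off-shell determinant formula \eqref{generalscalar}--\eqref{element}. The change-of-variables bookkeeping you spell out (e.g. $v_k^2-u_j^2=(z_ky_j-1)u_j^2v_k^2/y_j$, $v^2=-\beta/(1+\beta z)$, and the collapse of all $u,v,y,\alpha$ powers via $\alpha\beta=-1$) checks out and simply makes explicit what the paper leaves implicit.
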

\begin{proof}
First, substituting the completeness relation, one decomposes 
the scalar product as
\begin{align}
\bra \psi(\{u\}_N)|\psi(\{v\}_N)\ket=\sum_{1\le x_1<\dots<x_N\le M}
\bra \psi(\{u\}_N)|x_1\dots x_N\ket\bra x_1\dots x_N |\psi(\{v\}_N)\ket.
\end{align}
Then substituting  the determinant representation for the 
scalar product \eqref{generalscalar} into the RHS of the
above and utilizing the relations in Lemma~\ref{over-Gr}
yields the one-parameter deformation of the Cauchy identity 
\eqref{generalizedcauchy}. 
\end{proof}
Taking $M\to \infty$, one has the following identity.
\begin{corollary} 
\begin{align}
\sum_{\lambda}G_\lambda(\bs{z};\beta)
\overline{G}_\lambda(\bs{y};\beta)=
\prod_{j=1}^N\left(\frac{1+\beta z_j}{1+\beta y_j^{-1}}\right)^{N-1}
\prod_{j,k=1}^N\frac{1}{1-z_j y_k},
\end{align}
where the sum is  over all Young diagram of shape 
$\lambda=(\lambda_1,\dots,\lambda_N)$. Taking $\beta=0$, the
well-known Cauchy identity for the Schur functions is recovered
\begin{align}
\sum_{\lambda}s_\lambda(\bs{z})
s_\lambda(\bs{y})=
\prod_{j,k=1}^N\frac{1}{1-z_j y_k}.
\end{align}
\end{corollary}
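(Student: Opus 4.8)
The plan is to obtain this identity as the $M\to\infty$ limit of the generalized Cauchy identity \eqref{generalizedcauchy} just established, so there is no need to reopen the combinatorics. First I would examine the left-hand side. Since the summand $G_\lambda(\bs{z};\beta)\overline{G}_\lambda(\bs{y};\beta)$ does not depend on $M$, the finite sum over $\lambda\subseteq(M-N)^N$ is merely a partial sum of a fixed infinite series; as $M\to\infty$ the constraint $\lambda_1\le M-N$ on the largest part is lifted, so the index set exhausts all Young diagrams with at most $N$ rows. Hence the left-hand side tends to $\sum_\lambda G_\lambda(\bs{z};\beta)\overline{G}_\lambda(\bs{y};\beta)$, the sum being over all $\lambda=(\lambda_1,\dots,\lambda_N)$.

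Next I would take the limit on the right-hand side. Imposing the convergence condition $|z_j y_k|<1$ for all $j,k$ --- the natural domain dictated by the factor $\prod_{j,k}(1-z_j y_k)^{-1}$ in the target --- the term $(z_j y_k)^{M}$ in each matrix entry tends to zero, so the $(j,k)$ element of the determinant approaches $\{(1+\beta z_j)/(1+\beta y_k^{-1})\}^{N-1}/(1-z_j y_k)$. The key observation is that this prefactor splits into a purely row-dependent piece $(1+\beta z_j)^{N-1}$ and a purely column-dependent piece $(1+\beta y_k^{-1})^{-(N-1)}$. By multilinearity of the determinant I would extract these from each row and column, leaving the Cauchy determinant $\mathrm{det}_N[1/(1-z_j y_k)]$.

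To finish, I would invoke the Cauchy determinant formula $\mathrm{det}_N[1/(1-z_j y_k)]=\prod_{1\le j<k\le N}(z_j-z_k)(y_j-y_k)\big/\prod_{j,k=1}^N(1-z_j y_k)$ (the same evaluation already used for Property 4 in the proof of Theorem~\ref{determinantthm}). Its Vandermonde numerator cancels exactly against the prefactor $\prod_{j<k}\{(z_j-z_k)(y_j-y_k)\}^{-1}$ in \eqref{generalizedcauchy}, while the extracted row/column factors combine into $\prod_{j=1}^N\{(1+\beta z_j)/(1+\beta y_j^{-1})\}^{N-1}$, reproducing the stated right-hand side. The $\beta=0$ specialization is then immediate from $G_\lambda=\overline{G}_\lambda=s_\lambda$ in \eqref{Gr-Sch}, which collapses the extra factor to unity and recovers the classical Cauchy identity.

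I expect the only genuine difficulty to be analytic rather than algebraic: justifying the interchange of the $M\to\infty$ limit with the infinite summation, and pinning down the precise region of validity. Because the Grothendieck polynomials are inhomogeneous and the dual polynomials carry negative powers of the $y_j$, a careful argument would estimate the growth of $G_\lambda(\bs{z};\beta)\overline{G}_\lambda(\bs{y};\beta)$ in $|\lambda|$ and establish absolute convergence of the series on the polydisc $|z_j y_k|<1$; once that is in hand the termwise passage to the limit is legitimate and everything else is a routine determinant manipulation.
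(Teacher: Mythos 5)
Your proposal is correct and follows exactly the paper's route: the paper obtains this corollary simply by letting $M\to\infty$ in the generalized Cauchy identity \eqref{generalizedcauchy}, and your detailed steps (dropping $(z_jy_k)^M$ on the polydisc $|z_jy_k|<1$, extracting the row/column factors, and applying the Cauchy determinant evaluation) are the natural fleshing-out of that one-line argument. Your closing remark about justifying the interchange of limit and summation is a reasonable point of rigor that the paper itself does not address.
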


We also list the summation formulae for the Grothendieck
polynomials, which are obtained by inserting \eqref{wavefunction}
into \eqref{sum-wave1} and \eqref{sum-wave2}.
\begin{theorem}
The following summation formula is valid for the 
Grothendieck polynomials \eqref{GR}.
\begin{align}
&\sum_{\lambda\subseteq(M-N)^N}(-\beta)^{\sum_{j=1}^N\lambda_j}
G_{\lambda}(\bs{z};\beta)=\prod_{1\le j<k\le N}\frac{1}{z_k-z_j}
\mathrm{det}_N V^{(M)}
\label{sum-Gr}
\end{align}
with an $N\times N$ matrix $V^{(M)}$ whose matrix elements are
\begin{align}
&
V_{jk}^{(M)}=\sum_{m=0}^{j-1}(-1)^m (-\beta)^{j-N}
\binom{M}{m}(1+\beta z_k)^{m-j+N-1}
\quad (1\le j \le N-1), \nn \\
&
V_{Nk}^{(M)}=-\sum_{m=\mathrm{max}(N-1,1)}^M (-1)^m \binom{M}{m} (1+\beta z_k)^{m-1}.
\end{align}
While the dual Grothendieck polynomials \eqref{dualGR} satisfy
\begin{align}
\sum_{\lambda\subseteq(M-N)^N} (-\beta)^{-\sum_{j=1}^N \lambda_j}
\overline{G}_\lambda(\bs{y};\beta)
=\prod_{j=1}^N y_j^{M-1}\prod_{1\le j<k \le N}\frac{1}{y_k-y_j}
\mathrm{det}_N\widetilde{V}^{(M)},
\end{align}
where an $\widetilde{V}^{(M)}$ is an $N\times N$ matrix 
whose elements are given by 
\begin{align}
& \widetilde{V}_{jk}^{(M)}=\sum_{m=0}^{N-j}(-1)^m (-\beta)^{-j+1-M+N}
\binom{M}{m} (1+\beta y_k^{-1})^{m+j-N-1} \quad (2\le j\le N), \nn \\
& \widetilde{V}_{1k}^{(M)}=-\sum_{m=\mathrm{max}(N-1,1)}^M (-1)^m (-\beta)^{-M+N} 
\binom{M}{m}(1+\beta y_k^{-1})^{m-N}.
\end{align}
\end{theorem}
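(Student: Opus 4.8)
The plan is to derive both identities as a pure change of variables from the already-established wavefunction summation formulae \eqref{sum-wave1} and \eqref{sum-wave2}, using the dictionary of Lemma~\ref{over-Gr}. I focus on the first identity \eqref{sum-Gr}; the dual is entirely parallel, starting from \eqref{sum-wave2} and the second line of \eqref{wavefunction}. First I would invert the first relation of \eqref{wavefunction} to write $G_\lambda(\bs{z};\beta)=\alpha^{-N(N-1)/2}\prod_{j=1}^N v_j^{-(M-1)}\bra x_1\cdots x_N|\psi(\{v\}_N)\ket$ under the bijection $\lambda_j=x_{N-j+1}-N+j-1$ between $\lambda\subseteq(M-N)^N$ and configurations $1\le x_1<\cdots<x_N\le M$, so that the sum over $\lambda$ becomes a sum over $\{x\}$. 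The only nontrivial point here is to match the weights: since $\beta=-1/\alpha$ and $\sum_j\lambda_j=\sum_j x_j-N(N+1)/2$, one has $(-\beta)^{\sum_j\lambda_j}=\alpha^{N(N+1)/2-MN}\,\alpha^{MN-\sum_j x_j}$, so that up to a configuration-independent power of $\alpha$ the Grothendieck weight is exactly the weight $\alpha^{MN-\sum_j x_j}$ appearing in \eqref{sum-wave1}.

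Next I would substitute the proved right-hand side of \eqref{sum-wave1}. This turns the left-hand side of \eqref{sum-Gr} into $\alpha^{N(1-M)}\prod_j v_j^{2}\prod_{1\le j<k\le N}(v_k^2-v_j^2)^{-1}\mathrm{det}_N V$, and it remains to convert every factor to the variables $z_j=\alpha-v_j^{-2}$. Using $v_k^2-v_j^2=(z_k-z_j)/[(\alpha-z_k)(\alpha-z_j)]$ and $v_j^{-2}=\alpha(1+\beta z_j)$, the Vandermonde-type and power prefactors collapse to $\prod_{1\le j<k\le N}(z_k-z_j)^{-1}$ multiplied by explicit powers of $\alpha$ and of $(\alpha-z_k)$, which I would carry along and cancel at the end.

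The crux, and the step I expect to be the main obstacle, is to recognize that the matrix $V$ of \eqref{sum-wave1} and the target matrix $V^{(M)}$ of \eqref{sum-Gr} are related by a row-and-column rescaling. Inserting $v_k^{-2(m-j+1)}=[\alpha(1+\beta z_k)]^{m-j+1}$ into $V_{jk}$ and extracting the common core sum $\sum_m(-1)^m\binom{M}{m}(1+\beta z_k)^{m-j}$ shared by both matrices, one finds for every $j$ (including the exceptional last row) the factorization
\begin{align}
V_{jk}=\alpha^{M-j+1}(-\beta)^{N-j}(1+\beta z_k)^{2-N}\,V^{(M)}_{jk}.
\end{align}
Because the prefactor splits into a row factor $\alpha^{M-j+1}(-\beta)^{N-j}$ and a column factor $(1+\beta z_k)^{2-N}$, these pull out of the determinant to give $\mathrm{det}_N V=\alpha^{N(M+1-N)}\prod_k(1+\beta z_k)^{2-N}\mathrm{det}_N V^{(M)}$. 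Verifying this factorization row by row — in particular checking that the differing summation ranges and the anomalous last-row sign in $V_{Nk}$ and $V^{(M)}_{Nk}$ are compatible — is the one genuinely computational piece. Finally I would collect all powers of $\alpha$ and of $(\alpha-z_k)$ accumulated in the previous two paragraphs together with those just produced; by the bookkeeping above they cancel identically, leaving exactly $\prod_{1\le j<k\le N}(z_k-z_j)^{-1}\mathrm{det}_N V^{(M)}$, which is \eqref{sum-Gr}. The dual identity follows by the same three moves applied to \eqref{sum-wave2}, with $\alpha^{\sum_j x_j-N}$ playing the role of the weight and the extra factor $\prod_j u_j^{M-1}y_j^{-M+N}(1+\beta y_j^{-1})^{N-1}$ from \eqref{wavefunction} absorbed into the analogous rescaling of $\widetilde V$ into $\widetilde V^{(M)}$.
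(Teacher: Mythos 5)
Your proposal is correct and follows exactly the paper's route: the paper obtains this theorem by inserting the wavefunction--Grothendieck dictionary \eqref{wavefunction} into the wavefunction summation formulae \eqref{sum-wave1} and \eqref{sum-wave2}, which is precisely your change of variables. Your bookkeeping (the weight match $(-\beta)^{\sum_j\lambda_j}=\alpha^{N(N+1)/2-MN}\alpha^{MN-\sum_j x_j}$, the prefactor conversion via $v_k^2-v_j^2=(z_k-z_j)/[(\alpha-z_k)(\alpha-z_j)]$, and the row/column rescaling $V_{jk}=\alpha^{M-j+1}(-\beta)^{N-j}(1+\beta z_k)^{2-N}V^{(M)}_{jk}$ valid for all rows including $j=N$) checks out, with all powers of $\alpha$ and $(\alpha-z_k)$ cancelling as claimed.
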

Finally, we discuss the orthogonality of the Grothendieck polynomials and
dual Grothendieck polynomials.
We now impose the periodic boundary condition
on the model, i.e.,
suppose that the spectral parameters $\{ z \}_N$ satisfy the
Bethe ansatz equations
\begin{align}
(1+\beta z_k)^N+(-1)^N z_k^M \prod_{j=1}^N (1+\beta z_j)=0
\quad (1\le k \le N).
\label{BAEintermsofz}
\end{align}
We insert into 
$\langle x_1 \cdots x_N|x_1^\prime \cdots x_N^\prime \rangle=
\prod_{j=1}^N \delta_{x_j x_j^\prime}$
the completeness of Bethe states
\begin{align}
I=\sum_{\{ u \}_N}
\frac{| \psi(\{ u \}_N) \rangle \langle \psi(\{ u \}_N)|}
{\langle \psi(\{ u \}_N)| \psi(\{ u \}_N) \rangle},
\end{align}
where the summation is over all of the solutions of the
Bethe ansatz equations.
We have
\begin{align}
\sum_{\{ u \}_N}
\frac{\langle x_1 \cdots x_N| \psi(\{ u \}_N)
\rangle \langle \psi(\{ u \}_N) |x_1^\prime \cdots x_N^\prime \rangle}
{\langle \psi(\{ u \}_N)| \psi(\{ u \}_N) \rangle}=\prod_{j=1}^N \delta_{x_j x_j^\prime},
\label{BAE-z}
\end{align}
which, with the use of the expressions
\eqref{norm} and \eqref{wavefunction},
can be translated to the following orthogonality relation
between the Grothendieck polynomials and
the  dual Grothendieck polynomials.
\begin{theorem}
The following orthogonality relation
between the Grothendieck polynomials and
the dual Grothendieck polynomials holds.
\begin{align}
\sum_{\{z \}_N}w(\{ z \}_N) 
\overline{G}_\lambda(\bs{z}^{-1};\beta)
G_\mu(\bs{z};\beta)=\delta_{\lambda \mu},
\label{orthogonal}
\end{align}
where the summation is over the all of the solutions
of the Bethe ansatz equation \eqref{BAEintermsofz},
and the weight $w(\{ z \}_N)$ given by
\begin{align}
w(\{ z \}_N)=\Bigg(1+\sum_{j=1}^N 
\frac{\beta z_j}{M+(M-N) \beta z_j}
 \Bigg)^{-1}
\prod_{\substack{j,k=1 \\ j \neq k }}^N (z_j-z_k)
\prod_{j=1}^N \frac{z_j^{1-N}(1+\beta z_j)}
{M+(M-N) \beta z_j}.
\end{align}
\end{theorem}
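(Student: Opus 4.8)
The plan is to start from the identity already assembled in the text, the expansion \eqref{BAE-z} of $\langle x_1\cdots x_N|x_1'\cdots x_N'\rangle=\prod_j\delta_{x_jx_j'}$ obtained by inserting the completeness of on-shell Bethe states, and to translate every factor into the variables $\{z\}_N$. Concretely, I would substitute the two wavefunction formulae of Lemma~\ref{over-Gr} into \eqref{BAE-z}: the factor $\langle x_1\cdots x_N|\psi(\{u\}_N)\rangle$ contributes $\alpha^{N(N-1)/2}\prod_j u_j^{M-1}\,G_\mu(\bs{z};\beta)$ with $z_j=\alpha-u_j^{-2}$ and $x\leftrightarrow\mu$, while $\langle\psi(\{u\}_N)|x_1'\cdots x_N'\rangle$ contributes $\alpha^{N(N-1)/2}\prod_j u_j^{M-1}y_j^{-M+N}(1+\beta y_j^{-1})^{N-1}\,\overline{G}_\lambda(\bs{y};\beta)$ with $y_j^{-1}=\alpha-u_j^{-2}$ and $x'\leftrightarrow\lambda$. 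The crucial observation is that both wavefunctions are built from the same Bethe root $u_j$, so $y_j^{-1}=z_j$; hence $y_j=z_j^{-1}$ and $\overline{G}_\lambda(\bs{y};\beta)=\overline{G}_\lambda(\bs{z}^{-1};\beta)$, which is exactly the argument appearing in \eqref{orthogonal}. The denominator is the on-shell norm \eqref{norm}, for which the closed evaluation of $\mathrm{det}_N\widetilde{Q}$ is already available.

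Next I would carry out the change of variables $u_j^{-2}=\alpha-z_j$, $\alpha=-1/\beta$, $1+\beta z_j=u_j^{-2}/\alpha$ throughout the prefactors and the norm. The two elementary identities that do all the work are $u_j^2-u_k^2=(z_j-z_k)/[(\alpha-z_j)(\alpha-z_k)]$, which converts the factor $\prod_{j\neq k}(u_j^2-u_k^2)^{-1}$ of the norm into the Vandermonde product $\prod_{j\neq k}(z_j-z_k)$ times powers of $(\alpha-z_j)$, and $\alpha N+(M-N)u_j^{-2}=-\beta^{-1}(M+(M-N)\beta z_j)$, which turns the product $\prod_j\frac{\alpha N+(M-N)u_j^{-2}}{\alpha-u_j^{-2}}$ and the correction term $1-\sum_j\frac{\alpha-u_j^{-2}}{\alpha N+(M-N)u_j^{-2}}$ in $\mathrm{det}_N\widetilde{Q}$ respectively into $\prod_j\frac{M+(M-N)\beta z_j}{z_j}$ and $1+\sum_j\frac{\beta z_j}{M+(M-N)\beta z_j}$. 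Collecting the exponents of $z_j$, of $1+\beta z_j$ and the bare powers of $\alpha=-1/\beta$, one finds that the power of $1+\beta z_j$ collapses to $1$, that all the constant factors cancel (the surviving signs combine to $(-1)^{2N}=1$ and the $\beta$-exponents sum to zero because $N(N-1)$ is even), and that the whole expression reproduces the stated weight $w(\{z\}_N)$ up to a single residual factor $\prod_j z_j^{M}$.

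The last step, and the only genuinely nontrivial one, is to dispose of this residual $\prod_j z_j^M$. Here I would invoke the Bethe ansatz equation in the form \eqref{BAEintermsofz}: solving it for $z_k^M$ gives $z_k^M=(-1)^{N+1}(1+\beta z_k)^N/\prod_j(1+\beta z_j)$, and taking the product over $k$ the $(1+\beta z)$ factors cancel completely, leaving $\prod_k z_k^M=(-1)^{N(N+1)}=1$. With $\prod_j z_j^M=1$ the excess factor vanishes and the shift $\prod_j z_j^{M-N+1}\to\prod_j z_j^{1-N}$ produces precisely the $z_j^{1-N}$ of $w(\{z\}_N)$, while the right-hand side $\prod_j\delta_{x_jx_j'}$ becomes $\delta_{\lambda\mu}$ under the bijection $\lambda_j=x_{N-j+1}-N+j-1$. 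The main obstacle is therefore not conceptual but bookkeeping: one must track the powers of $u_j$, $z_j$ and $1+\beta z_j$ coming from three separate sources (the two wavefunction prefactors and the norm) without sign or exponent errors, and confirm that the on-shell constraint is exactly what is needed to normalize $\prod_j z_j^M$ to unity.
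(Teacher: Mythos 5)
Your proposal is correct and follows exactly the paper's route: the paper obtains the theorem by inserting the completeness of on-shell Bethe states into $\langle x_1\cdots x_N|x_1'\cdots x_N'\rangle$ and then "translating" via the wavefunction formulae \eqref{wavefunction} and the norm \eqref{norm}, which is precisely your computation. Your bookkeeping checks out — the $(1+\beta z_j)$ exponents do collapse to $1$, the powers of $\alpha=-1/\beta$ sum to zero, and the on-shell identity $\prod_j z_j^M=(-1)^{N(N+1)}=1$ from \eqref{BAEintermsofz} is indeed the step that reduces $z_j^{M-N+1}$ to the stated $z_j^{1-N}$ — so you have in fact supplied more detail than the paper itself records.
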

\begin{corollary}
Setting $\beta=0$ and taking the limit $M\to\infty$ yields
well-known orthogonal relation for the Schur polynomials 
(see \cite{Conrey} for example):
\begin{align}
\frac{1}{(2\pi {\mathrm{i}})^N N!}
\oint_{|z_1|=1} \cdots \oint_{|z_N|=1} 
\prod_{j=1}^N \mathrm{d} z_j
s_\lambda(\bs{z}^{-1}) s_\mu(\bs{z})
\prod_{\substack{j,k=1 \\ j \neq k}}^N (z_j-z_k)
\prod_{j=1}^{N} z_j^{-N} 
=\delta_{\lambda \mu}.
\label{Schur-orthogonal1}
\end{align}
\end{corollary}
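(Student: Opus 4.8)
The plan is to specialize the parent orthogonality relation \eqref{orthogonal} at $\beta=0$ and then recognize the resulting finite sum over Bethe roots as a Riemann sum that reproduces the contour integral \eqref{Schur-orthogonal1}. First I would set $\beta=0$ everywhere. By \eqref{Gr-Sch} the Grothendieck and dual Grothendieck polynomials collapse to Schur polynomials, so $\overline{G}_\lambda(\bs{z}^{-1};0)=s_\lambda(\bs{z}^{-1})$ and $G_\mu(\bs{z};0)=s_\mu(\bs{z})$. The Bethe ansatz equation \eqref{BAEintermsofz} decouples to $1+(-1)^N z_k^M=0$, i.e. $z_k^M=(-1)^{N+1}$, so each admissible $\{z\}_N$ is an $N$-element subset of the set $R_M$ of $M$-th roots of $(-1)^{N+1}$, which are $M$ equally spaced points on the unit circle. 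The weight simplifies too: the prefactor $(1+\sum_j \beta z_j/(M+(M-N)\beta z_j))^{-1}$ becomes $1$, and $w(\{z\}_N)$ reduces to $M^{-N}\prod_{j\neq k}(z_j-z_k)\prod_j z_j^{1-N}$.

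Next I would pass from the sum over unordered subsets to an averaged sum over ordered tuples. The summand $\prod_{j\neq k}(z_j-z_k)\prod_j z_j^{1-N}s_\lambda(\bs{z}^{-1})s_\mu(\bs{z})$ is symmetric in $z_1,\dots,z_N$ and vanishes whenever two arguments coincide, so $\sum_{\{z\}_N}$ over distinct subsets equals $\tfrac1{N!}\sum_{(z_1,\dots,z_N)\in R_M^N}$ over all tuples. Thus the $\beta=0$ case of \eqref{orthogonal} reads $\frac{1}{M^N N!}\sum_{(z_1,\dots,z_N)\in R_M^N} g(z_1,\dots,z_N)=\delta_{\lambda\mu}$, where $g=\prod_{j\neq k}(z_j-z_k)\prod_j z_j^{1-N}s_\lambda(\bs{z}^{-1})s_\mu(\bs{z})$.

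It then remains to identify this sum with the contour integral as $M\to\infty$. Writing $z=e^{i\theta}$, the uniform spacing of $R_M$ makes $\frac1M\sum_{z\in R_M}h(z)$ a Riemann sum for $\frac{1}{2\pi i}\oint_{|z|=1}h(z)\,dz/z$; taking the $N$-fold product turns $M^{-N}\sum_{R_M^N}$ into $(2\pi i)^{-N}\oint\cdots\oint\prod_j dz_j/z_j$, and absorbing one factor $1/z_j$ per variable converts $\prod_j z_j^{1-N}$ into $\prod_j z_j^{-N}$. This reproduces exactly the integrand and measure of \eqref{Schur-orthogonal1}, so the $\beta=0$ orthogonality becomes the asserted Schur relation.

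The one point deserving care, and the step I would treat as the crux, is the convergence of this Riemann sum; I would make it sharp rather than merely asymptotic. The integrand $g$ is a fixed Laurent polynomial whose per-variable degrees are bounded by a quantity depending only on $\lambda,\mu,N$ and not on $M$. Using the elementary identity $\frac1M\sum_{z\in R_M}z^n=[\,M\mid n\,]\,\zeta_0^{n/M}$ with $\zeta_0^M=(-1)^{N+1}$, every monomial whose exponent is nonzero but of modulus less than $M$ averages to zero, precisely as it integrates to zero against $dz/(2\pi i\,z)$; only the constant term of $g$ survives in both. Hence for every $M$ exceeding that degree bound the discrete sum equals the contour integral term by term, so the limit is attained exactly and combining it with the $\beta=0$ case of \eqref{orthogonal} yields \eqref{Schur-orthogonal1}. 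One could alternatively set $\beta=0$ in \eqref{norm} and \eqref{wavefunction} at the outset and carry the Schur specialization through \eqref{BAE-z} directly, but the route above is the most economical.
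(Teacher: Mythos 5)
Your proposal is correct and takes essentially the same route as the paper: specialize the orthogonality relation \eqref{orthogonal} at $\beta=0$ via \eqref{Gr-Sch}, observe that the Bethe equations \eqref{BAEintermsofz} decouple to $z_k^M=(-1)^{N+1}$ so the admissible roots are $M$ equally spaced points on the unit circle, and convert the normalized sum over root configurations into the $N$-fold contour integral, with the $1/N!$ arising from disregarding the order of the roots. Where you genuinely sharpen the paper's argument is the final step: the paper treats $\lim_{M\to\infty}\sum_{\{z\}_N}\prod_{j}z_j/M$ as a Riemann-sum limit and leaves the convergence implicit, whereas you note that the summand is a fixed Laurent polynomial (per-variable degrees bounded in terms of $\lambda$, $\mu$, $N$ alone, since $\lambda,\mu$ are fixed as $M\to\infty$) and invoke the discrete orthogonality $\frac{1}{M}\sum_{z\in R_M}z^n=0$ for $0<|n|<M$, so that the discrete sum equals the contour integral \emph{exactly} for all sufficiently large $M$, term by term. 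Your symmetry-plus-vanishing-on-the-diagonal justification for replacing the sum over unordered subsets by $\frac{1}{N!}$ times the sum over ordered tuples also makes explicit what the paper dispatches with ``ignoring the order of $\{z_j\}$''. One immaterial slip: in your averaging identity the surviving value when $M\mid n$ should read $\bigl((-1)^{N+1}\bigr)^{n/M}$ rather than $\zeta_0^{n/M}$ for a root $\zeta_0$ with $\zeta_0^M=(-1)^{N+1}$; since only the $n=0$ term is ever used, nothing in the argument is affected.
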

\begin{proof}
Setting $\beta=0$ and using the relation \eqref{Gr-Sch},  one finds that 
\eqref{orthogonal} reduces to 
\begin{align}
\sum_{\{z \}_N}
s_\lambda(\bs{z}^{-1})
s_\mu(\bs{z})
\prod_{j=1}^N \frac{z_j}{M}
\prod_{\substack{j,k=1 \\ j \neq k}}^N (z_j-z_k)
\prod_{j=1}^{N} z_j^{-N}
=\delta_{\lambda \mu}.
\label{Schur-orthogonal2}
\end{align}
From the Bethe ansatz equation \eqref{BAE-z} for $\beta=0$, one observes that 
the roots  are located on the
unit circle in the complex plane: $z_j=\exp(2\pi \mathrm{i} I_j/M)$ where
$I_j\in \mathbb{Z}$ ($I_j\in (2\mathbb{Z}+1)/2$) for $N\in 2\mathbb{Z}+1$
($N\in 2\mathbb{Z}$) and $0\le I_1 < I_2<\cdots < I_N\le M-1$. Recalling the
sum in the above  is taken over all the sets of the solutions, and
ignoring the order of $\{z_j\}$, we can rewrite the sum as the multiple
integrals:
\begin{align}
\lim_{M\to\infty}\sum_{\{z\}_N} \prod_{j=1}^N \frac{z_j}{M}=
\lim_{M\to\infty}\sum_{\{z\}_N} \prod_{j=1}^N 
\frac{e^{2\pi \mathrm{i} I_j/M}}{M} =
\frac{1}{(2\pi \mathrm{i})^N N!}\oint_{|z_1|=1}\cdots
\oint_{|z_N|=1} \prod_{j=1}^N  \mathrm{d} z_j.
\end{align}
Inserting this limiting procedure into \eqref{Schur-orthogonal2}, 
one arrives at
\eqref{Schur-orthogonal1}.
\end{proof}

%
\section{Totally asymmetric simple exclusion process}\label{TASEP}
%
In the previous sections, we have evaluated the arbitrary off-shell
wavefunctions for the one-parameter family of the five vertex model
by making use of the matrix product representations. The most significant
is that the resultant determinant representation of the wavefunctions can
be expressed by  Grothendieck polynomials which is a one-parameter
deformation of  Schur polynomials. 
As mentioned in Section~\ref{FV}, the five vertex model includes 
several physically interesting models. As an application of the 
results obtained in the previous sections, we consider the TASEP 
and formulate the relaxation dynamics. 

The TASEP is a stochastic 
interacting particle system consisting of biased random walkers
obeying the exclusion principle, whose dynamics can be formulated 
as follows. We consider the $N$-particle system on the
periodic lattice with $M$ sites. By the exclusion rule, each site
can be occupied by at most one particle.
The dynamical rule of the TASEP  is: during the time interval 
$\mathrm{d} t$, a particle at a site $j$ jumps to the $(j+1)$th site 
with probability $\mathrm{d} t$, if the $(j+1)$th site is vacant.
The probability of being in the (normalized) state 
$|x_1 \cdots x_N \rangle$ is denoted as 
$P_t(x_1, \dots, x_N)$.
Then the arbitrary states can be written as
\begin{align}
|\varphi(t) \rangle=\sum_{1\le x_1<\dots<x_N\le M} 
P_t(x_1, \dots, x_N)| x_1 \cdots x_N \rangle.
\label{state}
\end{align}
Note that the probability is given as the
amplitude of each state, 
which is in contrast to the quantum mechanics where the 
probability is given by the squared magnitude of the 
amplitude. The time evolution of the state vector
is subject to the master equation
\begin{align}
\frac{\mathrm{d}}{\mathrm{d} t} |\varphi(t) \rangle
=\mathcal{H} |\varphi(t) \rangle. \label{master}
\end{align}
Here the stochastic matrix $\mathcal{H}$ of the TASEP is given by
\eqref{Baxter} for the case $\alpha=1$:
\begin{align}
\mathcal{H}=\sum_{j=1}^M \left\{
\sigma_j^+ \sigma_{j+1}^-
+\frac{1}{4}(\sigma_j^z \sigma_{j+1}^{z}-1)
 \right\}. \label{markov}
\end{align}
The eigenvalue spectrum of the 
stochastic matrix (\ref{markov})
can be calculated by the Bethe ansatz method \cite{GM,Bo,KBI,TF}
as formulated in Section~\ref{scalar}. 
Namely taking the logarithmic derivative of the 
eigenvalue of the transfer matrix \eqref{EV} according
to \eqref{Baxter}, and setting $w_j=1$ ($1\le j \le M$), 
$\alpha=1$ and $z_j=1-u_j^{-2}$  ($1\le j \le N$),
one obtains
\begin{align}
\mathcal{H}(\bs{z})=-N+\sum_{j=1}^N z_j^{-1},
\end{align}
where the parameters $\{ z \}_N$
must satisfy the Bethe ansatz equation \eqref{BAE}.
Explicitly it reads
\begin{align}
z_k^{-M} (1-z_k)^{N}=(-1)^{N-1}
\prod_{j=1}^N (1-z_j) \quad (1\le k \le N).
\label{BAE-tasep}
\end{align}
The state vector $| \psi(\bs{z})\ket$
(resp. $\bra  \psi(\bs{z})|$) defined by setting
$u_j^{-2}=1-z_j$ in $|\psi(\{u\}_N) \ket$
(resp. $\bra \psi(\{u\}_N)|$) 
becomes an energy eigenstate of \eqref{markov},
when we choose the set of parameters $\bs{z}$
as an arbitrary set of  solutions of \eqref{BAE-tasep}.
Then the norm of the eigenstate is given by \eqref{norm} after
setting $\alpha=1$ and $u_j^{-2}=1-z_j$.

The Green functions $\mathcal{G}_t(\bs{x}'|\bs{x})$ which is
the probability that the particles
starting at  initial positions $\bs{x}=\{x_1,\dots,x_N\}$
($1\le x_1<\cdots <x_N\le M$)
arrive at positions $\bs{x}'=\{x'_1,\dots,x'_N\}$ 
($1\le x'_1<\cdots <x'_N\le M$) at
time $t$ is given by solving the master equation \eqref{master}:
\begin{align}
\mathcal{G}_t(\bs{x}'|\bs{x})=
\bra x_1'\cdots x_N'|e^{\mathcal{H} t}|x_1\cdots x_N \ket.
\label{Green}
\end{align}
Utilizing the results in the previous section, one finds that
the Green function can be written in terms of the Grothendieck 
polynomials.
\begin{proposition}\label{Green-Grothendieck}
The Green function $\mathcal{G}_t(\bs{x}'|\bs{x})$ of the TASEP
whose stochastic matrix is given by \eqref{markov}
is expressed as the Grothendieck polynomials \eqref{GR} and \eqref{dualGR}
with $\beta=-1/\alpha=-1$:
\begin{align}
\mathcal{G}_t(\bs{x}'|\bs{x})=
\sum_{\bs{z}}
\frac{ 
      G_{\mu}(\bs{z};-1)\overline{G}_{\lambda}(\bs{z}^{-1};-1)}
{ \sum_{\gamma\subseteq (M-N)^N} 
      G_{\gamma}(\bs{z};-1) \overline{G}_{\gamma}(\bs{z}^{-1};-1)}
e^{\mathcal{H}(\bs{z}) t},
\label{Green-Gr}
\end{align}
where $\lambda=(\lambda_1,\dots,\lambda_N)$ and $\mu=(\mu_1,\dots,\mu_N)$
denote Young diagram characterized by the initial and
final positions: $\lambda_j=x_{N-j+1}-N+j-1$ and
$\mu_j=x'_{N-j+1}-N+j-1$, respectively. The arguments
of the Grothendieck polynomials
$\bs{z}=\{z_1,\dots,z_N\}$ and 
$\bs{z}^{-1}=\{z^{-1}_1,\dots,z^{-1}_N\}$ are expressed as  the
solutions to the Bethe ansatz equation \eqref{BAE-tasep}. 
The summation is over all the sets of the solutions to the  
Bethe ansatz equation.
\end{proposition}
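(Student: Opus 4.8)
The plan is to derive the Green function formula by spectral decomposition of the time-evolution operator $e^{\mathcal{H}t}$ over the complete set of Bethe eigenstates, and then translate each ingredient into Grothendieck polynomials using the dictionary already established. First I would write the formal expansion
\begin{align}
e^{\mathcal{H}t}=\sum_{\bs{z}}
\frac{|\psi(\bs{z})\ket\bra\psi(\bs{z})|}
{\bra\psi(\bs{z})|\psi(\bs{z})\ket}e^{\mathcal{H}(\bs{z})t},
\label{Green-specdecomp}
\end{align}
where the sum runs over all solution sets of the Bethe ansatz equation \eqref{BAE-tasep} and $\mathcal{H}(\bs{z})=-N+\sum_{j=1}^N z_j^{-1}$ is the corresponding eigenvalue. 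This is legitimate because the on-shell states $|\psi(\bs{z})\ket$ form a complete basis of the $N$-particle sector and $|\psi(\bs{z})\ket$, $\bra\psi(\bs{z})|$ are a biorthogonal pair of right/left eigenvectors of the non-Hermitian stochastic matrix $\mathcal{H}$ \eqref{markov}; the completeness $I=\sum_{\{u\}_N}|\psi\ket\bra\psi|/\bra\psi|\psi\ket$ is exactly the resolution of identity already used in the orthogonality theorem. Sandwiching \eqref{Green-specdecomp} between $\bra x_1'\cdots x_N'|$ and $|x_1\cdots x_N\ket$ immediately gives
\begin{align}
\mathcal{G}_t(\bs{x}'|\bs{x})=\sum_{\bs{z}}
\frac{\bra x_1'\cdots x_N'|\psi(\bs{z})\ket\,
\bra\psi(\bs{z})|x_1\cdots x_N\ket}
{\bra\psi(\bs{z})|\psi(\bs{z})\ket}e^{\mathcal{H}(\bs{z})t}.
\label{Green-raw}
\end{align}

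Next I would insert the wavefunction identifications from Lemma~\ref{over-Gr}. Under the substitution $u_j^{-2}=1-z_j$, $\alpha=1$, $\beta=-1$, equation \eqref{wavefunction} expresses $\bra x_1'\cdots x_N'|\psi(\bs{z})\ket$ as (up to an explicit prefactor) $G_\mu(\bs{z};-1)$, and $\bra\psi(\bs{z})|x_1\cdots x_N\ket$ as (up to another explicit prefactor) $\overline{G}_\lambda(\bs{z}^{-1};-1)$, with the Young diagrams read off via $\mu_j=x_{N-j+1}'-N+j-1$ and $\lambda_j=x_{N-j+1}-N+j-1$ as stated. The key observation is that when $\beta=-1/\alpha=-1$ the two variable identifications collapse: $z_j=\alpha-v_j^{-2}=1-v_j^{-2}$ and $y_j^{-1}=\alpha-u_j^{-2}=1-u_j^{-2}=z_j$, so the argument of the dual polynomial is genuinely $\bs{z}^{-1}$, matching \eqref{Green-Gr}. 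For the denominator I would use the norm formula \eqref{norm}; the real work is to show that, after the $\alpha=1$, $u_j^{-2}=1-z_j$ specialization, the ratio of the two wavefunction prefactors to the norm reproduces the normalized combination $G_\mu(\bs{z};-1)\overline{G}_\lambda(\bs{z}^{-1};-1)/\sum_{\gamma\subseteq(M-N)^N}G_\gamma(\bs{z};-1)\overline{G}_\gamma(\bs{z}^{-1};-1)$ appearing in \eqref{Green-Gr}.

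The cleanest route to that denominator is not to manipulate \eqref{norm} directly but to use the completeness relation in the $x$-basis: summing \eqref{Green-raw} against the trivial evolution, or equivalently using that $\bra\psi(\bs{z})|\psi(\bs{z})\ket=\sum_{\gamma\subseteq(M-N)^N}\bra\psi(\bs{z})|x\ket\bra x|\psi(\bs{z})\ket$ with the sum over all configurations (equivalently all $\gamma\subseteq(M-N)^N$). Substituting \eqref{wavefunction} into this sum, the configuration-independent prefactors factor out and the remaining sum is precisely $\sum_\gamma G_\gamma(\bs{z};-1)\overline{G}_\gamma(\bs{z}^{-1};-1)$ times those prefactors. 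Because the same prefactor products appear in the numerator factors $\bra x'|\psi\ket$ and $\bra\psi|x\ket$, they cancel identically between numerator and denominator in the ratio \eqref{Green-raw}, leaving exactly the Grothendieck-polynomial expression \eqref{Green-Gr}. I expect the main obstacle to be this bookkeeping of the $v_j$- and $u_j$-dependent prefactors in \eqref{wavefunction}: one must verify that all the powers $\prod_j v_j^{M-1}$, $\prod_j u_j^{M-1}y_j^{-M+N}(1+\beta y_j^{-1})^{N-1}$, and the $\alpha^{N(N-1)/2}$ factors either cancel between numerator and denominator or combine into the stated eigenvalue-weighted sum, using $\beta=-1$ to merge $y_j^{-1}=z_j$ throughout. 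Once that cancellation is confirmed, the proposition follows directly.
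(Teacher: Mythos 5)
Your proposal is correct and follows essentially the same route as the paper: insert the resolution of the identity over Bethe eigenstates into $\bra x_1'\cdots x_N'|e^{\mathcal{H}t}|x_1\cdots x_N\ket$, convert the numerator wavefunctions via Lemma~\ref{over-Gr} at $\alpha=1$, $\beta=-1$, $y_j^{-1}=z_j$, and rewrite the norm in the denominator as $\sum_{\gamma\subseteq(M-N)^N}G_\gamma(\bs{z};-1)\overline{G}_\gamma(\bs{z}^{-1};-1)$ times prefactors that cancel against the numerator. Your direct use of the $x$-basis completeness relation for the denominator is exactly the content of the paper's appeal to the deformed Cauchy identity \eqref{generalizedcauchy} (which is itself proved by that same decomposition), so there is no substantive difference.
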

\begin{proof}
Substituting the resolution of the identity operator
into \eqref{Green}, we have
\begin{align}
\mathcal{G}_t(\bs{x}'|\bs{x})=\sum_{\bs{z}}
\frac{\bra x_1'\cdots x_N'| \psi(\bs{z})\ket 
      \bra \psi(\bs{z})|x_1\cdots x_N\ket}
{\bra \psi(\bs{z}) | \psi(\bs{z}) \ket}e^{\mathcal{H}(\bs{z})t},
\label{resolution1}
\end{align}
where the parameters $\bs{z}=\{z_1,\dots,z_n\}$  are the solutions 
to the Bethe ansatz equation \eqref{BAE-tasep} and the summation is over
all the sets of the solutions. Finally utilizing the
expression of the wavefunctions \eqref{wavefunction}
and the deformed Cauchy identity \eqref{generalizedcauchy},
one arrives at \eqref{Green-Gr}.
\end{proof}

Let us check the validity of \eqref{resolution1} for the steady state.
After infinite time, the system will relax to
the steady state $|S_N\ket$:
\begin{align}
| S_N \rangle=
\binom{M}{N}^{-1}
 \sum_{1 \le x_1 < \dots < x_N \le M}
|x_1\cdots x_N\rangle.
\label{steady}
\end{align}
Up to some overall factor, the steady state corresponds to
the zero-energy 
state $|\psi(\{u\}_N)\ket$ with $u_j=\infty$ ($1\le j \le N$).
Due to the Perron-Frobenius theorem, all the
energy spectrum except for the unique zero eigenvalue must
have negative-real parts. Utilizing this fact and
substituting \eqref{steady} into \eqref{resolution1},
we have 
\begin{equation}
\mathcal{G}_\infty(\bs{x}'|\bs{x})=\binom{M}{N}^{-1}.
\label{Green-st}
\end{equation}
On the other hand, one finds that the Grothendieck polynomials
$G_{\lambda}(\bs{z},-1)$ and $\overline{G}_{\mu}(\bs{z}^{-1},-1)$
do not depend on the shapes $\lambda$ and $\mu$ in the limit $z_j\to 1$ ($1\le j \le N$):
\begin{align}
G_{\lambda}(\bs{z},-1)|_{\bs{z}\to\{1\}}=1, \quad
\prod_{j=1}^N (1-z_j)^{N-1}\overline{G}_{\mu}(\bs{z}^{-1},-1)|_{\bs{z} \to \{1\}}=1,
\end{align}
which follows from the formula \eqref{limiting}.
Thus the RHS of \eqref{Green-Gr}  reduces to $1/\sum_{\lambda\subseteq (M-N)^N} 1=
\binom{M}{N}^{-1}$ which is nothing but the RHS of \eqref{Green-st}. The following is a
consequence of Proposition~\ref{Green-Grothendieck} and the conservation law of the 
total probability: $\sum_{1\le x_1<\cdots<x_N\le M} P_t(x_1,\dots,x_N)=1$.
\begin{corollary}
The following sum rule holds for the Grothendieck polynomials.
\begin{align}
\sum_{\bs{z}}
\frac{ \sum_{\mu \subseteq (M-N)^N}
      G_{\mu}(\bs{z};-1)\overline{G}_{\lambda}(\bs{z}^{-1};-1)}
{ \sum_{\gamma\subseteq (M-N)^N} 
      G_{\gamma}(\bs{z};-1) \overline{G}_{\gamma}(\bs{z}^{-1};-1)}
e^{\mathcal{H}(\bs{z}) t}=1,
\end{align}
where the summation is over all the sets 
of the solutions to the  Bethe ansatz equation \eqref{BAE-tasep}.
\end{corollary}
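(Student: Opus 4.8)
The plan is to obtain the sum rule by summing the Green function over all final configurations and then invoking conservation of total probability, exactly as announced in the sentence preceding the statement. First I would recall that $\mathcal{G}_t(\bs{x}'|\bs{x})$ defined in \eqref{Green} is the transition probability of the TASEP, and that the stochastic matrix $\mathcal{H}$ of \eqref{markov} is a genuine Markov generator: the uniform covector $\bra \mathrm{all}|:=\sum_{1\le x_1'<\cdots<x_N'\le M}\bra x_1'\cdots x_N'|$ annihilates it, $\bra \mathrm{all}|\mathcal{H}=0$, because for each bond the hopping contribution $\sigma_j^+\sigma_{j+1}^-$ cancels the diagonal loss term $\tfrac14(\sigma_j^z\sigma_{j+1}^z-1)$ once the rows are summed. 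Consequently $\bra \mathrm{all}|e^{\mathcal{H}t}=\bra \mathrm{all}|$, and summing \eqref{Green} over $\bs{x}'$ gives
\begin{align}
\sum_{1\le x_1'<\cdots<x_N'\le M}\mathcal{G}_t(\bs{x}'|\bs{x})
=\bra \mathrm{all}|e^{\mathcal{H}t}|x_1\cdots x_N\ket
=\bra \mathrm{all}|x_1\cdots x_N\ket=1,
\end{align}
which is the conservation of total probability stated just before the corollary.

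Next I would translate this sum over final configurations into a sum over Young diagrams. Under the bijection $\mu_j=x'_{N-j+1}-N+j-1$ used in Proposition~\ref{Green-Grothendieck}, the range $1\le x_1'<\cdots<x_N'\le M$ corresponds precisely to $\mu\subseteq(M-N)^N$, whereas the diagram $\lambda$ attached to the fixed initial data is left untouched. Substituting the explicit form \eqref{Green-Gr} of the Green function and interchanging the two finite summations (over $\mu$ and over the Bethe roots $\bs{z}$) then turns the left-hand side into
\begin{align}
\sum_{\bs{z}}
\frac{\sum_{\mu\subseteq(M-N)^N}G_{\mu}(\bs{z};-1)\overline{G}_{\lambda}(\bs{z}^{-1};-1)}
{\sum_{\gamma\subseteq(M-N)^N}G_{\gamma}(\bs{z};-1)\overline{G}_{\gamma}(\bs{z}^{-1};-1)}
e^{\mathcal{H}(\bs{z})t},
\end{align}
so that equating it with the value $1$ from the first step produces exactly the claimed identity.

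Since the Bethe roots form a finite set and the admissible diagrams are bounded by $(M-N)^N$, every summation is finite and the interchange requires no analytic justification; I therefore expect essentially no obstacle once Proposition~\ref{Green-Grothendieck} is in hand. The only point deserving a moment's care is the vanishing $\bra \mathrm{all}|\mathcal{H}=0$, i.e. the stochasticity of the TASEP generator, which is precisely what forces the normalizing denominator in \eqref{Green-Gr} to survive unchanged while the numerator is summed over $\mu$.
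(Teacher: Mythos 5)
Your proposal is correct and follows essentially the same route as the paper, which presents the corollary precisely as a consequence of Proposition~\ref{Green-Grothendieck} together with the conservation of total probability $\sum_{\bs{x}'}\mathcal{G}_t(\bs{x}'|\bs{x})=1$, i.e.\ summing the Grothendieck-polynomial expression \eqref{Green-Gr} over all final configurations $\mu\subseteq(M-N)^N$. Your additional verification that $\bra \mathrm{all}|\mathcal{H}=0$ (stochasticity of the generator) is a correct, if slightly more explicit, justification of the conservation law the paper simply asserts.
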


Finally we comment on the relaxation dynamics of a physical quantity 
$\mathcal{A}$.  The time evolution of the expectation 
value for $\mathcal{A}$ starting from an initial 
state $| x_1 \cdots x_N \rangle$ is defined as
\begin{align}
\langle \mathcal{A} \rangle_t=\langle S_N| \mathcal{A} 
e^{\mathcal{M}t} | x_1 \cdots x_N  \rangle,
\label{expectation}
\end{align}
where $\langle S_N|$ is the left steady state vector
\begin{align}
\bra S_N |=\sum_{1 \le x_1 < \dots < x_N \le M}
\bra x_1 \cdots x_N |.
\end{align}
This definition comes from the fact that the TASEP is a stochastic process,
and the coefficient $P_t(x_1',\dots,x_N')$ of the state vector
$|\varphi(t) \rangle=e^{\mathcal{H}t} | x_1\cdots x_N \rangle$
directly gives the probability of being in the state
$|x_1' \cdots x_N' \rangle$ (see \eqref{state}),
and the left steady state vector $\langle S_N|$
plays the role of picking out the coefficients.
Inserting the resolution of identity as in \eqref{resolution1},
we can express the quantity (\ref{expectation}) in terms
of the Grothendieck polynomials.
\begin{proposition}
\begin{align}
\bra \mathcal{A}\ket_t=
\sum_{\bs{z}}
\frac{\left[\sum_{\nu\subseteq (M-N)^N}
\sum_{\mu \subseteq (M-N)^N} 
\mathcal{A}_{\mu}^{\nu} G_{\mu}(\bs{z};-1) \right] \overline{G}_{\lambda}(\bs{z}^{-1};-1)}
{ \sum_{\gamma\subseteq (M-N)^N} 
      G_{\gamma}(\bs{z};-1) \overline{G}_{\gamma}(\bs{z}^{-1};-1)}
e^{\mathcal{H}(\bs{z}) t},
\end{align}
where the matrix elements $\mathcal{A}_{\lambda}^{\mu}$ is given by
$\mathcal{A}_{\lambda}^{\mu}=\bra y_1\cdots y_N |\mathcal{A} |x_1\cdots x_N \ket$
with $x_j=\lambda_{N-j+1}+j$ ($M-N\ge \lambda_1\ge \cdots \lambda_N\ge 0$)
and  $y_j=\mu_{N-j+1}+j$  ($M-N\ge \mu_1\ge \cdots \mu_N\ge 0$),
and the summation is over all the sets of the solutions to the Bethe ansatz equation
\eqref{BAE-tasep}.
\end{proposition}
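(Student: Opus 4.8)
The plan is to follow the proof of Proposition~\ref{Green-Grothendieck} almost verbatim, regarding this statement as its generalization in which a single operator $\mathcal{A}$ and the left steady state $\bra S_N|$ are inserted in place of the final position bra. Starting from the definition \eqref{expectation}, I would insert the resolution of the identity over Bethe states to the left of the evolution operator; since each on-shell vector $|\psi(\bs{z})\ket$ is an eigenstate of $\mathcal{H}$ with eigenvalue $\mathcal{H}(\bs{z})$, this diagonalizes $e^{\mathcal{H}t}$ and gives, exactly as in \eqref{resolution1},
\begin{align}
\bra\mathcal{A}\ket_t=\sum_{\bs{z}}
\frac{\bra S_N|\mathcal{A}|\psi(\bs{z})\ket\,\bra\psi(\bs{z})|x_1\cdots x_N\ket}
{\bra\psi(\bs{z})|\psi(\bs{z})\ket}\,e^{\mathcal{H}(\bs{z})t},
\end{align}
with $\lambda$ the Young diagram of the initial configuration, $\lambda_j=x_{N-j+1}-N+j-1$.

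The one genuinely new ingredient is the factor $\bra S_N|\mathcal{A}|\psi(\bs{z})\ket$. I would insert the position-state completeness $\sum_{\mu}|\mu\ket\bra\mu|$ immediately after $\mathcal{A}$ and use $\bra S_N|\nu\ket=1$, valid because $\bra S_N|$ is the uniform sum over all configurations, to write $\bra S_N|\mathcal{A}|\mu\ket=\sum_{\nu}\mathcal{A}_{\mu}^{\nu}$; this is what generates the double summation $\sum_{\nu}\sum_{\mu}\mathcal{A}_{\mu}^{\nu}\bra\mu|\psi(\bs{z})\ket$ appearing in the claim. I then invoke Lemma~\ref{over-Gr}: specializing \eqref{wavefunction} to the TASEP values $\alpha=1$, $\beta=-1$, $z_j=1-u_j^{-2}$ (so that $\bs{y}=\bs{z}^{-1}$) expresses $\bra\mu|\psi(\bs{z})\ket$ as $G_{\mu}(\bs{z};-1)$ times the configuration-independent prefactor $\prod_{j}u_j^{M-1}$, and $\bra\psi(\bs{z})|x_1\cdots x_N\ket$ as $\overline{G}_{\lambda}(\bs{z}^{-1};-1)$ times $\prod_{j}u_j^{M-1}z_j^{M-N}(1-z_j)^{N-1}$.

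For the denominator I would resolve the norm over the same position basis, $\bra\psi(\bs{z})|\psi(\bs{z})\ket=\sum_{\gamma}\bra\psi(\bs{z})|\gamma\ket\bra\gamma|\psi(\bs{z})\ket$, and substitute \eqref{wavefunction} once more (this is precisely the content of the deformed Cauchy identity \eqref{generalizedcauchy}, which identifies $\sum_{\gamma}G_{\gamma}\overline{G}_{\gamma}$ with the norm up to the same prefactor). The crucial observation is then a bookkeeping one: the combined prefactor $\prod_{j}u_j^{2(M-1)}z_j^{M-N}(1-z_j)^{N-1}$ produced by the two overlaps in the numerator is \emph{identical} to the one produced by every term of the norm, so it factors out and cancels, leaving exactly the claimed ratio of Grothendieck polynomials with denominator $\sum_{\gamma\subseteq(M-N)^N}G_{\gamma}(\bs{z};-1)\overline{G}_{\gamma}(\bs{z}^{-1};-1)$. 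I expect the only delicate point to be this prefactor matching---checking that the powers of $u_j$, $z_j$ and $(1-z_j)$ coming from the ket overlap, the dual overlap and the norm coincide exactly---after which the statement follows as an immediate corollary of the machinery already assembled for Proposition~\ref{Green-Grothendieck}.
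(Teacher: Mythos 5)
Your proposal is correct and follows essentially the same route as the paper, which simply says to insert the resolution of the identity as in \eqref{resolution1} and then apply the wavefunction--Grothendieck correspondence \eqref{wavefunction} together with the Cauchy-identity expression for the norm. Your additional step of expanding $\bra S_N|\mathcal{A}|\psi(\bs{z})\ket$ over position states using $\bra S_N|\nu\ket=1$, and your check that the configuration-independent prefactors from the two overlaps cancel against those in the norm, correctly fills in the details the paper leaves implicit.
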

For instance,  the relaxation dynamics of the 
the local densities
$\mathcal{A}=n_i = 1-s_i$
and currents $\mathcal{A}= j_i =(1-s_i)s_{i+1}$
can be explicitly evaluated by applying the following theorem.
\begin{theorem}
Let $\mathcal{A}=s_l\cdots s_{l+n-1}$ ($-l+1\le n\le M$; $l\in \mathbb{Z}$). Then the
following formula holds for arbitrary complex values
$z_j\in\mathbb{C}$ ($1\le j \le N$):
\begin{align}
\sum_{\nu \subseteq (M-N)^N} \sum_{\mu \subseteq (M-N)^N}
 A_{\mu}^{\nu} G_{\mu}(z;-1)
=\prod_{j=1}^N z_j^{l+n-1} 
\prod_{1 \le j < k \le N}
\frac{1}{z_k-z_j} \mathrm{det}_N V^{(M-n)}, \label{formtwo} 
\end{align}
where $\mathcal{A}=s_l\cdots s_{l+n-1}$ and
the $N\times N$ matrix $V$ is written as
\begin{align}
& V_{jk}^{(M-n)}=\sum_{m=0}^{j-1} (-1)^m \frac{(M-n)!}{m!(M-m-n)!}
(1-z_k)^{m-j+N-1} \quad (1 \le j \le N-1), \nn \\
&
V_{Nk}^{(M-n)}=-\sum_{m=\mathrm{max}(N-1,1)}^{M-n} (-1)^m \frac{(M-n)!}{m!(M-m-n)!}
(1-z_k)^{m-1}.
\end{align}
\end{theorem}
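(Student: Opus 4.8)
The plan is to exploit that the operator $\mathcal{A}=s_l\cdots s_{l+n-1}$ is diagonal in the occupation basis. Since each $s_i=|0\ket_i\bra 0|_i$ projects onto the empty state, one has $\mathcal{A}|x_1\cdots x_N\ket=\chi(x)\,|x_1\cdots x_N\ket$, where $\chi(x)=1$ if none of the sites $l,\dots,l+n-1$ (read cyclically modulo $M$, as befits the periodic chain) is occupied in the configuration $x$, and $\chi(x)=0$ otherwise. Recalling that $A_\mu^\nu=\bra y_1\cdots y_N|\mathcal{A}|x_1\cdots x_N\ket$ with the configuration $x=x(\mu)$ built from $\mu$ and $y=y(\nu)$ from $\nu$, orthonormality of the occupation basis gives $A_\mu^\nu=\delta_{\mu\nu}\,\chi(x(\mu))$. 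The double sum therefore collapses to the single constrained sum
\begin{align}
\sum_{\nu\subseteq(M-N)^N}\sum_{\mu\subseteq(M-N)^N}A_\mu^\nu\,G_\mu(\bs{z};-1)
=\sum_{\substack{\mu\subseteq(M-N)^N\\ x(\mu)\cap\{l,\dots,l+n-1\}=\emptyset}}G_\mu(\bs{z};-1),
\end{align}
and it remains only to evaluate the sum of $G_\mu$ over those configurations that avoid the $n$-site window.

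Next I would carry out the ``deletion'' of the frozen window. Forcing the $n$ consecutive sites $l,\dots,l+n-1$ to be empty confines all $N$ particles to the complementary cyclic interval of $M-n$ sites; cutting the periodic chain through the empty window linearises this interval and relabels its sites by $x'_k=x_k-(l+n-1)\in\{1,\dots,M-n\}$. This provides a bijection between window-avoiding configurations on the $M$-site chain and arbitrary $N$-particle configurations on an $(M-n)$-site chain. Under the correspondence $\mu_j=x_{N-j+1}-N+j-1$ the uniform shift of all particle positions by $l+n-1$ becomes the uniform shift $\mu_j=\mu'_j+(l+n-1)$ with $\mu'\subseteq(M-n-N)^N$. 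Because subtracting a constant from every part of a partition merely rescales each row of the defining determinant \eqref{GR} by a common power of $z_j$, it factors out cleanly,
\begin{align}
G_\mu(\bs{z};-1)=\prod_{j=1}^N z_j^{\,l+n-1}\,G_{\mu'}(\bs{z};-1),
\end{align}
so that the constrained sum equals $\prod_{j=1}^N z_j^{\,l+n-1}\sum_{\mu'\subseteq(M-n-N)^N}G_{\mu'}(\bs{z};-1)$.

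Finally I would invoke the summation formula \eqref{sum-Gr} with $\beta=-1$ and $M$ replaced by $M-n$. At $\beta=-1$ the weight $(-\beta)^{\sum_j\mu'_j}$ equals $1$, so that formula reads $\sum_{\mu'\subseteq(M-n-N)^N}G_{\mu'}(\bs{z};-1)=\prod_{1\le j<k\le N}(z_k-z_j)^{-1}\,\mathrm{det}_N V^{(M-n)}$, and substituting $1+\beta z_k=1-z_k$ together with $\binom{M-n}{m}=(M-n)!/[m!\,(M-m-n)!]$ reproduces verbatim the matrix elements $V^{(M-n)}_{jk}$ and $V^{(M-n)}_{Nk}$ stated in the theorem. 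Multiplying by the overall factor $\prod_j z_j^{\,l+n-1}$ then yields the claimed determinant identity, valid for arbitrary $z_j\in\mathbb{C}$ since no step uses the Bethe equations.

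The step I expect to be the main obstacle is the bookkeeping in the deletion argument: one must verify that cutting the cyclic chain at the empty window is a genuine bijection onto the $(M-n)$-site configurations and that the induced map on partitions is an honest uniform shift, so that the prefactor $\prod_j z_j^{\,l+n-1}$ is genuinely independent of $\mu$. This requires care in the boundary cases covered by the range $-l+1\le n\le M$, where the window is clipped or wraps around the seam of the periodic lattice; one should confirm that the smallest part satisfies $\mu_N\ge l+n-1$ (equivalently $x_1\ge l+n$ in the cut coordinates) and that $\mu'_1\le M-n-N$, guaranteeing $\mu'\subseteq(M-n-N)^N$. Once this combinatorial correspondence is pinned down, the collapse of the double sum and the reduction to \eqref{sum-Gr} are immediate.
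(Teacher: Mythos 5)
Your route is genuinely different from the paper's: the paper disposes of this theorem in one line by citing the determinant representation of the form factor $\bra S_N|s_l\cdots s_{l+n-1}|\psi(\bs{z})\ket$ from \cite{MSS,MSS2}, whereas you give a self-contained derivation inside the present framework. Two of your three steps are sound. Since each $s_i$ is a diagonal projector, $A_\mu^\nu=\delta_{\mu\nu}\,\chi(x(\mu))$ and the double sum does collapse to the window-avoiding sum; and once that sum is rewritten as $\prod_j z_j^{l+n-1}\sum_{\mu'\subseteq(M-n-N)^N}G_{\mu'}(\bs{z};-1)$, formula \eqref{sum-Gr} at $\beta=-1$ with $M\mapsto M-n$ reproduces the stated matrix $V^{(M-n)}$ exactly (the weight $(-\beta)^{\sum\lambda_j}$ is indeed $1$ there).

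The gap is exactly the step you flagged, and it is not mere bookkeeping. The uniform-shift claim $G_\mu=\prod_jz_j^{\,l+n-1}G_{\mu'}$ requires that window-avoidance force \emph{every} particle into the linear interval $\{l+n,\dots,M+l-1\}\subseteq\{1,\dots,M\}$, i.e. that the complement of the window be an honest sub-interval of $\{1,\dots,M\}$ rather than an arc wrapping through the seam between sites $M$ and $1$. This happens precisely when $1-n\le l\le 1$; the hypothesis $-l+1\le n$ supplies the lower bound but nothing forces $l\le 1$. For $l\ge 2$ the induced map on partitions is a cyclic rotation, not a uniform shift, and the identity is actually false for generic $z_j$: with $N=1$, $M=4$, $n=1$, $l=2$ the left-hand side is $1+z^2+z^3$ while the right-hand side is $z^2(1+z+z^2)=z^2+z^3+z^4$, and these agree only when $z^4=1$, i.e. on the Bethe roots \eqref{BAEintermsofz}. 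So your argument proves the theorem only under the restriction $1-n\le l\le 1$; for general $l$ the "arbitrary complex $z_j$" claim cannot be rescued, and one must instead invoke the on-shell condition (translation covariance of the Bethe state), which is what the citation to \cite{MSS,MSS2} implicitly supplies and which suffices for the application to $\bra \mathcal{A}\ket_t$, where the $z_j$ are always Bethe roots.
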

\begin{proof}
The formula directly follows from the determinant representation
of the form factor for $\bra S_N |s_l\cdots s_{l+n-1}|\psi(\bs{z})\ket$
obtained in \cite{MSS,MSS2}.
\end{proof}

Setting $n=0$ and $l=1$ in the above formula, we find that
$\mathcal{A}_{\mu}^\nu=\delta^{\nu}_{\mu}$ and then
the above formula reduces to \eqref{sum-Gr} for $\beta=-1$.

\section{Conclusion}
In this paper, we studied the determinant structures of
a one-parameter family of integrable five vertex models.
By use of the algebraic Bethe ansatz and  the 
matrix product representation of the wavefunctions, 
the on/off-shell wavefunctions are expressed in terms 
of determinant forms. We found that the resultant 
wavefunctions are given by  Grothendieck polynomials 
which are a one parameter deformation of  Schur 
polynomials. By use of the properties satisfied by
the wavefunctions, we derived several important formulae
such as the Cauchy identity, summation formulae and so on 
for the Grothendieck polynomials. 

The Grothendieck polynomial was originally introduced in the 
context of Schubert calculus. This paper investigates the
objects of (geometric) representation theory
from the perspectives of integrable models.
See also \cite{KS,Ta,FWZ,Zu,ZJ,Korff,BBF,BMM} for
the integrable model approach to the
(geometric) representation theory or 
the classical integrable interpretation
of integrable models.
It is interesting to study the geometric and classical integrable
interpretation of the Cauchy identity, or to examine other
representation theoretical objects
from the integrable model side, the Littlewood-Richardson
coefficient for example.
The Cauchy identity also seems to have potential applications to
boxed plane partitions and determinantal process,
which we would like to pursue in the near future.

From the physics side, the evaluation of the wavefunctions by 
means of the matrix product representation allows us to formulate 
the exact relaxation dynamics of the periodic TASEP for arbitrary 
initial condition, beyond the step and alternating initial 
conditions studied in our former works \cite{MSS,MSS2}. 
We can now extract the asymptotics, fluctuations and so on 
from the formulation. Moreover, since we started from
the one-parameter extension of the $L$-operator
which corresponds to the TASEP with an effective long range potential 
\cite{DL,MP,SPS}, we are in a position to make an extensive study of 
them. One of the continuations of this paper is to study the properties
of the model.

\section*{Acknowledgement}
The authors thank C. Arita and K. Mallick for fruitful discussions
on the TASEP, especially the matrix product representation, and
T. Ikeda and H. Naruse on  Grothendieck polynomials, especially
informing us the determinant representation.
We also thank H. Katsura, H. Konno and M. Nakagawa for useful discussions.
The present work was partially supported
by Grants-in-Aid  for Scientific Research (C) No. 24540393
and for Young Scientists (B) No. 25800223.

\section*{Appendix  A}
Let us derive \eqref{loperator} as a solution to the $RLL$-relation \eqref{RLL}
with the $R$-matrix \eqref{Rmatrix},
by making the ansatz on the $L$ operator
\begin{align}
L_{\mu j}(u)=d_1(u)s_{\mu} s_j+d_2(u)\sigma_{\mu}^- \sigma_j^+
+d_3(u)\sigma_{\mu}^+ \sigma_j^-
+d_4(u)n_{\mu}s_j+d_5(u)n_{\mu}n_j, \label{ansatz}
\end{align}
where $d_j(u)$ are the functions to be determined.
In this paper, we consider the case $d_2(u)$ 
is not identically equal  to zero\footnote{
For $d_2(u)\equiv 0$, one sees from 
\eqref{rllsolve1}--\eqref{rllsolve8} that the model
reduces to the four vertex model: $d_1(u)=A u f(u)$,
$d_3(u)=f(u)$, $d_4(u)=d_5(u)=B u f(u)$, where
$A$ and $B$ are some constants, and $f(u)$ is
a rational function not identically equal to zero.} 
($d_2(u) \not\equiv 0$).
The equations to be solved are listed as
\begin{align}
&v d_1(u)d_2(v)-u d_1(v)d_2(u)=0, \label{rllsolve1} \\
&d_2(u)d_3(v)-d_2(v)d_3(u)=0, \label{rllsolve2} \\
&v d_5(u)d_2(v)-u d_5(v)d_2(u)=0, \label{rllsolve3} \\
&v d_1(u)d_3(v)-u d_1(v)d_3(u) =0, \label{rllsolve4} \\
&v d_5(u)  d_3(v)-u d_5(v) d_3(u)=0, \label{rllsolve5} \\
&(u^2-v^2)d_2(u)d_3(v)+uv(d_1(u)d_4(v)-d_1(v)d_4(u))=0, \label{rllsolve6} \\
&(u^2-v^2)d_5(u)d_2(v)+u(v d_2(u) d_4(v)- u d_2(v)d_4(u))=0, \label{rllsolve7} \\
&(u^2-v^2)d_5(u)d_3(v)+u(v d_3(u)d_4(v)-u d_3(v) d_4(u))=0. \label{rllsolve8}
\end{align}
From \eqref{rllsolve1}, \eqref{rllsolve2} and \eqref{rllsolve3},
we have the relations between $d_1(u)$, $d_3(u)$, $d_5(u)$ and $d_2(u)$
with the use of arbitrary constants $A$, $B$ and $C$ as
\begin{align}
d_1(u)&=Aud_2(u), \\
d_3(u)&=Bd_2(u), \\
d_5(u)&=Cu d_2(u).
\end{align}
Substituting the above relations into the remaining equations,
we find
\eqref{rllsolve4} and \eqref{rllsolve5} are automatically satisfied,
and we are left with \eqref{rllsolve6}, \eqref{rllsolve7} and \eqref{rllsolve8}
which now read
\begin{align}
&(Bd_2(u)+Aud_4(u))v^2 d_2(v)-(Bd_2(v)+Avd_4(v))u^2 d_2(u)=0, \\
&u(Cud_2(u)-d_4(u))d_2(v)-v(Cvd_2(v)-d_4(v))d_2(u)=0.
\end{align}
These equations lead to following relations between $d_4(u)$ and $d_2(u)$
\begin{align}
Bd_2(u)+Aud_4(u)&=Eu^2 d_2(u), \label{rllsolve9} \\
Cu^2 d_2(u)-u d_4(u)&=F d_2(u), \label{rllsolve10}
\end{align}
with constants $E$ and $F$.
Assuming $A \neq 0$, the compatibility between the two relations
\eqref{rllsolve9} and \eqref{rllsolve10}
leads to $E=AC, F=A^{-1}B$. Considering also the case $A=0$,
we finally find the elements of the $L$-operator
satisfying the $RLL$-relation under the ansatz
\eqref{ansatz} to be
\begin{align}
d_1(u)&=Auf(u), \\
d_2(u)&=f(u), \\
d_3(u)&=Bf(u), \\
d_4(u)&=(Cu-Du^{-1})f(u), \\
d_5(u)&=Cu f(u),
\end{align}
where $f(u)\not\equiv 0$ is a rational function of $u$
and $A,B,C,D$ are constants satisfying the constraints
$B-AD=0$.
Taking $A=B=D=1$ and $C=\alpha$, we have the desired 
$L$-operator \eqref{loperator} up to the overall factor 
$f(u)$. (See \cite{PM} for example for a brute force search of
more complicated integrable models of higher ranks
or higher spins.)

\end{document}